\documentclass[conference]{IEEEtran}
\makeatletter
\def\ps@headings{%
\def\@oddhead{\mbox{}\scriptsize\rightmark \hfil \thepage}%
\def\@evenhead{\scriptsize\thepage \hfil \leftmark\mbox{}}%
\def\@oddfoot{}%
\def\@evenfoot{}}
\makeatother
\IEEEoverridecommandlockouts

\usepackage{cite}
\usepackage{graphicx}
\usepackage{textcomp}
\usepackage{xcolor}
\usepackage[T1]{fontenc}

\usepackage{amssymb}
\usepackage{enumerate}
\usepackage{adjustbox}
\usepackage{hyperref}
 \usepackage{url}
\usepackage[noend]{algpseudocode}
\usepackage{amsmath} 
\usepackage{amsthm}
\usepackage{tabularx}
\usepackage{tikz}
\usepackage{multirow}
\usepackage{threeparttable}
\usetikzlibrary{calc, positioning, arrows}
\usepackage[most]{tcolorbox}

\usepackage{enumitem}

\usepackage{esvect}
\usepackage{algorithmicx}
\usepackage[ruled,vlined]{algorithm2e}

\theoremstyle{definition} 
\newtheorem{definition}{Definition}[section] 
\newtheorem{theorem}{Theorem}[section]
\newtheorem{lemma}[theorem]{Lemma}

\newcommand{\remove}[1]{}

\usepackage{authblk}

\begin{document}

\title{Sponsored Group Signature and its Application to Privacy-preserving Guest Access in Smart Environments}


\author{Sepideh Avizheh, Reihaneh Safavi-Naini, Shiwei Sun} 
\affil{University of Calgary, Alberta, Canada}







\maketitle
\begin{abstract}
 Group signatures are privacy preserving signature schemes in which a group member can anonymously sign messages on behalf of the group, while providing accountability, by allowing  the signature of a misbehaving group member be ``opened''  and the  identity of the signer be revealed. In group signature members are admitted to the group by a (trusted) group manager. We motivate the need for a flexible mechanism in applications, such as privacy preserving access in smart environments,  and propose a two-level  member-join group signature that we call {\em SPonsored Group Signature (SPGS)} where group members of level 1 can ``sponsor’’ new members, in level 2,  to join the group.  This relaxation of user join comes with additional accountability mechanisms: we require that the signature of a sponsored member can be opened to the identity of the sponsor (that is sponsor is responsible for the sponsored member), and while all signatures are anonymous, for the sponsored members, the signatures are linkable. This allows a sponsor to efficiently identify an undesirable sponsored member. 
We formalize SPGS scheme,  define its security using a game-based approach, and give a generic construction of SPGS that uses a (dynamic) group signature scheme, a commitment scheme, and a knowledge-sound non-interactive zero knowledge proof of knowledge,  and prove its security.  We also give an instantiation of our construction. To show applicability of SPGS in practice, we consider the problem of providing guest access in a smart building, and introduce {\em Anonymous Guest Access Token  (AGAT)} that allows a temporary guest to anonymously access (a subset of) the building resources. We show how SPGS can be used (together with an IND-CPA secure public key encryption scheme) to give a direct construction for AGAT, and show the efficiency of our guest access protocol when it is instantiated with existing schemes.
\end{abstract}

\begin{IEEEkeywords}
Group signatures, Anonymous  guest access token, Secure guest access, smart buildings
\end{IEEEkeywords}

.


\section{Introduction}
Group signatures were proposed by Chaum and van Heyst \cite{chaum1991group} as privacy preserving signature schemes that enable a group member to anonymously  sign messages on behalf of the group, while providing accountability by allowing the signature of a misbehaving group member to be ``opened'' and  the identity of the signer be revealed. 
The accountable privacy that is offered by group signatures has found 
many real-life applications in practice including   
corporate communication and remote attestation. 

In a basic  group signature, a  {\em Group Manager (GM)} sets up the parameters and the public  key, $gpk$, of  the system. A new group member interact with the group manager to obtain their  own private signing key $gsk$  that is used to sign messages. The signatures of  enrolled members can be    verified using $gpk$.   {\em Fully dynamic group signatures}  allow  users to  join at anytime during the lifetime of the system 
(they allow user revocation also). 
Group signatures provide {\em accountable signer anonymity} in the sense that signatures of two group members  with $gsk_1$ and $gsk_2$, are indistinguishable, 
but the GM (or a separate opening authority)  can {\em open} a signature and reveal the identity of the signer.

Two additional properties of anonymous signatures are  
{\em exculpability }
and {\em framing resistance} that aim to strengthen security of group signatures: 
exculpability protects users during the join phase and ensures that the GM cannot cheat users and give them a malformed credentials, and
framing resistance  protects them during the signing phase,  ensuring that the  GM, or others, cannot produce signatures that  implicates them.

An important property that 
seems to weaken the anonymity of group signatures, although it is  useful in many real-life scenarios, is {\em linkablity}
that maintains user anonymity while {\em links } together the signatures of a user.
Linkability,  is a very useful property in applications where  users' accesses to the system  must be monitored,  
and their mis-behavior detected.   Linkability can be used to anonymously profile users and ensures that users follow access policies of the system while maintaining their anonymity.
Signature opening  in a
linkable group signature \cite{zheng2018linkable}
effectively reveals the identity of the signer in all linked signatures.

{\em User join} 
in group signatures  needs 
interaction of the GM  with the   user and this necessitates  the GM to be always online and available.  {\em Ring signatures} are anonymous group signatures that remove the need for the GM: a signer can choose a group  of users (using public information of users) and sign a message as a member of the group, providing unconditional  (1 out $n$)  anonymity, with no possibility of  ``opening'' 
the signer of a message. This level of anonymity for signers is essential in applications such as whistle-blowing but it is limiting in accountable anonymous communication.

 Liu et al \cite{liu2004linkable} argued the need for {\em linkability  and ad hoc group formation}, that is removing trusted group manager   and  supporting some level of accountability, and introduced {\em Linkable Spontaneous Anonymous Group (LSAG) signatures}. LSAG is considered 
 as an extension of ring signature, where a user can form an {\em spontaneous } group of $n-1$ users   and sign as one of the $n$ users, while signatures of a user are linkable.
 
LSAGs, however, provide  (1 out $n$) anonymity without possibility of opening,   and  has signature length that is proportional to $n$.
 In comparison,  group signatures provide indistinguishability of the signers of any two signed messages (see Definition \ref{GSDefinition}), and can be constructed with constant length, but the join and open operations require a trusted GM.
 
\vspace{-5pt}
\subsection{Our work}
We propose an extension of a group signature that provides flexible  user join in the sense that enrolled users in the system can {\em sponsor} new users, hence allowing new users to join the system without interacting with the group manager, while maintaining accountability. Although relaxing the group join to be partially handled by the members could be a useful property, it  raises questions about security  of the system and accountability of the members that  join by other members. We show that linkablity can be used to provide the required accountability for this type of group membership.

\vspace{1mm}
{\em Our motivating scenario} (that is further discussed in Section \ref{Application})  is providing privacy preserving  guest access in smart buildings/homes.
To provide privacy preserving access for registered entities (aka, hosts/residents) of the building,  a group signature can be used where each host is a group member and uses their individual private key to anonymously authenticate themselves to the system and access the building's shared resources (e.g. printer). 
Our goal is to enable the host to provide {\em privacy preserving temporary access }{to the building's resources} to a guest, 
where privacy preserving  refers to anonymity of guests and hiding the host's identity. Ideally, we would like a guest who is ``approved'' by a host to be able to access the building resources without being linked to any particular host.
This scenario models many real-life settings including work environments that support visitors with temporary guest access such as universities,  research labs and corporations.  
Solutions  that 
use exiting cryptographic primitives (examples are given in Section \ref{Application}) result in costly and privacy invasive solutions. In particular, using existing group signatures require a GM that is always online, and that the need for the host interaction with the GM adds an additional complexity if guest's privacy is required.

\textbf{Sponsored group signature.} 
Our approach is to construct a  novel {\em  hierarchical two-level\footnote{Our approach can be generalized to multi-level signature systems,} group signature called Sponsored Group Signature (SPGS)} where users are in two distinct levels: the first (higher) level members, called {\em sponsors, $Sp$}
are enrolled in the system by interaction with the GM. Members of the second (lower) level, called {\em sponsored members, $Sm$,} consist of  members  who are enrolled in the system by the sponsors and without interaction with the GM.
The group public key, $gpk$, remains the same (as the group public key of sponsors) and the sponsored members do not need to interact with the group manager to get a new group public key. 
Intuitively, this structure solves the guest access problem as it allows a host (sponsor) to enroll a guest (sponsored member) in the system.  
SPGS systems, however, must ensure that the relaxation of join protocol to enroll sponsored members does not adversely affect security of the sponsor (original group member) signatures.  
In particular, a sponsor (is not assumed trusted --see security model of sponsored group signatures in Section \ref{SPGS}),  cannot abuse their user join capability, while anonymity and privacy of (honest) sponsored members are maintained. 

{\em Accountable flexible join.} We achieve the goal  of providing secure flexible join by requiring {\em linkable anonymity} for sponsored members, and {\em opening of  a sponsored-member's signature   to the corresponding sponsor.} 
This combination allows anonymous profiling of sponsored members who are not vetted by the group manager, and ensures  accountability of  sponsors  in using their additional new enrollment capability.

{\em 
Algorithms.}
 SPGS  consists of a tuple of  $\{Setup, Join^{attr}, Sign^{attr}, Verify^{attr}, Open^{attr}\}$ algorithms
 where $attr \in \{Sp, Sm\}$, defining 
 two sets of algorithms for the two levels of users,   $Sp$ and $Sm$.
In both levels, the private key of a member (sponsor or sponsored) is known to the member only, preventing framing of the user.
\noindent In Section \ref{SPGS}, we define the SPGS algorithms and their security properties using a game-based approach, following the framework of Bootle et al. \cite{bootle2016foundations} where adversary's access to the system is defined by a set of oracles. Definition \ref{SPGSDefinition} quantifies security properties of SPGS that are formalized in Fig. \ref{experiments}.
In Appendix \ref{GSModel} we also use the framework to define security of a  group signature that we use in our proofs inline with commonly used  definitions in \cite{bootle2016foundations}.

 \textbf{SPGS construction.} We propose a generic construction in the standard model which uses a partially dynamic group signature, a commitment scheme, and a knowledge-sound non-interactive zero knowledge proof of knowledge (NIZK) as building blocks (see Section \ref{SPGSconstruction}). Our SPGS construction is an extension of a group signature, $GS$, that is,
 the level 1 (sponsors) algorithms are the corresponding algorithms of the group signature. 
The level 2 (sponsored members) algorithms, consists of a group signature $GS$ and a Fiat-Shamir based NIZK.  When the sponsors join, they interact with the GM, at the end of which they get a group private and public key corresponding to the $GS$ scheme. When the sponsored-members join, they interact with the sponsor, choose a random key and use the commitment to the random key as their public keys. At the end of join phase, a sponsored-member gets a sponsor signature (which is a group signature) on their public key. The sponsored-member, when signs a message $m$ will generate a NIZK that, in the simplest form, shows that it knows the corresponding private key while considering the message to be signed as part of the public parameters of the NIZK.  
The NIZK proof and the sponsored-member signature can vary depending on the application context that SPGS is used; for example one can use a homomorphic commitment scheme to generate their public keys that allows the sponsored-member to re-randomize their keys and derive many keys (that can be used to register with different service providers). In this case, the NIZK proof additionally shows that the re-randomization has been done correctly. 
We prove the security of our construction and show it achieves the required security properties.
Additionally, we provide a concrete and efficient instantiation of the SPGS.

 \textbf{SPGS and its applications.}
We give an application of SPGS to the construction of {\em k-times Anonymous Guest Access Token} (k-AGAT) in a smart building that allows the guests use the building's resources. This has been one of the main motivation of proposing a new type of group signature. k-AGAT allows the host to determine its policy preferences such as expiry time when it issues a guest token. Also, it allows the verifier  to bound the number of guest tokens per each host to $k$ tokens. This is achieved since in k-AGAT the issued tokens by a host are linkable. Also, the tokens presented by the same guest are linkable and the guest's behavior can be traced. In case of misbehavior, the presented guest tokens can be opened to the identity of the host who issued the token. We define k-AGAT and its security requirements that are correctness, unforgeability, anonymity, and traceability in Section \ref{Application}. We also give a k-AGAT construction and show it satisfies the k-AGAT security and privacy requirements. In our k-AGAT construction, SPGS is used together with an IND-CPA secure public key encryption scheme, which allows the host act as the sponsor and signs the token content including the public key of the guest.  
The sponsored-member will transform the token using the sponsored-member signature of SPGS and present it to the verifier.  
We show how k-AGAT can be used  by a  guest to access building shared resources.

 \textbf{Implementation.} We implement cryptographic components  of the host, guest and IM for setup and guest access phase. Our implementation results are compared with a non-private solution that only uses a regular signature scheme rater than the group signature to issue guest tokens, show that the overhead our scheme proposes is tolerable while allowing privacy of hosts and guests. This confirms the feasibility and efficiency of our scheme.

\textbf{Paper organization.} Section \ref{RelatedWork} gives the related work. Section \ref{Prelim} gives the cryptographic primitives and preliminaries. Section \ref{SPGS} defines SPGS formally and give its security properties, Section \ref{SPGSconstruction} describes our SPGS generic construction and its security analysis. Section \ref{Application} provides the application of SPGS in constructing AGAT and secure guest accesses, 
and Section \ref{Conclusion} concludes the paper. 

\section{Related work}
\label{RelatedWork}
Group signatures were 
introduced by Chaum and Heyst \cite{chaum1991group}, allowing users to  anonymously signing on behalf of the group, and have found many extensions including \cite{buser2019dgm,liu2004linkable,bellare2005foundations,bootle2016foundations,chen1994new,gordon2010group,camenisch1997efficient,camenisch1998group,abhilash2022efficient,alamelou2017code,beullens2022group,fadavi2025dgmt}. 
SPGS  is a privacy preserving (two-level) hierarchical group signature  with two types of members (sponsor and sponsored), each belonging to one level of the hierarchy,  that  provides 
full anonymity for sponsors (level 1) and linkable anonymity for sponsored members (level 2), that supports  decentralized user join through sponsorship by members of the first level of the hierarchy (sponsors), and user accountability  by providing ``signature opening'' functionality by  a trusted group manager.
Related works can be broadly grouped into the following categories.

{\em Group signatures with delegation/hierarchies}.
Group signature is a widely studied privacy preserving signature scheme \cite{chaum1991group}. Partially and fully dynamic group signatures allow new users to join the group and/or existing  group members to be revoked. User join and revoke, however, are through a single group manager.
{\em Delegatable Anonymous Credentials (DAC)} \cite{belenkiy2009randomizable,belenkiy2008delegatable,mir2023practical,crites2019delegatable,blomer2018delegatable}
allow users to anonymously obtain credentials (signing rights) from authorities and delegate them to other users (also) prove possession of credentials).
Important properties of DAC are
{\em unlinkability} of  users interactions with the system,  support for multiple levels of 
{\em delegation}
and preserving
{\em anonymity of delegators}.  
The main differences  with SPGS are that in DACs, all participants  use the same cryptographic mechanisms and all delegated credentials are unlinkable.  In SPGS, however, the algorithms that are associated to  the sponsors and sponsored-members, including  
signing and verification algorithms, are different and and  signatures of a sponsored-member are linkable. In DACs \cite{mir2023practical, belenkiy2009randomizable, blomer2018delegatable}, addition  of a new member, however, is  through a single trusted group manager while in SPGS, it is decentralized.

 {\em Linkable Ring Signatures} (LRS) \cite{liu2004linkable,tsang2005short,liu2013linkable,tran2025many}, provide anonymity with {\em linkability} for signatures 
 and spontaneity  for group membership (no group secret or group manager). In an LRS scheme, a user can spontaneously form an ad-hoc group 
 and sign messages anonymously as a member of that group. 
Linkability in LRS is for  signatures of all users while in SPGS it is for sponsored members.

Other related schemes are {\em accountable tracing signatures} \cite{kohlweiss2014accountable} where the authority must prove it only opened specific users' signatures and did not abuse its power,
and {\em Bifurcated Anonymous Signatures (BiAS)} \cite{libert2021bifurcated} a user to determine conditions for tracing through a predicate and 
{\em Traceable Signature} \cite{kiayias2004traceable}
that refines the type of tracing.
These signatures, however, do not support decentralized member addition.

{\em Proxy Signatures} \cite{boldyreva2012secure}
allow delegation of signing rights to proxies that can be further delegated.

Finally an important property of SPGS is 
{\em  non-frameability and key independence} that ensures
private keys are only known to the users.
This property guarantees that  colluding parties cannot forge signatures that frame group members.

\section{Preliminaries} 
\label{Prelim}
In this section, we give the definition and notations used for the necessary cryptographic primitives (please see Appendix for the extended preliminaries covering the cryptographic primitives used in our application).

\textbf{Notation.} 
We use $out \leftarrow \mathcal{A}(in)$ to show that the algorithm  $\mathcal{A}$ is applied on the input $in$ to generate output $out$. We use $A: ops$ to Show party $A$ performs operations $ops$.

\subsection{Cryptographic primitives}

\noindent \textbf{Digital signature (DS)} 
consists of three algorithms: (i) $DS.KeyGen(1^\lambda)$,  which receives the security parameter $\lambda$ and outputs the private and public key pair $(Sk,Pk)$. (ii) $DS.Sign(Sk,\;m)$ which receives the private key $sk$ and a message $m$, and outputs a signature $\sigma$. (ii) $DS.Verify(\sigma,\;m,\;Pk)$ which takes the signature $\sigma$, the message $m$ and the public key $Pk$ and outputs $1$ if the signature is verified and $0$ otherwise. We consider the standard notion of existential unforgeability under chosen message attack $EUF\text{-}CMA$. 

\noindent \textbf{Group signatures} allow a member of a group (which is manged by the group manager $\mathcal{M}$)  to generate a signature anonymously such that a verifier can verify the validity of the signature knowing only the public parameters of the system. The manager is responsible for the correct functioning of the group, and can reveal the identity of the signer. We consider a partially dynamic group signature $GS$ that allows new members to join. $GS$ is composed of the following algorithms: (i) $GS.Setup(1^\lambda, setpp)=(pp, msk)$, is run by $\mathcal{M}$ and it takes the security parameter $\lambda$ and set up parameters $SP$, and outputs the public parameters $pp$ and the manager's secret key $msk$. (ii) $GS.Join(id_u,param_\mathcal{M})$ is an interactive algorithm that is run between a $\mathcal{M}$ and the user $id_u$ who desire to join the group (we assume all interactions takes place over a secure channel similar to \cite{buser2019dgm}). (iii) $GS.Sign(m, param_{u})= \sigma$, this algorithm is run by the member $id_u$ with its private parameters $param_{u}$, and outputs a valid anonymous signature $\sigma$. (iv) $GS.Verify(m,\sigma,pp)=0/1$, is run by the verifier and takes as input the message $m$, the signature $\sigma$, and the public parameters of the system $pp$, and outputs 1 if the signature is valid, or 0 otherwise. (v) $GS.Open(msk,\sigma)=id_u$, is run by $\mathcal{M}$ to reveal the identity of the signer of $\sigma$. 
We consider that $GS$ ensures the following security requirements: correctness, anonymity, non-frameability, and traceability.
For the security requirements we follow the definitions provided by Bootle et al. \cite{bootle2016foundations} and adapt them to a simpler setting were the group manager $\mathcal{M}$ act as both issuing authority and opening authority, and is assumed to be partially corrupted (i.e., its private state can be leaked).  
Also, we do not consider the $Judge$ algorithm 
since $\mathcal{M}$ follows the protocol correctly and traces the signatures if needed. Please see Appendix \ref{GSModel} for the security requirements and definitions.

\noindent \textbf{Public key encryption} 
consists of the following algorithms: (i) $E.KeyGen(1^\lambda)$ takes the security parameter as input and outputs the private/public key pair $(Sk,\;Pk)$. (ii) $E.Enc(Pk,\;m)$ takes the public key $Pk$ and message $m$ and outputs the ciphertext $C$. (iii) $E.Dec(Sk,\;C)$ takes the private key $Sk$ and ciphertext $C$ as input and outputs the message $m$. We consider an IND-CPA secure encryption scheme.

\noindent \textbf{Commitment} 
consists of the following algorithms. (i) $C.Setup(1^\lambda)$ receives the security parameter and outputs the commitment public parameters $cpar$, (ii) $C.Com(x)$ receives the input $x$ and outputs the opening information $d$ and the commitment $com$,  (iii) $C.Open(com,\; x,\; d)$ takes the commitment $com$, and the opening $x$ and $d$ as input and outputs $1$ if $x$ and $d$ are correct openings for $com$, and outputs $0$ otherwise. A commitment scheme ensures hiding and binding properties. %
In our concrete construction, we consider a homomorphic commitment scheme which includes the additional algorithm $HCom$ which allows to change the committed value by knowing the commitment $com$ on $x$. In another words, $C.HCom(com,\; x')$ outputs a commitment $com'$ on $x+x'$.  

\noindent \textbf{Non-interactive zero knowledge arguments of knowledge (NIZK)} 
consists of the following algorithms: (i) $NIZK.Setup(R)$ which takes the relationship $R$ (which implicitly defines the security parameter $\lambda$) and outputs the common reference string $crs$. 
(ii) $NIZK.Prove(crs, \rho, x, w)$ which takes the common reference string $crs$, the common input $\rho$, the statement $x$ that is being proven about $\rho$, and the witness $w$ as input and outputs the proof $\pi$.  
(iii) $NIZK.Verify(crs, \pi, \rho, x)$ receives the common reference string $crs$, the proof $\pi$, the common input $\rho$, and the statement being proven as input and outputs $1$ if the proof is verified and $0$ otherwise. We consider a NIZK that ensures perfect completeness, computational zero-knowledge, and computational knowledge-soundness properties (see Appendix \ref{NIZKModel} for definitions).

We consider sigma protocol for a relation $R$ that is a 3-move public coin protocol between a prover and a verifier both of which know the common input $\rho$ and the statement $x$, and the prover knows the witness $w$ such that $R(w,x)=1$. The prover sends the first message which is a commitment to a random value, and the verifier sends back a random challenge as the second message, and the the third message will be the prover's response on the first two messages and the witness $w$. Sigma protocols ensure zero knowledge (with honest verifiers) and knowledge soundness considering that there is an extractor who can extract a witness from the protocol transcripts if the protocol is run twice with the same first message and different challenges. The fiat-Shamir transform can be applied on any public coin protocol including a sigma protocol to obtain a non-interactive version of that. This transformation preserves the knowledge soundness property when it is analyzed in random oracle model.
We use sigma protocols in our concrete constructions.

\section{Sponsored group signature (SPGS)}
\label{SPGS}
 SPonsored Group Signature ($SPGS$) is a hierarchical group signature with two levels, supporting decentralized user join for members of the second level (sponsored members):
a sponsored-member joins ({\em spontaneously}) by a sponsor without the need to interact with the group manager.  SPGS offers full anonymity for sponsors  and linkable anonymity for sponsored members.

All signatures in SPGS can be opened to the sponsor's identity,   in the case of sponsored-members misbehavior. 
We define  SPGS with only user's join, and leave the extension of the scheme to provide efficient revocation for future work.

\textbf{Entities.} We consider the following entities:  {\em group manager,  sponsors (level 1 (L1) members), sponsored-members (level 2 (L2) members)}, and {\em verifiers}. 

\begin{enumerate}
    \item \textit{Group manager ($\mathcal{M}$)} is a 
     trusted entity that sets up the group and its parameters, and  generates the group public key $gpk \in pp$. To provide  accountability,  we enable the group manager to open the signatures (one can design a system such that this task is performed by a second authority).     
  Although we assume the group manager is trusted, in our
    security evaluation, we allow the group manager's private state 
    be leaked to the adversary.
    \item \textit{Sponsor ($Sp$)} is a L1 member in the group that is added to the group by $\mathcal{M}$), and obtains their private key $gsk \in param_{sp}$ through interaction with $\mathcal{M}$.
   A sponsor has the ability 
   to add L2  members  to the group and provide 
   them with the system's public parameters $pp$. 
    \item \textit{Sponsored-member ($Sm$)} is a L2 member of the group
    that is added to the group by an sponsor $Sp$, and receives their
    private key $sk \in param_{Sm}$, through interaction with the sponsor.
    \\
    Trust relationship  between $Sp$ and $Sm$ is delicate: they do trust each other in performing the cryptographic operations correctly, and  $Sp$ is willing to take the responsibility  (hence called sponsor) for
     $Sm$'s actions in the group in the sense that an $Sm$'s misuse of its signing capability can be traced to $Sp$.
    \item \textit{Verifiers ($V$)} can verify signatures of $Sp$s and $Sm$s using the group public key and public parameters.
\end{enumerate}

\noindent \textbf{Definition.} $SPGS$ is defined by a parameterized tuple of algorithms $\{Setup, Join^{attr}, Sign^{attr}, Verify^{attr}, Open^{attr}\}$
where the parameter $attr \in \{Sp,Sm\}$ specifies the associated type (level) of the member. 
The relationship between entities in terms of enabling group membership can be summarized as 
$\mathcal{M} \rightarrow Sp \rightarrow Sm$, denoting 
$\mathcal{M}$ enrolling $Sp$, and $Sp$ enrolling $Sm$.

\begin{itemize}
\item $Setup(1^\lambda, setpp)=(pp, msk)$, is run by $\mathcal{M}$ and it takes the security parameter $\lambda$ and set up parameters $setpp$, and outputs the public parameters $pp$ and the manager's secret key $msk$. 
\item $Join^{attr}( id_{attr}, param)$ is an interactive algorithm that is run either between a $\mathcal{M}$ and the user $id_{Sp}$ who wants to join as a sponsor if $attr=Sp$, or it is run between the sponsor and the user $id_{Sm}$ who wants to join as a sponsored-member if $attr=Sm$. It takes the private parameters $param$ of the enrolling entity belongs to $\{\mathcal{M}, Sp\}$, and outputs private parameters $param_{attr}$ to  the joining entity $id_{attr}$ and $out$ to the enrolling entity  belongs to $\{\mathcal{M}, Sp\}$ (we assume all interactions takes place over a secure channel similar to \cite{buser2019dgm}). $out$ shows the termination, whether the algorithm succeeds $\top$ or fails $\perp$.
\item $Sign^{attr}( m, param_{attr})= \sigma$, this algorithm is run by the sponsor or sponsored-member  depending on $attr \in \{Sp,Sm\}$, with its private parameters $param_{attr}$, and outputs a valid anonymous signature $\sigma$.
\item $Verify^{attr}( m,\sigma,pp)=0/1$, is run by the verifier and takes as input the attribute $attr \in \{Sp,Sm\}$,  the message $m$,  the signature $\sigma$, and the public parameters $pp$, and outputs 1 if the signature is valid, or 0 otherwise.
\item $Open^{attr}(msk, \sigma)=id_{Sp}$, is run by $\mathcal{M}$, it takes a sponsor or sponsored-member signature $\sigma$, where $attr \in \{Sp,Sm\}$, and reveals the identity of the sponsor of $\sigma$. 
    
\end{itemize}

\noindent \textbf{Security requirements.} $SPGS$ satisfies the following security requirements: {\em correctness}, {\em non-frameability}, {\em sponsor anonymity}, {\em sponsor traceability}, {\em sponsored-member privacy} and {\em sponsored-member linkability}. 
Our definitional framework is based on  the framework of Bootle et al. group signatures \cite{bootle2016foundations} and is  provided for the case of partially dynamic groups (group with join) and allows the group manager to be partially corrupted and its state leak. 
For sponsored-member linkability, our definition is inspired by the signer linkability notion defined in \cite{liu2004linkable}.

\begin{itemize}
  \item \textbf{Correctness} ensures that if an honest member joins as a sponsor or sponsored-member and generates a signature, their signature will be  verified.
  In our setting where the members can join at any time, these two conditions should hold for all honest members under any schedule under which the members join the group. Therefore, to formalize correctness we consider an adversary who can control users' joining  process and chooses the messages and the identity of the signers (both sponsor and sponsored-members). We require that for any messages and identities that the adversary chooses, the signatures generated can be verified correctly.
   \item \textbf{Non-frameability} captures unforgeability in a strong
sense and ensures the following:  
the collusion of all members except one honest group member $P$,   that has access to the private state of the group manager,  cannot generate a valid signature that, (i) if $P$ is of type sponsor (i.e.,$attr=Sp$), be opened to $P$, and (ii) if $P$ is of type sponsored-member (i.e.,$attr=Sm$), and  is added to the group through a correct execution of $Join^{Sm}$ by a sponsor $P'$,  be opened to $P'$.
We allow the adversary to have oracle access to the  member signing algorithm.
  \item \textbf{Sponsor anonymity} states that no one (including other sponsors and the verifier) can distinguish the sponsor who has generated a given signature from $id_{Sp_0}$ and $id_{Sp_1}$ better than a random guess even if they have seen the output of the $Open$ algorithm for some signatures. $\mathcal{M}$ is considered trusted for anonymity as $\mathcal{M}$ can always open the signatures and learn the identity of the sponsors. This definition is aligned with the signer anonymity definition of group signatures and captures sponsor signatures unlinkability as well.
  \item \textbf{Sponsor traceability} protects the group manager by ensuring that any valid sponsor or sponsored-member signature will be opened to the identity of a sponsor. This is achieved even if 
  the group manager's private state is leaked (i.e., $\mathcal{M}$ is considered partially corrupted). 
  \item \textbf{Sponsored-member privacy} is defined similar to sponsor anonymity in the sense that no one can distinguish the sponsored-member who has generated a given signature from $id_{Sm_0}$ and $id_{Sm_1}$ better than a random guess. However, for sponsored-member privacy, we exclude unlinkability from the definition by restricting the oracle access of the adversary to the signing oracle, that is, 
  the adversary cannot issue queries to the $SignHU(Sm, \cdot,\cdot)$ oracle for the challenged sponsored-members.   
  \item \textbf{Sponsored-member linkability} requires that the signatures of a sponsored-member be linkable. 
Our definition adapts the  signer linkability definition in \cite{liu2004linkable} to SPGS. We give
more details on this adaptation in Lemma \ref{LemmaLinkiability}.
\end{itemize}

\textbf{Oracles.} To formally define the security requirements, we use the following oracles (given in Figure \ref{oracles}).

\begin{itemize}
    \item $\mathbf{AddHU(attr, id_{attr})}$: This oracle allows the adversary to add honest users through honest execution of $Join^{attr}$ (to add sponsors or sponsored-members $attr \in\{Sp, Sm\}$) without learning their private parameters.
    \item $\mathbf{CorrU(id_{attr})}$ allows the adversary to corrupt the users (i.e., sponsors and sponsored-members) and learn  both their communication transcript when they run $Join^{attr}$ algorithm honestly and their private parameters (including signing keys). Thus, this oracle allows full exposure of all communication and keys, and can be run only immediately after $AddHU(attr, id_{attr})$.
    \item $\mathbf{AddCU(attr, id_{attr})}$: This oracle allows the adversary to add corrupted users  (sponsors or sponsored-members $attr \in\{Sp, Sm\}$) to the group. The adversary can deviate from the $Join^{attr}$ protocol and send arbitrary messages to the honest enrolling entity $e \in \{\mathcal{M}, Sp\}$ and see its output $out_e$. 
    \item $\mathbf{AChal_b(pp, m, id_{Sp_0}, id_{Sp_1})}$: This is a left-right oracle for defining sponsor anonymity. It takes as input the group public parameters $pp$, a message $m$, and two honest sponsors $id_{Sp_0}$ and $id_{Sp_1}$, and returns a sponsor signature
    on the message $m$ using the private parameters of $param_{Sp_b}$ for $b \leftarrow \{0,1\}$. The adversary can call this oracle once.
    \item $\mathbf{PChal_b(pp, m, id_{Sp}, id_{Sm_0}, id_{Sm_1})}$: This is a left-right oracle for defining sponsored-member privacy. It takes as input the group public parameters $pp$, a message $m$, the identity of an honest sponsor $id_{Sp}$, and two honest sponsored-members $id_{Sm_0}$ and $id_{Sm_1}$, runs $Join^{Sm}$ internally to obtain their private parameters, and returns a sponsored-member signature on the message $m$ using the private parameters of $param_{Sm_b}$ for $b \leftarrow \{0,1\}$. The adversary can call this oracle once.
    \item $\mathbf{LChal_b(Sm, pp, m_0, m_1, set_0, set_1)}$: This is a left-right oracle for defining sponsored-member linkability which is defined fpr $Sm$. It takes as input the  group public parameters $pp$, two messages $m_0$ and $m_1$, and two sets $set_0:\{id_{Sm_0}, id_{sm_0}\}$ and $set_1:\{id_{Sm_0}, id_{sm_1}\}$ and returns two group signatures $\sigma_0$ and $\sigma_1$ on message $m_0$ and $m_1$ respectively using the private parameters of the entities in $set_b$, for $b \leftarrow \{0,1\}$. The adversary can call this oracle once.
    \item $\mathbf{SignHU(attr,m,param_{attr})}$ This oracle is used by the adversary to obtain signatures for an honest user whose private parameters are not known by the adversary. It returns a signature $\sigma$ on the message $m$ using the $Sign^{attr}$ algorithm with the private parameters of $id_{attr}$, where $attr \in \{Sp, Sm\}$.
    \item $\mathbf{Open(attr, msk, m, \sigma, pp)}$ returns the identity of the sponsor $id_{Sp}$ who has generated or sponsored the signature $\sigma$. The oracle cannot be called on a signature obtained from the $AChal_b$, $LChal_b$, and $PChal_b$ oracles.
    \item $\mathbf{CorrM()}$ returns the private parameters of the manager $\mathcal{M}$.
\end{itemize}

\begin{figure*}[!bht]

\fbox{
    \begin{minipage}{0.48 \textwidth}
    \begin{small}
    $\mathbf{AddHU(attr, id_{attr})}$
    \hrule
    \begin{itemize}
        \item If $N>N_{max}$ Return $\perp$
        \item Update $\mathcal{H} \cup \{id_{attr}\}$ and $N=N+1$
        \item $param_{attr} \leftarrow Join^{attr}( id_{attr}, param)$
        \item Let the Join transcript is stored in $trscJoin_{attr}$
        \item Store $Q_{HU} =Q_{HU} \cup (id_{attr},trscJoin_{attr}, param_{attr}, out_e)$
        \item Return $pp$
    \end{itemize}

   $\mathbf{CorrU(id_{attr})}$
    \hrule
    \begin{itemize}
        \item If $id_{attr} \notin \mathcal{H}$ Return $\perp$
        \item Update $\mathcal{C} \cup \{id_{attr}\}$ and $\mathcal{H}=\mathcal{H} \backslash \{id_{attr}\}$
        \item Retrieve $(id_{attr},trscJoin_{attr}, param_{attr}, out_e)$ from $Q_{HU}$
        \item Return $trscJoin_{id_{attr}}$, $param_{attr}$, and $out_e$
   \end{itemize}

    $\mathbf{AddCU(attr, id_{attr})}$
    \hrule
    \begin{itemize}
        \item If $id_{attr} \in \mathcal{H}$ Return $\perp$
        \item Update $\mathcal{C} \cup id_{attr}$
        \item $param_{attr} \leftarrow Join^{attr}(id_{attr}, param)$
        \item Return $param_{attr}$ and $out_e$
    \end{itemize}

    $\mathbf{AChal_b(pp, m, id_{Sp_0}, id_{Sp_1})}$
    \hrule
    \begin{itemize}
        \item If $id_{Sp_0}, id_{Sp_1}$ $\notin \mathcal{H}$ Return $\perp$
        \item $ \sigma_b \leftarrow Sign^{Sp}(m, param_{Sp_b})$
        \item If $Verify^{Sp}(m, \sigma_b, pp)=0$ Return $\perp$
        \item Update $Q_{AChal} \cup (Sp, m, \sigma_b)$
        \item Return $\sigma_b$
    \end{itemize}
    \end{small}
        
    \end{minipage}
    \vline
    \vspace{20pt}
     \begin{minipage}{0.48 \textwidth}
     \begin{small}
     $\mathbf{SignHU(attr,m,param_{attr})}$
     \hrule
     \begin{itemize}
        \item If $param_{attr}= \perp$ Return $\perp$
        \item $ \sigma \leftarrow Sign^{attr}(m, param_{attr})$ 
        \item Update $Q_{Sign} \cup (attr, m,\sigma)$
        \item Return $\sigma$
    \end{itemize}

    $\mathbf{Open(attr, msk, m, \sigma, pp)}$
     \hrule 
     \begin{itemize}
        \item If $(attr, m, \sigma) \in Q_{Sign} \cup Q_{AChal} \cup Q_{LChal} \cup Q_{PChal}$ Return $\perp$
        \item If $Verify^{attr}(m, \sigma, pp)=0$ Return $\perp$
        \item Return $Open(msk, \sigma)$
    \end{itemize}

     $\mathbf{CorrM()}$
     \hrule
     \begin{itemize}
        \item Return $param_{\mathcal{M}}$
    \end{itemize}

    $\mathbf{PChal_b(pp, m, id_{Sp}, id_{Sm_0}, id_{Sm_1})}$
    \hrule
    \begin{itemize}
        \item If $id_{Sp}$ $\notin \mathcal{H}$ $\lor~ id_{Sm_0}, id_{sm_1} \in Q_{Join}$ Return $\perp$
        \item $\forall i \in\{0,1\}$ run $param_{Sm_i} \leftarrow Join^{Sm}(id_{Sm_i}, param_{Sp})$
        \item $ \sigma'_b \leftarrow Sign^{Sm}(m, param_{Sm_b})$
        \item If $Verify^{Sm}(m, \sigma'_b, pp)=0$ Return $\perp$
        \item Update $Q_{PChal} \cup (Sm, m, \sigma'_b)$
        \item Return $\sigma'_b$
    \end{itemize}

     $\mathbf{LChal_b(Sm, pp, m_0, m_1, set_0, set_1)}$ 
     \hrule
     \begin{itemize}
     \item Parse $set_{0}:\{id_{sm_0}, id_{sm_0}\}$ and $set_{1}:\{id_{sm_0}, id_{sm_1}\}$
        \item If $id_{Sm_0}, id_{Sm_1}$ $\notin \mathcal{H}$ Return $\perp$
        \item  $\forall i \in\{0,1\}$ run $\sigma_i \leftarrow Sign^{Sm}(m_i, param_{set_{b,i}})$
        \item $\forall i \in\{0,1\}$ if $Verify^{Sm}(m_i, \sigma_i, pp)=0$ Return $\perp$
        \item Update $Q_{LChal} \cup (Sm, m_0, \sigma_0) \cup (Sm, m_1, \sigma_1)$
        \item Return $\sigma_0, \sigma_1$
    \end{itemize}
   \end{small} 
     \end{minipage}
}
     \caption{Oracles used in experiments}
     \label{oracles}
     \vspace{-1em}
\end{figure*}

\begin{definition}
\label{SPGSDefinition}
    For a security parameter $\lambda \in \mathrm{N}$ and a PPT adversary $\mathcal{A}$, we say that $SPGS$ provides:
    \begin{enumerate}
        \item \textbf{Correctness} if there exists a negligible function $\nu_1$ such that  
        $Adv^{Corr}_{SPGS,\mathcal{A}} (\lambda)=Pr[Exp^{Corr}_{SPGS, \mathcal{A}}(\lambda)=1] \geq 1-\nu_1(\lambda)$\\

        \item \textbf{Non-frameability} if there exists a negligible function $\nu_2$ such that 
        $Adv^{Non-Frame}_{SPGS,\mathcal{A}} (\lambda)=Pr[Exp^{Non-Frame}_{SPGS, \mathcal{A}}(\lambda)=1] \leq \nu_2(\lambda)$\\

        \item \textbf{Sponsor anonymity} if there exists a negligible function $\nu_3$ such that 
        $Adv^{SpAnon}_{SPGS,\mathcal{A}} (\lambda)=Pr[Exp^{SpAnon-b}_{SPGS, \mathcal{A}}(\lambda)=1] \leq \frac{1}{2}+\nu_3(\lambda)$\\

        \item \textbf{Sponsor traceability} if there exists a negligible function $\nu_4$ such that 
        $Adv^{SpTrace}_{SPGS,\mathcal{A}} (\lambda)=Pr[Exp^{SpTrace}_{SPGS, \mathcal{A}}(\lambda)=1] \leq \nu_4(\lambda)$\\

        \item \textbf{Sponsored-member privacy} if there exists a negligible function $\nu_3$ such that 
        $Adv^{SmPriv}_{SPGS,\mathcal{A}} (\lambda)=Pr[Exp^{SmPriv-b}_{SPGS, \mathcal{A}}(\lambda)=1] \leq \frac{1}{2}+\nu_5(\lambda)$\\

        \item \textbf{Sponsored-member linkability} if there exists a negligible function $\nu_5$ such that 
        $Adv^{SmLink}_{SPGS,\mathcal{A}} (\lambda)=Pr[Exp^{SmLink-b}_{SPGS, \mathcal{A}}(\lambda)=1] \geq 1-\nu_6(\lambda)$\\
    \end{enumerate}

    where $Exp^{Corr}_{SPGS, \mathcal{A}}$, $Exp^{Non-Frame}_{SPGS, \mathcal{A}}$, $Exp^{SpAnon}_{SPGS, \mathcal{A}}$, $Exp^{SpTrace}_{SPGS, \mathcal{A}}$, $Exp^{SmPriv}_{SPGS, \mathcal{A}}$, $Exp^{SmLink}_{SPGS, \mathcal{A}}$ are defined in Figure \ref{experiments}.
\end{definition}

\begin{figure}[!bht]

\fbox{
    \begin{minipage}{0.95\columnwidth}
    \begin{small}
    $Exp^{Corr}_{SPGS, \mathcal{A}}$
    \hrule 
    $(pp,msk) \leftarrow Setup(\lambda,setpp)$, $N=0$\\
    $(m, id_{Sp}, m', id_{Sm}) \leftarrow \mathcal{A}^{AddHU,CorrU,AddCU,CorrM}(pp)$\\
    If $id_{Sp} \notin \mathcal{H} \lor id_{Sm} \notin \mathcal{H}$ Return $0$\\
    $\sigma \leftarrow Sign^{Sp}(m,param_{Sp})$\\
    $\sigma' \leftarrow Sign^{Sm}(m', param_{Sm})$\\
    If $ Verify^{Sp}(m,\sigma,pp)=0$ Return $0$\\
    If $ Verify^{Sm}(m',\sigma',pp)=0$ Return $0$\\
    \textbf{Return} 1 \\

    $Exp^{Non-Frame}_{SPGS, \mathcal{A}}$
    \hrule 
    $(pp,msk) \leftarrow Setup(\lambda,setpp)$, $N=0$, $Q_{Sign}=\emptyset$\\
    $(attr, id_{Sp}, id_{Sm}\perp, m,\sigma_{attr}) \leftarrow \mathcal{A}^{AddHU,CorrM,SignHU}(pp)$\\
     If $attr=Sp \land id_{Sp} \notin \mathcal{H}$ Return $0$\\
     If $attr=Sm \land id_{Sm} \notin \mathcal{H}$ Return $0$\\
    If $Verify^{attr}(m,\sigma_{attr},pp)=0$ Return $0$\\
    If $(attr,m, \sigma_{attr}) \in Q_{Sign}$ Return $0$\\
    \textbf{Return} $Open^{attr}(msk, \sigma_{attr}) \overset{?}= id_{Sp}$\\

    $Exp^{SpAnon-b}_{SPGS, \mathcal{A}}$
    \hrule 
    $(pp,msk) \leftarrow Setup(\lambda,setpp)$, $N=0$, $Q_{AChal}=\emptyset$\\
     $b' \leftarrow \mathcal{A}^{AddHU,CorrU,AddCU,Open,AChal_b}(pp)$\\
     If $b' \neq b$ Return $0$\\
    \textbf{Return} 1\\ 

     $Exp^{SpTrace}_{SPGS, \mathcal{A}}$
    \hrule 
    $(pp,msk) \leftarrow Setup(\lambda,setpp)$, $N=0$\\
    $(attr,m,\sigma_{attr})  \leftarrow \mathcal{A}^{AddHU,CorrU,AddCU,Open,CorrM}(pp)$\\
     If $Verify^{attr}(m,\sigma_{attr},pp)=0$ Return $0$\\
    \textbf{Return} $Open^{attr}(msk, \sigma_{attr}) \overset{?}= \perp$\\

    $Exp^{SmPriv-b}_{SPGS, \mathcal{A}}$
    \hrule 
    $(pp,msk) \leftarrow Setup(\lambda,setpp)$, $N=0$, $Q_{PChal}=\emptyset$\\
     $b' \leftarrow \mathcal{A}^{AddHU,CorrU,AddCU,Open,PChal_b}(pp)$\\
     If $b' \neq b$ Return $0$\\
    \textbf{Return} 1\\ 

    $Exp^{SmLink-b}_{SPGS, \mathcal{A}}$
    \hrule 
    $(pp,msk) \leftarrow Setup(\lambda,setpp)$, $N=0$, $Q_{LChal}=\emptyset$\\
     $b' \leftarrow \mathcal{A}^{AddHU,CorrU,AddCU,Open,LChal_b}(pp)$\\
     If $b'\neq b$ Return $0$\\
     \textbf{Return} 1
    
    \end{small}
    \end{minipage}
    }
     \caption{Security games of $SPGS$}
     \label{experiments}
      \vspace{-1.5em}
\end{figure}

\textit{Linkability of the sponsored-member signature}. 
Our definition of linkability follows from the signer linkability notion defined in \cite{liu2004linkable} but revises the definition to an equivalent one, as outlined below.
Consider a PPT algorithm $F$ which takes two signatures and outputs 1 if the two signatures are linked (signed by the same user) and 0 otherwise. Liu et al. \cite{liu2004linkable} definition of linkability requires that the two following probability statements hold: (i) The probability that $F$ outputs 0 when the two signatures are generated by the same entity is negligible, i.e., $Pr[F(m_0, m_1, \sigma_0, \sigma_1)=0: id_{Sm_0} = id_{Sm_1}] \leq \nu(\lambda)$, and (ii) the probability that $F$ outputs 1 when the two signatures are generated by two distinct users is negligible, i.e., $Pr[F(m_0, m_1, \sigma_0, \sigma_1)=1: id_{Sm_0} \neq id_{Sm_1}] \leq \nu(\lambda)$, for any $id_{Sm_0}, id_{Sm_1}$ and any messages
$m_0$ and $m_1$, and any $\sigma_0 \leftarrow Sign(Sm, m_0, param_{Sm_i})$, $\sigma_1 \leftarrow Sign(Sm, m_1, param_{Sm_i})$ and  $i \in \{id_{Sm_0}, id_{Sm_1}\}$. 
We define sponsored-member linkability using the random variable $Exp^{SmLink-b}_{SPGS,\mathcal{A}}$, and the advantage of the PPT algorithm $\mathcal{A}$. The adversary $\mathcal{A}$ will choose a pair of identities, $id_{Sm_0}$ and $id_{Sm_1}$, and must decide if a pair of received signatures are signed by $id_{Sm_0}$, or two entities ($id_{Sm_0}$ and $id_{Sm_1}$) by outputting 1 or 0, to show whether the two signatures are from the same identity or not. 
We show that this is equivalent to the definition in \cite{liu2004linkable} in Lemma \ref{LemmaLinkiability} in Appendix \ref{Linkability}.

\subsection{Our construction}
\label{SPGSconstruction}
In the following, we give a generic construction using a (partially) dynamic group signature scheme denoted by $GS$ consisting of $(GS.Setup, GS.Join, GS.Sign, GS.Verify, GS.Open)$ algorithms, a  
commitment scheme denoted by $C$ consisting of  $(C.Setup, C.Commit, C.Open)$ algorithms, and a knowledge-sound non-interactive zero knowledge proof of knowledge denoted by $NIZK$ consisting of $(NIZK.Setup, NIZK.Prove, NIZK.Verify)$ algorithms (see Section \ref{Prelim} and Appendix \ref{GSModel} and \ref{NIZKModel} for their definitions and security requirements).

To join, $Sp$ interacts with $\mathcal{M}$ and calls $GS.Join$ to get the group private and public key. The signature of $Sp$ is a group signature which will be generated by running $GS.Sign$. To enroll $Sm$ to the system,  $Sm$ will generate a pair of private and public keys and will receive the signature of $Sp $ on its public key. The public key of $Sm$ is generated by running the commitment $C.Commit$ on a random vbalue chosen by $Sm$, which is used as its private key. $Sm$'s signature on message $m$ is obtained by generating a NIZK proof with the private key as the witness and the message $m$ as part of the public parameters. The signature of $Sm$ consists of the obtained sponsor signature $\sigma$ on its public key $Pk_{Sm}$, the public key, $Pk_{Sm}$, and the NIZK proof $\pi$. $(Pk_{Sm}, \pi)$ can be seen as a signature of knowledge (SOK) (see Appendix \ref{SOK} for SOK definition).  
Below we give the details of the scheme.

\begin{itemize}
\item \underline{$Setup(1^\lambda, setpp) \rightarrow (pp, msk)$}, is run by $\mathcal{M}$, it takes the security parameter $\lambda$, and setup parameters $setpp$ (including the relation $R$ for NIZK), and perform the followings:

\begin{itemize}
    \item runs $(pp', msk)=GS.Setup(1^\lambda,setpp)$
    \item runs $crs=NIZK.Setup(R)$
    \item runs $cpar=C.Setup(1^\lambda)$
    \item outputs the public parameters $pp= (pp',cpar, crs)$ and the manager's secret key $msk$. 

\end{itemize}
\item \underline{$Join^{Sp}(id_{Sp}, param_{\mathcal{M}}) \rightarrow param_{Sp}$} is run between a manager $\mathcal{M}$ and the user $id_{Sp}$ who desire to join the group and act as a sponsor.

\begin{description}
    \item[Round (1)] $Sp$ sends its identity  $id_{Sp}$ to the manager $\mathcal{M}$.
    \begin{itemize}
    \item $\mathcal{M}$ runs $param_{Sp}=GS.Join(id_{Sp})$.
    \end{itemize}
    \item[Round (2)] $\mathcal{M}$ sends  $param_{Sp}$ to $Sp$.
    \begin{itemize}
        \item $\mathcal{M}$ outputs $\top$.
    \end{itemize}
\end{description}
\item \underline{$Join^{Sm}(id_{Sm}, param_{Sp}) \rightarrow param_{Sm}$}, this interactive algorithm is run between  the sponsor  $id_{Sp}$ and the sponsored-member $id_{Sm}$.

\begin{description}
    \item[Round (1)] $Sp$ sends the commitment parameters  $cpar$ to $Sm$.
    \begin{itemize}
        \item $Sm$ generates a private key, denoted by $sk_{Sm}$, and computes $(Pk_{Sm}, r)=C.Com(sk_{Sm})$. 
    \end{itemize}
    \item[Round (2)] $Sm$ sends its public key $Pk_{Sm}$ to $Sp$.
    \begin{itemize}
        \item $Sp$ sets $m=Pk_{Sm}$, and runs $\sigma= GS.Sign(m, param_{Sp})$.
    \end{itemize}
    \item[Round (3)] $Sp$ sends the private parameters $param_{Sm}=(pp,\sigma,Pk_{Sm})$  to $Sm$.
    \begin{itemize}
        \item $Sp$ outputs $\top$.
    \end{itemize}
\end{description}

\item \underline{$Sign^{Sp}( m, param_{Sp}) \rightarrow \sigma$}, this algorithm runs by the sponsor $id_{Sp}$. It  takes $m$ and $param_{Sp}$ as input, and 
\begin{itemize}
   \item computes $\sigma= GS.Sign(m, param_{Sp})$
   \item outputs the sponsor signature $\sigma$.
\end{itemize}
\item \underline{$Sign^{Sm}(m, param_{Sm}) \rightarrow \sigma'$}, is run by the sponsored-member $id_{Sm}$. This algorithm 
\begin{itemize}
    \item parses $param_{Sm}=(pp,\sigma,Pk_{Sm})$
   \item  generates a zero-knowledge proof as $\pi=NIZK.Prove(crs, \rho,x, w)$ where $\rho= (m,Pk_{Sm})$, $w= (sk_{Sm}, r)$, and $x=[\exists (sk_{Sm}, r): (Pk_{Sm}, r)=C.Com(sk_{Sm})]$.
    \item outputs the sponsored-member signature  $\sigma'= (\sigma,Pk_{Sm},\pi)$.
\end{itemize}
\item \underline{$Verify^{Sp}(m,\sigma,pp)\rightarrow 0/1$}, is run by the verifier and takes as input the message $m$, the sponsor signature $\sigma$, and the public parameters of the system $pp$, it 
\begin{itemize}
  \item verifies the sponsor signature $b=GS.Verify(m, \sigma, pp)$
  \item outputs $b$.
\end{itemize}
\item \underline{$Verify^{Sm}(m,\sigma',pp) \rightarrow 0/1$}, is run by the verifier and takes as input the message $m$, the sponsored-member signature $\sigma'$, and the public parameters of the system $pp$, it 
\begin{itemize}
    \item parses the sponsored-member signature $\sigma'=(\sigma,Pk_{Sm},\pi)$
    \item verifies the sponsor signature  $b= GS.Verify(Pk_{Sm}, \sigma, pp)$
    \item verifies the zero-knowledge proof $b'= NIZK.Verify(crs, \pi, \rho, x)$ where $\rho= (m,Pk_{Sm})$ and  $x=[\exists (sk_{Sm}, r): (Pk_{Sm}, r)=C.Com(sk_{Sm})]$. 
    \item outputs 1 if both $b=1$ and $b'=1$, or 0 otherwise.
\end{itemize}
\item \underline{$Open^{SP}(msk, \sigma') \rightarrow id_{Sp}$}, is run by $\mathcal{M}$;  it computes $id_{Sp}=GS.Open(msk, \sigma)$, and outputs $id_{Sp}$.
\item \underline{$Open^{Sm}(msk, \sigma') \rightarrow id_{Sp}$}, is run by $\mathcal{M}$; it parses $\sigma'=(\sigma,Pk_{Sm},\pi)$, then computes $id_{Sp}= GS.Open(msk, \sigma)$, and outputs $id_{Sp}$. 
    
\end{itemize}

\vspace{-5pt}
\subsection{Security analysis} The following theorem summarizes our security analysis:

\begin{theorem}
    Our generic construction of $SPGS$ given in Section \ref{SPGSconstruction} achieves correctness, non-frameability, sponsor anonymity, sponsor traceability, and sponsored-member linkability assuming $GS$ ensures correctness, unforgeability, anonymity, traceability, the $NIZK$ ensures completeness, computational zero knowledge, and computational knowledge soundness, and $C$ is a hiding and binding additive homomorphic commitment scheme. 
    
\end{theorem}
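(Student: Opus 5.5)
The plan is to treat each property in the statement separately. For each one we build a reduction that turns an SPGS adversary into an adversary against the matching property of $GS$, $NIZK$, or $C$. Every reduction follows the same pattern. The reduction receives the challenge parameters of the underlying primitive and generates the remaining components of $pp=(pp',cpar,crs)$ itself. It then answers the SPGS oracles of Figure \ref{oracles} using the primitive's own oracles. $AddHU/AddCU$ with $attr=Sp$ become $GS$ join oracles. $AddHU(Sm,\cdot)$ is simulated by sampling $sk_{Sm}$, committing, and obtaining a sponsor signature on $Pk_{Sm}$ through the $GS$ signing oracle. $Open$ is answered by the $GS$ opening oracle applied to the $\sigma$ component, and $CorrM$ by the leaked manager state that the $GS$ model of Appendix \ref{GSModel} allows.

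\emph{Correctness} follows directly. $GS$ correctness makes $\sigma$ on $m$ and on $Pk_{Sm}$ verify, and perfect NIZK completeness makes $\pi$ verify because $(sk_{Sm},r)$ is a valid witness. A union bound gives $1-\nu_1$.

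\emph{Sponsor traceability.} Both $Open^{Sp}$ and $Open^{Sm}$ only call $GS.Open$ on a $GS$ signature that $Verify^{attr}$ has already checked with $GS.Verify$. So if the SPGS adversary outputs a valid signature that opens to $\perp$, the reduction outputs the corresponding $(m,\sigma)$ or $(Pk_{Sm},\sigma)$ and wins $GS$ traceability.

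\emph{Sponsor anonymity.} The reduction forwards $AChal_b$ to the $GS$ anonymity challenge oracle and outputs the adversary's guess. The SPGS $Open$ restriction implies the $GS$ one, since $Sm$-signatures containing the challenge $\sigma$ are excluded as well. There is one point to check: an $Open$ query on an $Sm$-signature that reuses the challenge $\sigma$ with a different $Pk_{Sm}$ would be a valid $GS$ signature on a different message. Such a query can be forwarded without violating the $GS$ restriction.

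\emph{Non-frameability.} This splits into two cases. (i) $attr=Sp$: a fresh valid signature opening to an honest sponsor $id_{Sp}$ is a $GS$ forgery, or a non-frameability break, against the honest $GS$ member. The reduction embeds $id_{Sp}$ as the honest $GS$ user and uses $GS$ unforgeability. (ii) $attr=Sm$: the forgery is $(\sigma,Pk,\pi)$ opening to the sponsor $P'$ of an honest $Sm$. If $\sigma$ was never produced by $P'$ on $Pk$, we are back in case (i). Otherwise $Pk=Pk_{Sm}$ for some honest $Sm$, and $(m,\pi)$ is new. In that case the reduction simulates all $SignHU(Sm,\cdot)$ proofs with the zero-knowledge simulator, which costs the ZK advantage, and applies the knowledge extractor to $\pi$ to recover $(sk',r')$ opening $Pk_{Sm}$. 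This leads to a contradiction. Either $sk'\neq sk_{Sm}$, which breaks binding, or $sk'=sk_{Sm}$, which we turn into an attack on hiding: the reduction is given a challenge commitment to one of two random values, uses it as $Pk_{Sm}$, and after extraction outputs which value was committed.

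\emph{Sponsored-member linkability.} Signatures of one $Sm$ carry the same $Pk_{Sm}$. So the distinguisher compares the $Pk$ components, and outputs $1$ iff they are equal. If two distinct honest members obtained equal public keys, that would contradict binding, because the two independent random $sk$ values would differ except with negligible probability. Lemma \ref{LemmaLinkiability} then translates this to the Liu et al.\ formulation.

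The main obstacle is case (ii) of non-frameability. It requires running the knowledge extractor while the reduction also answers simulated proofs, which means simulation-extractability or a careful rewinding argument under Fiat–Shamir. It also has to rule out the adversary simply copying $\pi$ for a new $m$; this needs the proof to bind to $\rho=(m,Pk_{Sm})$ as a signature of knowledge, as in Appendix \ref{SOK}. I would first try invoking SOK unforgeability directly rather than raw knowledge soundness.
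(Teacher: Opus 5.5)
Your correctness, sponsor-traceability and linkability arguments are sound and essentially follow the paper. For linkability the paper compares the $GS$ components $\sigma'_0,\sigma'_1$ and cites $GS$ non-frameability, while you compare $Pk_{Sm}$ and cite binding; both work for honestly generated challenge signatures.

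The genuine gap is case (ii) of non-frameability, and it cannot be closed under the theorem's hypotheses. The sponsor signature $\sigma$ on $Pk_{Sm}$ is fixed at join time and reused in every signature of that $Sm$. So one answer $(\sigma,Pk_{Sm},\pi)$ of $SignHU(Sm,m_1,\cdot)$ gives the adversary everything except a proof tied to a new message. Take any NIZK with the listed properties and modify it to ignore $m$ in $\rho=(m,Pk_{Sm})$. Completeness, zero knowledge and the knowledge soundness of Appendix~\ref{NIZKModel} (which explicitly has no simulation oracle) all survive. Yet $(Sm,id_{Sp},id_{Sm},m_2,(\sigma,Pk_{Sm},\pi))$ with $m_2\neq m_1$ verifies, is not in $Q_{Sign}$, and opens to the sponsor. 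So, as you suspected, you need a simulation-extractable NIZK or SOK (e.g.\ Fiat--Shamir with $m$ inside the hash, analysed in the random oracle model), together with hiding or one-wayness of $C$. The statement as written does not follow. The paper does not supply the missing idea either. Its $Hyb_4$ claims that if $(m^*,(\sigma,Pk^*_{Sm},\cdot))\notin Q_{Sign}$ then $\sigma$ must be a $GS$ forgery, which is false for exactly this reuse reason. It also only invokes binding, although extracting a valid opening of an honest member's key does not contradict binding.

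Second, in sponsor anonymity you assert that the SPGS $Open$ restriction excludes $Sm$-signatures containing the challenge $\sigma$. That is not what Figure~\ref{oracles} says: the check is on the exact triple $(attr,m,\sigma)$, and $Q_{AChal}$ only contains $(Sp,m,\sigma_b)$. The dangerous case is not a different $Pk_{Sm}$ but $Pk_{Sm}=m$. The adversary chooses the challenge message $m$ to be a commitment whose opening it knows. It then queries $Open$ on $(Sm,m',(\sigma_b,m,\pi))$ with an honestly computed $\pi$. This verifies, passes the restriction, and returns $id_{Sp_b}$. Your reduction would have to forward the $GS$ pair $(m,\sigma_b)$, which the $GS$ oracle refuses; in fact the property fails as literally defined. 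You need either domain separation in the construction (e.g.\ $Sign^{Sp}$ signs $(0,m)$ and $Join^{Sm}$ signs $(1,Pk_{Sm})$) or an explicitly strengthened $Open$ restriction. Relatedly, the $GS$ anonymity game of Figure~\ref{GSexperiments} has no signing oracle. So your generic simulation of $AddHU(Sm,\cdot)$ under honest, possibly challenged, sponsors is unavailable there without strengthening that definition. The paper's one-hybrid anonymity proof never treats $Open$ queries, so it shares this gap.
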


\noindent \textbf{Proof.}
The proof of theorem consists of a set of lemmas, one for each property, that are outlined below. The detailed proof sketch using hybrid games is given in the Appendix \ref{Proofs}.

\begin{lemma}
    \textbf{Correctness.} Our generic construction of $SPGS$ achieves correctness,  if the group signature $GS$ and commitment $C$ ensure correctness and $NIZK$ ensure completeness.
\end{lemma}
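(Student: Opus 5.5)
The plan is a direct reduction to the correctness of the building blocks, following the structure of $Exp^{Corr}_{SPGS,\mathcal{A}}$ in Figure \ref{experiments}. Fix any PPT $\mathcal{A}$ that, using $AddHU$, $CorrU$, $AddCU$ and $CorrM$, outputs $(m, id_{Sp}, m', id_{Sm})$ with both identities in $\mathcal{H}$. The experiment outputs $0$ only if $Verify^{Sp}(m,\sigma,pp)=0$ or $Verify^{Sm}(m',\sigma',pp)=0$. By a union bound, it is enough to show that each of these two events has negligible probability.

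\emph{Sponsor signatures.} By construction, $Sign^{Sp}$ and $Verify^{Sp}$ are exactly $GS.Sign$ and $GS.Verify$. Moreover, $id_{Sp}\in\mathcal{H}$ means $param_{Sp}$ was produced by an honest run of $GS.Join$ inside $AddHU$. We then build a reduction $\mathcal{B}$ against $GS$ correctness. $\mathcal{B}$ runs $C.Setup$ and $NIZK.Setup$ itself, and embeds the $GS$ public parameters $pp'$ into $pp$. It forwards sponsor joins and corruptions, as well as $CorrM$, to its own $GS$ oracles. It simulates sponsored-member joins locally; the one exception is the $GS.Sign$ call on $Pk_{Sm}$ by an honest sponsor, which $\mathcal{B}$ obtains by internally holding, or querying, that sponsor's key. $\mathcal{B}$ then outputs $(m,id_{Sp})$. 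Hence $\Pr[Verify^{Sp}=0]$ is bounded by the $GS$ correctness error.

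\emph{Sponsored-member signatures.} The signature is $\sigma'=(\sigma,Pk_{Sm},\pi)$, and $Verify^{Sm}$ checks two things.

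First, $GS.Verify(Pk_{Sm},\sigma,pp)=1$. Here $\sigma$ was produced during an honest $Join^{Sm}$ (because $id_{Sm}\in\mathcal{H}$, the join ran inside $AddHU$) by a sponsor holding a well-formed $param_{Sp}$. This is again $GS$ correctness, now applied to the message $Pk_{Sm}$, with the same reduction.

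Second, $NIZK.Verify(crs,\pi,\rho,x)=1$ with $\rho=(m',Pk_{Sm})$. The witness $w=(sk_{Sm},r)$ comes from an honest $C.Com(sk_{Sm})$, so $C.Open(Pk_{Sm},sk_{Sm},r)=1$ by commitment correctness, i.e.\ $(x,w)$ satisfies $R$. Perfect completeness of the $NIZK$ then gives acceptance with probability $1$, for every $crs$ output by $NIZK.Setup$ and every message $m'$.

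The main obstacle is the join schedule. The sponsor who enrolled $id_{Sm}$ might have been corrupted, via $CorrU$ or $AddCU$, and an adversarial sponsor could hand out a malformed $\sigma$. My plan is to read the experiment as the paper intends: $id_{Sm}\in\mathcal{H}$ means $Join^{Sm}$ was executed honestly through $AddHU$, with the enrolling sponsor's stored $param_{Sp}$. That makes $\sigma$ an honestly generated $GS$ signature, and corruption occurring after the join does not alter the stored $\sigma$. If this interpretation were not available, I would instead have the $Sm$ check $GS.Verify(Pk_{Sm},\sigma,pp)$ at the end of Round (3) and abort otherwise, which again reduces to the same case.

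Summing the two $GS$ correctness errors, plus zero from the $NIZK$ term, gives $Adv^{Corr}_{SPGS,\mathcal{A}}\ge 1-\nu_1(\lambda)$ with $\nu_1$ negligible.
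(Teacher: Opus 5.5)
Your proposal is correct and follows the paper's own (very brief) argument: $GS$ correctness for the sponsor signature and for the sponsor's signature on $Pk_{Sm}$, commitment correctness to guarantee a valid witness, and $NIZK$ completeness for $\pi$, with your union bound and join-schedule discussion simply making explicit what the paper calls straightforward. One small caution: as written, your reduction $\mathcal{B}$ cannot simulate honest $Join^{Sm}$ calls, because the $GS$ correctness game has no signing oracle and obtaining an honest sponsor's key requires $CorrU$, which removes that sponsor from $\mathcal{H}$; both $GS$ steps are therefore better justified directly from correctness of honestly joined keys on arbitrary messages, which is what the paper implicitly does.
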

The proof is straightforward.

\begin{lemma}
    \textbf{Non-frameability.} Our generic construction of $SPGS$ achieves non-frameability,  if $NIZK$ ensures knowledge soundness, the commitment $C$ is binding, and the group signature $GS$ is non-frameable.
\end{lemma}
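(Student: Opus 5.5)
The plan is a case split on the type $attr$ of the honest target $P$ named in $Exp^{Non-Frame}_{SPGS,\mathcal{A}}$. Each case is handled by a reduction, and a union bound gives a negligible total advantage.

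\emph{Case $attr=Sp$.} Here $\sigma_{Sp}$ is simply a $GS$ signature on $m$, and $Open^{Sp}$ is $GS.Open$. We build $\mathcal{B}$ against non-frameability of $GS$. It runs $C.Setup$ and $NIZK.Setup$ itself and embeds the $GS$ challenge $pp'$. It answers $AddHU(Sp,\cdot)$, $SignHU(Sp,\cdot,\cdot)$ and $CorrM$ with the corresponding $GS$ oracles. For $AddHU(Sm,\cdot)$ and $SignHU(Sm,\cdot,\cdot)$ it samples $sk_{Sm}$ and the commitment itself and obtains the sponsor's signature on $Pk_{Sm}$ from the $GS$ signing oracle. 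A fresh valid forgery $(m,\sigma_{Sp})$ that opens to honest $id_{Sp}$ is then a $GS$ frame. Freshness transfers because $Q_{Sign}$ is kept per $(attr,m,\sigma)$. One caveat is that a signature on $m=Pk_{Sm}$ queried during a $Join^{Sm}$ is also in the $GS$ query list, but the SPGS adversary never sees that signature as a sponsor signature. We handle this either by observing that $\mathcal{A}$ outputting it counts only if it was fresh in SPGS, which we argue is equally excluded by $GS$ (it was a $GS$ query), or by domain-separating messages with a tag.

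\emph{Case $attr=Sm$.} Write $\sigma'=(\sigma,Pk^*,\pi)$ with $Open(msk,\sigma)=id_{Sp'}$, where $Sp'$ is the honest sponsor of the honest $Sm$. We split into two subcases.

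\emph{Subcase (a): $\sigma$ was never produced by $Sp'$.} This covers the case where $\sigma$ is not a signature output by $Sp'$ in any $Join^{Sm}$ or $SignHU$ on that message. Then $(Pk^*,\sigma)$ is a $GS$ frame of $Sp'$, reducing to $GS$ non-frameability exactly as in the previous case.

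\emph{Subcase (b): $\sigma$ came from a $Join^{Sm}$ run by honest $Sp'$.} Then $Pk^*=Pk_{Sm}$ is the commitment of some honest $Sm$. This is the main obstacle, since an adversary who reuses the honest $(\sigma,Pk_{Sm})$ needs only a fresh $\pi$. We use knowledge soundness: from the valid $\pi$ on statement $x$ with $\rho=(m,Pk_{Sm})$ and $(m,\sigma')\notin Q_{Sign}$, the extractor outputs $(sk^*,r^*)$ opening $Pk_{Sm}$. To exploit this we first switch to a hybrid in which the $SignHU(Sm,\cdot,\cdot)$ proofs are simulated, which is indistinguishable by computational zero-knowledge. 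In that hybrid the reduction never uses $sk_{Sm}$ beyond the commitment. Two outcomes remain. If $sk^*\ne sk_{Sm}$, we break binding of $C$, because the reduction knows $(sk_{Sm},r)$. If $sk^*=sk_{Sm}$, we break hiding of $C$: we guess the target $Sm$ among the at most $N_{max}$ users and embed a hiding challenge commitment to one of two random keys as $Pk_{Sm}$, losing a factor $N_{max}$. Note that the lemma lists only binding, so this step may instead lean on the one-wayness implicit in hiding.

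A further subtlety in subcase (b) is that knowledge soundness must hold against an adversary that has seen simulated proofs, i.e.\ simulation-extractability. With a Fiat–Shamir sigma protocol in the random oracle model this follows by forking. We will state this assumption explicitly, since it is the step most likely to need strengthening.

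Summing the $GS$, zero-knowledge, extraction, binding and hiding losses gives a negligible bound on $Adv^{Non-Frame}_{SPGS,\mathcal{A}}$.
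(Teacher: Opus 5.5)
There is a genuine gap: the sponsor's $GS$ key signs both ordinary messages and sponsored-member public keys, and your case analysis mishandles this cross-use in both cases.

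\textbf{The $attr=Sp$ caveat.} Your first resolution is wrong, and the problem it hides is a real attack. The sponsor's signature $\sigma$ on $Pk_{Sm}$ from $Join^{Sm}$ appears in the clear as the first component of every answer $(\sigma,Pk_{Sm},\pi)$ of $SignHU(Sm,\cdot,\cdot)$. So $\mathcal{A}$ does see it, and can output $(Sp,id_{Sp},Pk_{Sm},\sigma)$. This passes $Verify^{Sp}$, opens to the honest sponsor, and is not in $Q_{Sign}$, which records only $SignHU$ triples. Hence $\mathcal{A}$ wins with overwhelming probability, while your $\mathcal{B}$ gets nothing, because $(Pk_{Sm},\sigma)$ is already in the $GS$ query list. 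That exclusion by $GS$ is exactly why the reduction fails; freshness in the SPGS game does not transfer to freshness in the $GS$ game.

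\textbf{The $attr=Sm$ split.} The same cross-use breaks your split into (a) and (b). It omits the case where $\sigma$ was returned by $SignHU(Sp',Pk^*,\cdot)$ on a message that $\mathcal{A}$ chose as $(Pk^*,r^*)=C.Com(sk^*)$. Then $\mathcal{A}$ holds a witness, proves honestly on any $m^*$, and wins. Extraction just returns $sk^*$, which contradicts neither binding nor hiding, and $GS$ non-frameability says nothing because the pair was queried.

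\textbf{What repairs it.} Only your second option works: domain separation, e.g.\ the sponsor signs $(0,m)$ in $Sign^{Sp}$ and $(1,Pk_{Sm})$ in $Join^{Sm}$, with both verifications adjusted. You must adopt it throughout; it also folds the $SignHU(Sp)$ route into your subcase (a). You should say explicitly that you are then proving non-frameability of a modified scheme.

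\textbf{Comparison with the paper, and the hypotheses.} Your subcase (b) is more careful than the paper's proof. For $attr=Sp$ the paper lets the adversary query only $SignHU(Sp,\cdot,\cdot)$, which is why it never meets the cross-use above. For $attr=Sm$ it switches to an extractor CRS, extracts, invokes binding, and then aborts unless $(m^*,(\sigma,Pk^*_{Sm},\cdot))\in Q_{Sign}$, appealing to $GS$ non-frameability. Nothing there rules out replaying an honest $(\sigma,Pk_{Sm})$ with a fresh proof, which is exactly the case you isolate.

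The extra assumptions you flag are necessary, not artifacts of your route. Each of the following satisfies all the lemma's hypotheses while non-frameability fails:
\begin{itemize}
\item With the binding but non-hiding commitment $Pk=sk$, anyone can prove knowledge of the opening.
\item A knowledge-sound NIZK that appends its witness to the proof leaks $sk_{Sm}$ from a single $SignHU(Sm,\cdot,\cdot)$ answer.
\item A knowledge-sound, zero-knowledge NIZK whose proofs ignore $m$ (the relation does not involve $m$) lets a proof for $m$ be replayed on any $m^*\ne m$.
\end{itemize}
So state the lemma with zero-knowledge, one-wayness of $sk\mapsto Pk$ (hiding suffices), and \emph{strong} simulation-extractability with $m$ bound into the statement. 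It must be strong because a forgery may reuse a queried $m^*$ with a new $\pi^*$. Do this rather than claiming the result from the original hypotheses.
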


We consider two cases, where (i) $attr=Sp$, and  (ii) $attr=Sm$. In (i) the adversary outputs a sponsor signature, which is a group signature. Due to the non-frameability of $GS$, the success probability of the adversary will be negligible.  In (ii) the adversary outputs a sponsored-member signature which consists of a group signature, a committed public key, and a NIZK proof. Since $GS$ is non-frameable, and the join algorithm is executed honestly, the success probability of adversary to output a valid $GS$ 
is negligible. Additionally, NIZK ensures knowledge soundness which means that the probability that adversary outputs a valid NIZK proof and the extractor extracts the witness $w$ such that $(w,x) \notin R$ is negligible. Furthermore, the commitment scheme is binding and the adversary cannot equivocate the private key $w$. Therefore, the success probability of adversary to output a valid NIZK proof for a public key that has been generated during the honest execution of Join (without knowing the witness) is negligible.

\begin{lemma}
    \textbf{Sponsor anonymity.} Our generic construction of $SPGS$ achieves sponsor anonymity,  if the group signature $GS$ ensures anonymity for the signer. 
\end{lemma}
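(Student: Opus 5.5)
We prove the lemma by a direct reduction: from any PPT adversary $\mathcal{A}$ that wins $Exp^{SpAnon-b}_{SPGS,\mathcal{A}}$ with non-negligible advantage, we build an adversary $\mathcal{B}$ against the anonymity game of $GS$ (Appendix \ref{GSModel}) with the same advantage. $\mathcal{B}$ receives $pp'$ from the $GS$ challenger. It then runs $crs=NIZK.Setup(R)$ and $cpar=C.Setup(1^\lambda)$ itself, and hands $pp=(pp',cpar,crs)$ to $\mathcal{A}$. Since these extra components are generated honestly and independently of $msk$, $\mathcal{A}$'s view of the setup is identical to the real one.

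$\mathcal{B}$ answers $\mathcal{A}$'s oracle queries as follows.
\begin{itemize}
\item For $AddHU(Sp,\cdot)$, $CorrU$ on sponsors, and $AddCU(Sp,\cdot)$, it forwards to the corresponding $GS$ oracles, since $Join^{Sp}$ is exactly $GS.Join$.
\item For sponsored members, it first runs the commitment and key generation itself. It then obtains the sponsor's group signature on $Pk_{Sm}$: through the $GS$ signing oracle for honest sponsors, or from the key it already knows for corrupted sponsors. This way it simulates $Join^{Sm}$, $AddHU(Sm,\cdot)$, $CorrU$ and $AddCU(Sm,\cdot)$ perfectly.
\item Sponsored-member signatures are produced by $\mathcal{B}$ running $NIZK.Prove$ with the witness it knows.
\item For $Open(Sp,m,\sigma)$, it forwards $(m,\sigma)$ to the $GS$ open oracle.
\item For $Open(Sm,m,\sigma')$, it parses $\sigma'=(\sigma,Pk_{Sm},\pi)$. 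It checks $\pi$ itself, and forwards $(Pk_{Sm},\sigma)$ to the $GS$ open oracle.
\item For $AChal_b(pp,m,id_{Sp_0},id_{Sp_1})$, it forwards to the $GS$ challenge oracle and returns the resulting $\sigma_b$.
\end{itemize}
At the end, $\mathcal{B}$ outputs whatever bit $b'$ $\mathcal{A}$ outputs.

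The step that needs care is the open oracle. $GS$ forbids opening the challenge pair $(m,\sigma_b)$. In SPGS, however, $\mathcal{A}$ may embed $\sigma_b$ inside a sponsored-member signature $(\sigma_b,Pk,\pi)$ with $Pk=m$, or query the same $\sigma_b$ under the tag $Sm$. Such an $Sm$ query maps to the forbidden $GS$ query $(m,\sigma_b)$, so $\mathcal{B}$ cannot answer it.

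We handle this by treating the $(attr,m,\sigma)$ check in the $Open$ oracle as covering all signatures that contain the challenge group signature on the same message. This matches the intended restriction, and in the usual Bootle et al.\ formulation the restriction is on the underlying message/signature pair. Under this convention, every query $\mathcal{B}$ must forward is one the $GS$ challenger accepts.

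The remaining $Open$ queries are ones where the underlying $GS$ pair differs from the challenge pair, and $\mathcal{B}$ can forward them legitimately. With the simulation perfect, $\Pr[\mathcal{A}\text{ wins}]=\Pr[\mathcal{B}\text{ wins}]\le \frac12+\nu(\lambda)$. This bound uses only $GS$ anonymity; no NIZK or commitment property is needed, which is consistent with the hypotheses of the lemma.
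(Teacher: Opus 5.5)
Your reduction is the same argument as the paper's proof, made explicit. The paper uses a single hybrid that replaces the challenge signature $\sigma_b$ by a $GS$ signature of a random sponsor and appeals to $GS$ anonymity, without saying how the other SPGS oracles are simulated; it therefore silently shares both issues below. Your observation about $Open$ is correct, and it is stronger than you state: it is an outright attack, not just an obstacle for $\mathcal{B}$. Under the literal oracle, $\mathcal{A}$ proceeds as follows:
\begin{itemize}
\item it computes $(Pk,r)\leftarrow C.Com(sk)$ itself and queries $AChal_b$ on $m=Pk$;
\item it builds $\pi$ for $\rho=(m',Pk)$ with witness $(sk,r)$;
\item it asks $Open(Sm,msk,m',(\sigma_b,Pk,\pi),pp)$.
\end{itemize}
That tuple is not in $Q_{AChal}$, which holds only $(Sp,m,\sigma_b)$. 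It verifies, and the oracle returns $GS.Open(msk,\sigma_b)=id_{Sp_b}$. So the lemma is false for the definition as written. Your wider restriction is a necessary amendment to the sponsor-anonymity definition, and you should present it as such rather than as a reading of the intended restriction.

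The genuine gap is the step where $\mathcal{B}$ obtains an honest sponsor's signature on $Pk_{Sm}$ ``through the $GS$ signing oracle''. The game $Exp^{Anon-b}_{GS,\mathcal{A}}$ that the lemma assumes offers only $AddHU$, $CorrU$, $AddCU$, $Open$ and $AChal_b$. There is no signing oracle, and challenged users cannot be corrupted before the challenge.

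Yet in $Exp^{SpAnon-b}_{SPGS,\mathcal{A}}$ the adversary does obtain honest sponsors' group signatures, including from the two sponsors it will later challenge:
\begin{itemize}
\item on messages of its choice, via $AddCU(Sm,\cdot)$: a corrupt joiner sends any $Pk_{Sm}$ and receives $GS.Sign(Pk_{Sm},param_{Sp})$;
\item on honest commitments, via $AddHU(Sm,\cdot)$ followed by $CorrU$.
\end{itemize}
$\mathcal{B}$ cannot produce these signatures. It does not hold those keys, and corrupting a sponsor in the $GS$ game to learn its key disqualifies that sponsor from $AChal_b$. $\mathcal{B}$ also cannot know in advance which sponsors will be challenged.

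So, exactly as with $Open$, the hypothesis must be strengthened. Assume $GS$ anonymity in a game that also provides either a signing oracle for honest users, or Bellare--Shi--Zhang-style full key exposure. Full key exposure is the cleaner choice: $\mathcal{B}$ then signs for all sponsors itself, and only the challenge pair is barred from opening, which matches your amended $Open$ restriction. With a signing oracle whose outputs are barred from $Open$ (as $Q_{Sign}$ is in the paper's $GS$ oracle), you would instead have to answer $Open$ queries on sponsored-member signatures embedding such a $\sigma$ from $\mathcal{B}$'s own records. That additionally needs $GS$ opening correctness for honestly generated signatures, which is not among the properties the paper lists.
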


For a sponsor or sponsored-signature the only part that is related to the sponsor is the group signature. The NIZK proof  and the sponsored-member public key are independent of the identity of the sponsor and they can be generated  using a random private key.  Due to the anonymity of $GS$, sponsor anonymity is obtained. 

\begin{lemma}
    \textbf{Sponsor traceability.} The construction of $SPGS$ given in Section \ref{SPGSconstruction} achieves sponsor traceability,  if the group signature $GS$ ensures traceability.
\end{lemma}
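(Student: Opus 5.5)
The plan is a direct reduction. Assume a PPT adversary $\mathcal{A}$ wins $Exp^{SpTrace}_{SPGS,\mathcal{A}}$ with non-negligible probability. From it we build an adversary $\mathcal{B}$ against the traceability game of $GS$ (Appendix \ref{GSModel}), with the same success probability up to negligible terms.

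$\mathcal{B}$ first receives $pp'$ from the $GS$ challenger. It then runs $crs=NIZK.Setup(R)$ and $cpar=C.Setup(1^\lambda)$ itself and hands $pp=(pp',cpar,crs)$ to $\mathcal{A}$. The oracles are simulated as follows.
\begin{itemize}
\item $AddHU(Sp,\cdot)$, $AddCU(Sp,\cdot)$ and $CorrU$ on sponsors are forwarded to the corresponding $GS$ oracles.
\item $AddHU(Sm,\cdot)$ is handled by $\mathcal{B}$ directly. It runs the sponsored member's side honestly, sampling $sk_{Sm}$ and computing $(Pk_{Sm},r)=C.Com(sk_{Sm})$. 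It then obtains the sponsor's signature on $Pk_{Sm}$ through the $GS$ signing oracle for an honest sponsor, or by using the sponsor's key if that sponsor is corrupted.
\item $AddCU(Sm,\cdot)$ with an honest sponsor is handled the same way: $\mathcal{B}$ requests a $GS$ signature on the adversarially chosen $Pk_{Sm}$.
\item $CorrU$ on a sponsored member returns the locally stored values.
\item $CorrM$ is forwarded to the $GS$ oracle that leaks the manager's state. This is allowed because $GS$ traceability is defined with a partially corrupted manager.
\item $Open(attr,\cdot)$ is answered by parsing out the group-signature component $\sigma$, checking the NIZK part locally when $attr=Sm$, and forwarding $\sigma$ to the $GS$ opening oracle.
\end{itemize}

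When $\mathcal{A}$ outputs $(attr,m,\sigma_{attr})$ with $Verify^{attr}(m,\sigma_{attr},pp)=1$ and $Open^{attr}(msk,\sigma_{attr})=\perp$, $\mathcal{B}$ splits into two cases.
\begin{itemize}
\item If $attr=Sp$, then $\sigma_{attr}$ is itself a $GS$ signature on $m$ that verifies. Since $Open^{Sp}$ is exactly $GS.Open$, it opens to $\perp$, and $\mathcal{B}$ outputs $(m,\sigma_{attr})$.
\item If $attr=Sm$, $\mathcal{B}$ parses $\sigma_{attr}=(\sigma,Pk_{Sm},\pi)$. Verification of the $SPGS$ signature implies $GS.Verify(Pk_{Sm},\sigma,pp')=1$, and $Open^{Sm}$ returns $GS.Open(msk,\sigma)=\perp$. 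So $\mathcal{B}$ outputs $(Pk_{Sm},\sigma)$.
\end{itemize}
In both cases $\mathcal{B}$ wins the $GS$ traceability game exactly when $\mathcal{A}$ wins. Hence $Adv^{SpTrace}_{SPGS,\mathcal{A}}\le Adv^{Trace}_{GS,\mathcal{B}}$, which is negligible. The NIZK and commitment play no role in this property.

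The main obstacle is making the simulation faithful to the oracle interfaces of the $GS$ traceability game of Appendix \ref{GSModel}. Sponsored-member joins require $GS$ signatures from honest sponsors, so $\mathcal{B}$ needs a signing oracle for honest users inside that game, or it must add sponsors in a way that lets it know their keys. The latter is harmless for traceability, since traceability tolerates all users being corrupted. The simplest route is to register every sponsor through the $GS$ corrupt-join oracle, so that $\mathcal{B}$ knows every $gsk$ and can sign itself. Any $\perp$-opening signature found this way still breaks $GS$ traceability.
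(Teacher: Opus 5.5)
Your proposal is correct and is essentially the paper's proof: a direct reduction to $GS$ traceability that forwards the sponsor-side oracles and, on a winning output, hands over the embedded group-signature component, using that $Open^{attr}$ is just $GS.Open$ on that component and that $Verify^{attr}$ includes the corresponding $GS.Verify$ check. You are in fact more careful than the paper's sketch. In the $Sm$ case you correctly output $(Pk_{Sm},\sigma)$, the message on which the embedded $GS$ signature actually verifies, whereas the paper's text pairs $\sigma$ with $m$. You also handle the sponsored-member join oracles by registering every sponsor through the $GS$ corrupt-join oracle; the paper's reduction passes over this point, even though the $GS$ traceability game provides no signing oracle.
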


Sine the open algorithm uses the group signature (generated by the sponsor) to find the identity of the sponsor, the  sponsor traceability is achieved because of the traceability of $GS$.

\begin{lemma}
    \textbf{Sponsored-member privacy.} Our generic construction of $SPGS$ achieves sponsored-member privacy,  if $NIZK$ is a zero-knowledge argument and the commitment scheme $C$ is hiding. 
\end{lemma}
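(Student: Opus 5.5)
We prove sponsored-member privacy by a sequence of hybrid games, starting from $Exp^{SmPriv-0}_{SPGS,\mathcal{A}}$ and ending at $Exp^{SmPriv-1}_{SPGS,\mathcal{A}}$. The guiding observation is that in the challenge signature $\sigma'_b=(\sigma,Pk_{Sm_b},\pi)$ the bit $b$ can enter only through the public key $Pk_{Sm_b}$ and the proof $\pi$. The group signature $\sigma$ is produced by the same honest sponsor $id_{Sp}$ in both worlds, on the message $Pk_{Sm_b}$. Moreover, $PChal_b$ runs $Join^{Sm}$ internally for fresh identities, so $sk_{Sm_0},sk_{Sm_1}$ and their openings never reach $\mathcal{A}$. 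The adversary also cannot query $SignHU$ for these members, so no other signature depends on them.

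\textbf{Game $G_0$} is the real experiment with $b=0$.

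\textbf{Game $G_1$} replaces the proof $\pi$ in the challenge by the output of the zero-knowledge simulator on statement $x$ and common input $\rho=(m,Pk_{Sm_0})$. The common reference string is generated by the simulator at $Setup$. We bound $|\Pr[G_0]-\Pr[G_1]|$ by computational zero knowledge of $NIZK$. The reduction holds the $crs$ and builds every other component itself: it runs $GS.Setup$ to obtain $msk$, answers $AddHU$, $CorrU$, $AddCU$ and $Open$ honestly, and uses the real or simulated proof from its challenger for the single challenge. Since $\pi$ is now simulated, it no longer uses the witness $(sk_{Sm_0},r)$.

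\textbf{Game $G_2$} replaces $Pk_{Sm_0}$ by a commitment to an independent random value $sk^*$. The sponsor signature $\sigma$ is then computed on this new key. We bound the gap by the hiding property of $C$. The reduction submits $(sk_{Sm_0},sk^*)$ to the hiding challenger, receives $com$, and sets $Pk=com$. It then signs $Pk$ itself with $GS.Sign$, which it can do because it knows all honest sponsors' keys, and simulates $\pi$.

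In $G_2$ the challenge distribution is independent of which of $id_{Sm_0},id_{Sm_1}$ was selected. Running $G_2,G_1,G_0$ symmetrically from the $b=1$ side therefore shows $|\Pr[Exp^{SmPriv-0}]-\Pr[Exp^{SmPriv-1}]|$ is at most $2(\epsilon_{ZK}+\epsilon_{hide})$. This gives advantage at most $\frac12+\nu_5(\lambda)$.

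\textbf{Main obstacle.} The hardest part is making the zero-knowledge reduction coexist with the other oracles. The simulated $crs$ (or the programmed random oracle, for Fiat--Shamir instantiations) must be indistinguishable even while $\mathcal{A}$ sees honest proofs from $Open$ queries and from signatures of corrupted members. I would handle this with multi-theorem (adaptive) zero knowledge: every honest proof is answered by the simulator once we switch, and the $G_0\to G_1$ step is argued in one go. A further point to check is whether $Open$ queries reveal anything about $b$. They cannot, because $Open$ is refused on the challenge signature, and it would anyway output $id_{Sp}$ in both worlds.
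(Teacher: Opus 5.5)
Your proof is correct and follows essentially the same route as the paper's. The paper's $Hyb_1$ replaces the NIZK proof by a simulated one (zero knowledge), and its $Hyb_2$ replaces $Pk_{Sm}$ by a random value (hiding), after which the challenge signature carries no information about the sponsored member; your version is slightly more careful, since committing to a fresh random $sk^*$ keeps the proven statement true and the sponsor signature well-defined, and you spell out the reductions and the symmetric hybrid chain from both $b=0$ and $b=1$.
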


We note that our construction does not include any information about the identity of the signer except for the private and public key information that is generated by sponsored-member. The public key is shared with the sponsor, and since the commitment is hiding no information about the private key is revealed. Also, the NIZK is zero knowledge and it does not leak any information about the private key. So, the adversary cannot learn any information related to the sponsored-member and cannot identify the signer. 

\begin{lemma}
    \textbf{Sponsored-member linkability.} Our generic construction of $SPGS$ achieves sponsored-member linkability,  if 
    $GS$ ensures non-frameability.
\end{lemma}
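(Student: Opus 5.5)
The plan is to build the linking algorithm directly from the structure of a sponsored-member signature and then show it succeeds in the $Exp^{SmLink-b}_{SPGS,\mathcal{A}}$ game. By the lemma in Appendix \ref{Linkability}, that game is equivalent to the two-sided condition of \cite{liu2004linkable}, so it suffices to exhibit a PPT $F$ meeting both conditions.

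Every sponsored-member signature $\sigma'=(\sigma,Pk_{Sm},\pi)$ carries the member's public key $Pk_{Sm}$ in the clear, and this key is fixed at $Join^{Sm}$. So I define
\[
F(m_0,m_1,\sigma'_0,\sigma'_1)=1 \iff Pk^{(0)}_{Sm}=Pk^{(1)}_{Sm}.
\]
The adversary in the game outputs $b'=1-F(\cdot)$: it answers ``same signer'' exactly when the two public keys coincide.

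Condition (i), same signer: two honest signatures by one $Sm$ are produced from the same $param_{Sm}=(pp,\sigma,Pk_{Sm})$. They contain the identical $Pk_{Sm}$, so $F=1$ with probability $1$.

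Condition (ii), distinct signers: I would bound the probability that two honest sponsored-members $id_{Sm_0}\neq id_{Sm_1}$ end up with equal public keys.

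1. Each member samples $sk_{Sm}$ and randomness $r$ independently and sets $Pk_{Sm}=C.Com(sk_{Sm})$.
2. If the two keys collide, then either $sk_0=sk_1$, which happens with negligible probability because the keys are drawn from a super-polynomial space, or the same commitment opens to two different values, which breaks binding of $C$.
3. Hence the collision probability is negligible.

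The remaining concern is that the adversary might present signatures whose $Pk$ was not obtained from an honest join. I would argue this using the hypothesis that $GS$ is non-frameable/unforgeable. A verifying $\sigma'$ must contain a $GS$ signature $\sigma$ on $Pk_{Sm}$ that opens to the honest sponsor. Producing such a $\sigma$ on a fresh key $Pk^*$ that the sponsor never signed yields a reduction against $GS$ non-frameability: the reduction simulates all $Join^{Sm}$ runs through the $GS$ signing oracle and outputs $(Pk^*,\sigma)$. Consequently every verifying sponsored-member signature is bound to a key certified at an honest join, and that key belongs to exactly one sponsored member.

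I expect this last point to be the main obstacle, because the game as written has $LChal_b$ produce honest signatures. The real content of the lemma is to rule out one member holding two keys, and to rule out two members sharing one key. The first is controlled by $GS$ non-frameability, since keys are certified only through $Join^{Sm}$. The second is controlled by commitment binding and the entropy of $sk_{Sm}$, which should be stated as an extra (mild) assumption alongside the lemma's hypothesis.

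Putting the two conditions together, the advantage is at least $1-\nu_6(\lambda)$.
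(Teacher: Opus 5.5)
Your argument is correct for the game as defined, but it takes a different route from the paper. The paper's linker compares the $GS$ certificates (the sponsor's signatures on $Pk_{Sm}$) carried by the two sponsored-member signatures, not the keys. Its whole argument is a single hybrid that aborts if a certificate was never output by the sponsor's signing oracle, and it charges that event to $GS$ non-frameability. It never separately argues that two distinct honest members receive different certificates.

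You instead compare $Pk_{Sm}$, which gives condition (i) for free, and you prove condition (ii) from the binding of $C$ together with the entropy of $sk_{Sm}$. You also rightly note that the forgery case cannot arise in $Exp^{SmLink-b}_{SPGS,\mathcal{A}}$: $LChal_b$ returns only honestly generated signatures, and that game offers no $SignHU$ oracle.

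What your route buys is an explicit bound for the distinct-signer case that depends only on the honest members' own coins. What it costs is that the lemma then rests on binding and key entropy rather than on its stated hypothesis. You already flag this; state the lemma with those assumptions instead of presenting it as a consequence of $GS$ non-frameability. Binding is among the main theorem's hypotheses anyway.

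The paper's route nominally uses only the stated hypothesis. However, the distinct-signer case it leaves implicit does not follow from $GS$ non-frameability alone. It additionally needs, for example, distinct honest keys (essentially your binding argument) or high min-entropy of $GS$ signatures.
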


All the signatures of a sponsored-member include the sponsor signature on their public keys that they have obtained during the Join algorithm. Since $GS$ is non-frameable, no one can forge a valid sponsor signature to break the linkability. Therefore, the adversary can always link the signatures of a sponsored-member with high probability.

\subsection{Concrete construction}
\label{SPGSInit}
To instantiate our construction, we use the BBS group signature scheme \cite{BBS} for $GS$. BBS is a static group signature scheme with known group size in the key generation phase. We set $n$ to be large enough to accommodate the maximum group size. 
For commitment scheme, we consider the following Pedersen commitment scheme (on EC) that allows re-randomization and it is as follows: $com=C.Com(x)=g^xh^r$, where $r=0$ for simplicity, and $C.HCom(com, r')= com. g^{r'}=g^{x+r'}=C.Com(x+r')$. 
For $NIZK$, we consider a sigma protocol-based NIZK (on EC) for discrete log using Fiat-Shamir transform given in the random oracle model in Algorithms \ref{ZK} and \ref{ZK2}. Note that our NIZK in the sponsored-member signature takes a message $m$ as input (which can be seen as a signature of knowledge) thus in the NIZK algorithms given in \ref{ZK} and \ref{ZK2}, we also include $m$ in the public parameters. $m$ is given as input to $\mathcal{H}$ when computing the challenge. We give the NIZK for the case that $Sm$ re-randomizes its key to get a new key after getting the sponsor signature. This allows $Sm$ to use different keys for different applications. Note that in this case, all the signatures of $Sm$ can still be linked to each other, since all of them have been originated from the same key. 
\vspace{-1em}

\begin{algorithm}
\small
\caption{NIZK.Prove}
\label{ZK}
\SetKwInOut{Input}{Input}
\SetKwInOut{Output}{Output}

 \Input{Witness $w=(sk_{Sm}, r, r')$ and statement $x= [\exists (sk_{Sm}, r, r'): Pk'_{Sm}=C.HCom(Pk_{Sm}, r') \land sk'_{Sm}=sk_{Sm}+r' \land (Pk_{Sm}, r)=C.Com(sk_{Sm})] $, the common input $\rho=(Pk_{Sm},Pk'_{Sm}, m)$, where $m$ is an arbitrary message, and the hash function $\mathcal{H}$}
 \Output{The proof $\pi$}
 \begin{algorithmic}[1]
     \State  Choose two random values $a$ and $b$ in $Z^*_p$
     \State Compute the commitment $A=g^a \in G$ and $B=g^b \in G$
     \State Compute the challenge $c=\mathcal{H}(A,B,Pk_{Sm}, Pk'_{Sm},m)$ 
     \State Compute the response $d_1=a+c \times r'$ and $d_2= b+ c \times sk_{Sm}$\\
    \Return $\pi=(A,B,d_1, d_2)$    
 \end{algorithmic}   
\end{algorithm}
\setlength{\intextsep}{-1.2em}

\begin{algorithm}
\small
\caption{NIZK.Verify}
\label{ZK2}
\SetKwInOut{Input}{Input}
\SetKwInOut{Output}{Output}

 \Input{The proof $\pi=(A,B,d_1, d_2)$ and the common input $\rho=(Pk_{Sm},Pk'_{Sm}, m)$}
 \Output{bit $0/1$}
 \begin{algorithmic}[1]
     \State Compute the challenge $c=\mathcal{H}(A,B,Pk_{Sm}, Pk'_{Sm},m)$ 
     \State Check $b:g^{d_1} \overset{?}{=} A \times (\frac{Pk'_{Sm}}{Pk_{Sm}})^{c}$ and $b':g^{d_2}\overset{?}{=} B\times Pk^{c}_{Sm}$\\
     \Return $b\land b'$    
 \end{algorithmic}    
\end{algorithm}

\vspace{10pt}
\section{Application of $SPGS$ in smart environments}
\label{Application} 
Providing spontaneous join and temporary access to guests while ensuring their privacy is an overlooked problem in smart environments. In existing systems that support guests, the infrastructure management (IM) manages the join and access of the guests by dedicated guest accounts that are created ahead of time for all of them (e.g. NIST SP-1800-36 \cite{NIST24}, Kerberos). These accounts are protected by guest username and passwords that are either same for all guests which allows the guests to be anonymous but their access to the resources would be limited, or they have to contact IM to get customized guest accounts which will reveal their identity to the system. These approaches are not suitable for smart environments because of their rigidity and lack of security and privacy. We look into the problem of providing privacy preserving temporary guest accesses (spontaneously) in a smart building, by enabling a host, who is a registered entity and knows the guest, as a mediator to enroll the guest into the system. 
This scenario models many real-life settings including visitors to smart condos, work environments (such as universities and corporations), hospitals, factories, etc.

We design an {\em anonymous guest access token}, called AGAT, using  a SPGS signature scheme that allows a host to act as a sponsor and enroll a guest spontaneously to the smart building. AGAT allows the guest to show its association to a host while the identity of the guest and host (and hence their relation) remains anonymous. We propose  $k$-$AGAT$ that allows host linkability and guest linkability; the host linkability is needed to bound the number of guest accesses per host to at most $k$ (enforcing rate limit policy), and guest linkability is needed to trace the guest accesses to protect the system from the misbehaving guests who are second level members enrolled without contacting IM. Additionally,  $k$-$AGAT$ allows IM to open the tokens of misbehaving guests and identify the host who has enrolled the guest into the system to make the host accountable for their actions. Host can further help to identify the misbehaving guest.  Furthermore, a $k$-$AGAT$ token allows the host and environment both enforce policies on the guests, that is, the token issued by host also embeds some information about the guest access, such as duration of access or expiry time of token, that will be checked and enforced by the verifier (i.e., the environment).  

We note that alternative naive approaches such as providing guest certificates or simply signed guest tokens by host do not comply with our design goals. Certificates require registration and contacting certificate authorities which can become a complex and prolonged process. Also, the simple signed tokens by hosts do not allow host privacy, accountability, and bounded guest access.

In the following, we give a system and threat model for guest accesses in a smart building and define our security requirements and design goals. We then propose $k$-$AGAT$ that is used as the main building block to enable the guests join the smart building and get temporary access to the building resources.

\subsection{System and threat model}  
We consider a smart building consisting of hosts equipped with  smart devices such as mobile phones and computers, and a building with shared resources such as CCTV cameras, lightings, buzzers, printers, and so on. 
We consider a centralized architecture in which the smart building has a infrastructure management system $IM$ consisting of: 
(i)  Registration authority $RA$ that registers the hosts and performs the authentication tasks. $RA$ also acts as a token authority for $k$-$AGAT$. 
(ii) Access management authority ($AA$) which implements and enforces the access control mechanism; it checks and controls the hosts and guests accesses. $AA$ also acts as the verifier for $k$-$AGAT$. 
We assume that each resource is either capable of performing cryptographic operations, for example for verifying the tokens, or there exists an edge device that manages the resource accesses. Our goal is to enable the guests (visitors to buildings) to get spontaneous privacy-preserving temporary access to the building shared resources through hosts.

 \textbf{Threat model.} We assume the guest and host mutually trust each other in the sense that the host takes the responsibility of guest enrollment and enabling their resource accesses.  However, with respect to the IM, the guest and host may deviate from the protocol arbitrarily and collude to disrupt with the normal operation of building. 
We assume $RA$ is trusted, but $AA$ is semi-honest and wants to learn the about the host and guest identities, their accesses, an their relations.

\textbf{Security requirements and design goals.} We consider the following properties and design goals:

\begin{itemize}
    \item \textbf{Security:} Only eligible users (hosts and guests) can join and get access to the available resources temporary. 
    \item \textbf{Privacy:}  $AA$ do not learn the identities and relation of honest guests and hosts, and therefore it cannot link the access profiles to a specific host or guest identity. 
    \item \textbf{Accountability:}  The identity of the host who sponsors a misbehaving guest can be revealed.
    \item \textbf{Zero-touch enrollment:} The guests can join and get access to building shared resources spontaneously, without any required set-up or registration with $IM$.
    \item \textbf{Collaborative policy enforcement:} Both the $IM$ and host can enforce their policy preferences on the guest accesses including bounded guest accesses.
    \item \textbf{Efficiency:} The performance including the run time and storage size of the hosts, and guests are reasonable for real-world scenarios. 
\end{itemize}

\subsection{$k$-times Anonymous Guest Access Token ($k$-AGAT)}
\label{AGAT}
$k$-$AGAT$ allows a  host $H$
to enroll with a Token Authority $TA$, and  
generate a token for a guest $G$ (i.e. an unregistered entity in the system), 
such that a verifier $V$ who knows the system parameters 
can verify the guest token
without learning the actual identity of the host, the guest, and their relation. In $k$-$AGAT$ hosts can dictate their policy preferences as part of the token content $m$ including the token issuance and expiry time. Guests present the tokens to the verifiers such that tokens show the guest association with a valid host, and 
tokens can be linked together an 
\textit{counted} by  the verifier $V$. This counting allows the verifier bound the number og guest accesses per host.
In the case of any misbehavior of the host or guest,  
the token can be traced back to the host, by the $TA$, who can further help to identify the guest.

\textbf{Entities.} $k$-$AGAT$ consists of the following entities:
\\
-- \textit{Token authority (TA)} is a trusted entity in the system who runs the set up and enrolls the hosts. It can also open the tokens and identify the host issuer in case of misbehavior.\\
-- \textit{Host (H)} is a registered entity in the system who can issue tokens to guests while remaining anonymous. We assume, upon  enrollment, each host receives a private (and public) key.\\ 
-- \textit{Guest (G)} is an unregistered entity in the system which is only known to the host and it wants to get temporary access to the system and use its services. Guest and host mutually trust each other. \\
-- \textit{Verifier (V)} is a registered entity in the system who is known by the $TA$ and it receives the system public parameters,
and checks the validity of the tokens. In our setting, only the verifier who is chosen by the host can verify the validity of a guest token, but in general the tokens can be publicly verifiable.  

We note that the host and guest are not trusted with respect to the system, they may deviate arbitrarily and collude with each other. Additionally, we assume the verifier is semi-honest and wants to learn the host and guests identities, and their relations.

In the following, we give the security requirements and the $k$-$AGAT$ construction. The formal definitions and descriptions are given in Appendix \ref{AGATSec}.

 \textbf{Security requirements}
 A $k$-$AGAT$ scheme has the following properties: correctness, unforgeability, anonymity, and traceability.

\begin{itemize}
    \item \textbf{Correctness} ensures that AGAT token generation is correct, if a token that is generated by an honest host (within the $k$ bound) and  presented by the honest guest to the verifier will pass the verification.
        \item \textbf{Unforgeability} ensures that no one can issue a token on behalf of an honest host without knowing their private keys.
        Additionally, it captures the fact that the adversary cannot generate more than $k$ valid tokens that are accepted by the verifier.  
    \item \textbf{Anonymity} ensures that the identity of the host, guest, and their relation cannot be learned by the verifier.  
    We define this property as below: 
    given two guest identities $G_0$ and $G_1$ and two host identities $H_0$ and $H_1$,
    the probability of linking the guests to hosts is negligible.
Note that this definition is strong and implies both the guest anonymity and host anonymity since an adversary who can distinguish either $G_0$ from $G_1$, or $H_0$ from $H_1$ by seeing the issued and presented tokens, can also distinguish the identity of the host and guest from the challenged token, and find the the guest-host relation.
The anonymity definition does not imply host unlinkability and guest unlinkability. This is intentional since (i) the tokens issued by the same host should be linked together in order to bound the  tokens originated from the same host, and (ii) the guest is not trusted and we want to allow the verifier to trace its access behavior without knowing their identities and their relation with any given host.  
  \item \textbf{Traceability} protects the system by allowing  
  the token authority $TA$ to reveal the actual identity of the host $H$ who has issued the guest token. 
\end{itemize}

\textbf{Our construction} is generic and
 uses the generic $SPGS$ construction proposed in Section \ref{SPGSconstruction}, 
and a IND-CPA secure public key encryption scheme $E$ as its building blocks (please see Section \ref{Prelim} and \ref{SPGS} for their algorithms and security properties). 
 
In the nutshell, in our construction of $k$-$AGAT$, host act as the sponsor and guest acts as the sponsored-member of a SPGS scheme. Host uses a sponsor signature to sign the public key of the guest (which is a commitment to a random value that is used as a private key by guest) and the token contents, and issues a guest token. The guest uses a sponsored-member signature and transforms the token to a new one by proving that it knows the private key. 

The important part in our scheme is that, 
 SPGS provides complete anonymity that captures unlinkability for sponsors, but $k$-$AGAT$ does not ensure complete anonymity, it allows host linkability. For this, we follow the approach of \cite{belenkiy2009randomizable} and use pseudonyms together with our sponsored group signature to relax the anonymity of SPGS. Pseudonyms are generated by $TA$ and shared with hosts during their enrollment.
 Host will encrypt their pseudonym when it issues a token. The verifier can decrypt the pseudonym ciphertext 
 and link the tokens originated from the same host (without learning the host identity). The scheme is summarized as below (see Appendix \ref{AGATSec} for details and formal description):
 
 \begin{itemize}
     \item \textbf{Setup phase.} $TA$ maintains a pseudonym list $L_w$ that initially consists of dummy values. It also generates the SPGS private and public parameters.
     \item \textbf{Host enrollment} When a host joins, $TA$ and host run the SPGS join algorithm for a sponsor, at the end of which host receives valid private and public keys. $TA$ chooses a pseudonym $w_h$ for the host and adds it to the list $L_w$. After a batch of pseudonyms are added, it  shares the list with the the verifier. Verifier do not see the link between pseudonyms and identity of the hosts and only use the list $L_w$ to check the validity of the pseudonyms.
     \item \textbf{Guest token issuance.} To issue a token, the host and guest interact according to the SPGS join algorithm for a sponsored-member which outputs a private and public key to the guest. Additionally, host decides on the token contents $m$ such as expiry time, and  signs $m$ (that includes the public key of the guest) using the sponsor signature algorithm of SPGS. It also encrypts their pseudonym $w_h$ using the public key of the verifier. Note that to bind the SPGS signature and the pseudonym ciphertext and prevent mix and match attacks, the host will  choose a random value $r$ and  use it twice: (i) it concatenates $r$ with $m$ and generate the sponsor signature on $m||r$, and (ii) encrypts $w_h||r$. Both the sponsor signature and the pseudonym ciphertext will be included in the guest token.
     \item \textbf{Guest token presentation.} The guest will receive the guest token and transform the token to a new one by signing the token using the sponsored-member signature, and present the transformed token to the verifier. 
     \item \textbf{Guest token verification.} The verifier will verify the SPGS signatures, decrypts the ciphertext and checks whether $w_h$ is among the valid pseudonyms $w_h \in L_w$ and if it is valid it keeps a counter for $w_h$ to count the number of issued tokens by the same host. If the guest tokens originated from $w_h$ exceeds the limit $k$ (or the value less than $k$ determined by the host within the token) the verifier rejects the token.
 \end{itemize}
 
 Although to relax the anonymity one can use a different approach and let the hosts commit to their pseudonyms and prove their validity to the verifier through NIZK\footnote{This approach is publicly verifiable and one can use it to relax the assumption about the verifier's trustworthiness.} without sharing the list $L_w$ with the verifier \cite{belenkiy2009randomizable}, we chose to share $L_w$ with the verifier directly for checking the validity of pseudonyms to keep our construction simple and efficient. We show that as long as the verifier is trusted this approach ensures a secure $k$-$AGAT$ construction with the tradeoff that the pseudonyms should be added in batches and this can create a delay between when the host enrollment and when it can issue valid tokens. Since we do not have any restriction on hosts' enrollment time this approach is reasonable.

\textbf{Concrete construction.}
To instantiate our construction, we use the SPGS  concrete construction of section \ref{SPGSInit}. 
For the encryption scheme $E$, we  use the EC-based Elgamal encryption scheme.

\textbf{Security analysis.}
Below we give a theorem and a proof sketch for our construction of $k$-$AGAT$.
\begin{theorem}
    Our $k$-$AGAT$ construction ensures correctness, unforgeability, $\frac{1}{b_n}$-anonymity, and traceability assuming $SPGS$ satisfies correctness, non-frameability, sponsor anonymity, sponsored-member privacy, sponsor traceability, and $E$ is an IND-CPA secure encryption scheme.
\end{theorem}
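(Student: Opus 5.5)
We prove the four properties separately, each by a reduction to one security property of $SPGS$ (as established by the lemmas of Section~\ref{SPGS}) or to IND-CPA security of $E$, following the same hybrid-game style used for $SPGS$.

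\textbf{Correctness} follows directly from $SPGS$ correctness together with correctness of $E$. An honest host's sponsor signature on $m\|r$ verifies. The guest's sponsored-member signature verifies. Decryption of the pseudonym ciphertext returns $w_h\|r$ with $w_h\in L_w$. As long as the host's counter is below $k$, the verifier accepts.

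\textbf{Traceability} reduces to sponsor traceability. A valid token contains a valid sponsor signature (and a valid sponsored-member signature), so $Open$ returns a sponsor identity. Any adversary making $TA$'s opening fail therefore yields an adversary winning $Exp^{SpTrace}$, with the reduction simulating $L_w$ and the encryption key itself.

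\textbf{Unforgeability} splits into two cases.
\begin{itemize}[leftmargin=*]
\item[(a)] The adversary produces a token opening to an honest host it never received a signature from. A token whose sponsor signature on $m\|r$ (or on a guest key) was not obtained from the honest host breaks non-frameability with $attr=Sp$. A transformed token for an honestly enrolled guest whose key the adversary lacks breaks non-frameability with $attr=Sm$. Here the reduction simulates issuance queries via $SignHU$ and encrypts pseudonyms itself.
\item[(b)] The adversary produces more than $k$ accepted tokens. The $(k{+}1)$-st token must carry an encryption of some $w_h\|r$ with $w_h\in L_w$, bound to the signed $r$. If $w_h$ belongs to a corrupted host, the verifier's counter for $w_h$ rejects. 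If it belongs to an honest host, either the signature is fresh, which is case (a), or the adversary mixed a ciphertext with another signature. The shared randomness $r$ rules out that mix-and-match, since $r$ is signed, so reusing an honest ciphertext forces reuse of that exact signed message, and the counter catches it.
\end{itemize}

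\textbf{Anonymity} is the main obstacle, and explains the $\frac{1}{b_n}$ loss. We use a hybrid sequence.
\begin{enumerate}
\item In the challenge token, replace the encryption of $w_{h}\|r$ by an encryption of a dummy. This is indistinguishable by IND-CPA security of $E$, since the verifier's secret key is never used by the adversary against honest ciphertexts. However, the semi-honest verifier does decrypt, so we must argue at the level of the pseudonym list instead.
\item Because $L_w$ is released in batches of size $b_n$ and hides the pseudonym-to-identity mapping, the decrypted $w_h$ only reveals membership in a batch. An adversary can therefore distinguish $H_0$ from $H_1$ by at most guessing within a batch, which costs $\frac{1}{b_n}$.
\item Swap the sponsor signature from $H_0$ to $H_1$ using sponsor anonymity, via one $AChal_b$ query for the token signature and one for the guest-key signature. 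Since only one challenge is allowed per game, this step uses two hybrids.
\item Swap the sponsored-member part from $G_0$ to $G_1$ via sponsored-member privacy, using $PChal_b$.
\end{enumerate}
The hard step is making step 2 precise: defining the anonymity game so that the verifier's view of $w_h$ leaks only batch membership. We will also need to check that the $Open$ oracle restrictions in the $SPGS$ games suffice to simulate the token-opening queries the $k$-AGAT adversary may ask. Summing the hybrid gaps gives advantage at most $\frac{1}{b_n}$ plus negligible.
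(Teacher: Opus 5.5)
Your outline follows the paper's sketch item by item. Correctness (once the host's batch of $L_w$ has been released) and traceability are fine. But two of your steps fail, and the paper's unforgeability item (ii) and anonymity item (v) assert the same things without support.

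First, in unforgeability (b), reusing an honest ciphertext forces reuse of $r$, not of the signed message. The value $r$ travels in the clear in $\mathcal{T}=(m,r,\sigma,C)$ (the $Issue$ oracle returns it), and $Sign^{Sp}$ is a group signature any enrolled host can compute on any message. So a corrupted host that sees an honest token with $C=E.Enc(Pub_V,w_h\|r)$ can do the following: sign $m'\|r$ with its own key to get $\hat\sigma$, add a sponsored-member signature $\hat\sigma'$ on $m'\|r$ from a guest it enrolled itself, and submit $(m',r,\hat\sigma,C,\hat\sigma')$. Then $AGAT.Verify$ recovers $w_h\|r$, the test $r'=r$ holds, both $SPGS$ signatures verify, and the token is charged to the honest host's counter. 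Since $\hat\sigma$ opens to the corrupted host, this is not your case (a), and the corrupted host's own counter never moves. Your split also omits the dummy entries of $L_w$, which $AGAT.Setup$ even places in $pp$. A corrupted host can encrypt each dummy under a fresh $r$ and obtain $k$ further tokens per dummy. The rate limit therefore needs a mechanism binding the encrypted pseudonym to the signer's credential, which the construction does not have. One example is a proof that $C$ encrypts the pseudonym certified at enrollment, as in the approach of Belenkiy et al. that the paper mentions.

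Second, in anonymity, steps 1 and 2 presuppose different adversaries. If, as the threat model and the $\frac{1}{b_n}$ term indicate, the adversary is the verifier holding $Priv_V$, then IND-CPA says nothing about $C$ and step 1 is not a valid hybrid. If it is not the verifier, step 1 already hides $w_h$ and step 2 is moot. Moreover, step 2 does not bound the distinguishing advantage by $\frac{1}{b_n}$. The verifier schedules $AddH$ and sees each released batch, so it can enroll $H_0$ and $H_1$ in different batches. Decrypting the challenge $C$ and locating $w$ in $L_w$ then reveals $b$ with certainty. It can also learn all pseudonyms in a batch other than those of $H_0$ and $H_1$, via $AddC$ and by decrypting $Issue$ outputs. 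If instead $H_0$ and $H_1$ share a batch, the shuffle hides which of $w_{H_0},w_{H_1}$ is whose, and the pseudonym contributes nothing. A guessing probability for a pseudonym is simply not a bound on $|\Pr[b'=b]-\frac{1}{2}|$. The pseudonym component is harmless only if the game forces the challenged hosts into the same released batch, and then it adds no $\frac{1}{b_n}$ term.

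The $Open$ check you postponed also fails. The $k$-AGAT $Open$ oracle only excludes the challenge token itself, and $AGAT.Open$ reads the sponsor off $\sigma$ (or $\sigma'$) without looking at $C$. So opening $(m,r,\sigma,C^*,\sigma')$, with $C^*$ a re-randomization of the ElGamal ciphertext $C$, returns $H_b$; the definition must exclude tokens reusing the challenge's $\sigma$ or $\sigma'$.

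Finally, your two-hybrid swap in step 3 cannot be simulated from the games in Figure~\ref{experiments}. At least one intermediate reduction must supply a challenged host's signature on the arbitrary message $m\|r$ outside the single $AChal_b$ call. Step 4 likewise needs $H_1$'s signature on $m\|r$ while $H_1$ stays honest for $PChal_b$. Neither $Exp^{SpAnon}$ nor $Exp^{SmPriv}$ offers a signing oracle, so you need a stronger (signing-oracle or multi-challenge) anonymity notion for $SPGS$.
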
 

\textit{Proof sketch.} We omit the proof due to the space and just give the informal arguments. \\
\textit{Correctness} is satisfied due to the  correctness guarantee of the sponsored group signature scheme $SPGS$ and the public key encryption scheme $E$.
\textit{Unforgeability} follows from (i) the non-frameability of the sponsored group signature scheme which prevents the adversary from generating a valid sponsor signature $\sigma$ for a guest token  without knowing the private key of the host, (ii) the randomness $r$ used in both signature $\sigma$ and ciphertext $C$ prevents the adversary from mixing and matching 
different tokens to form a new token. 
\textit{Anonymity} follows from (i) anonymity of the sponsored group signature scheme which does not let the verifier learn whether the sponsor signature $\sigma$ in the token has been generated by host $H_0$ or $H_1$, (ii) sponsored-member privacy of SPGS which prevents the verifier to learn the identity of the guest from the sponsored-member signature used in the token, (iii) the IND-CPA security  of $E$ which does let the verifier to learn the pseudonym of the host, 
(iv) the fact that the  registered pseudonyms are independent of the real identity of the hosts, (v) pseudonyms are added to $L_w$ in batches of size $b_n$ and the verifier who controls the join of hosts and sees the latest list $L_w$ cannot guess the pseudonym of an honest host with probability greater than $\frac{1}{b_n}$. 
\textit{Issuer traceability} is ensured due to the traceability of SPGS.

\subsection{Secure guest access protocol} 

Our protocol uses the $k$-$AGAT$ construction presented in previous section directly which will be integrated to ACE-OAuth \cite{ACE-OAuth} as below. 

\begin{itemize}
    \item The host enrolls with $IM$ (more precisely $RA$) and receives the private and public parameters of the system and installs the required software and configurations by authenticating itself using the sponsor signature.
    \item When a guest visits the building, the host provides the url to install the required software and guest configuration which are stored in a public repository. The host then generates a $k$-$AGAT$ guest token by interacting with the guest.
    \item The guest sends its access request to $AA$ (which is the authorization server in ACE-OAuth) together with the transformed guest token it has received from the host. 
    \item $AA$ checks the access rights of the guest and also the access policy preferences dictated in the token by the host, such as the expiry time and the access limit. If the guest token is verified, the guest receives an access token (a signed message by the service provider, also called a proof of possession (PoP)), that it can send to the resource or edge device to get access to the resource. 
\end{itemize}

\noindent Note that until the guest token is valid the guest can send requests to $AA$. Depending on the access requests and the resources availability, the PoP can have different and possibly shorter expiry time.

\textbf{Security and privacy analysis.}
Below, we give an informal analysis of the security and privacy requirements of the scheme.

\begin{itemize}
   \item \textbf{Security:} Only eligible hosts who have registered with the RA and guest who are associated with a registered host can access the resources. This is because of non-frameability of sponsor signatures in SPGS (that is used for host authetication), and due to the unforgeability of $k$-$AGAT$, which ensures that only registered hosts who are first-level members of the building can generate valid tokens for guests (that will be accepted). 
    \item \textbf{Privacy:}  $AA$ do not learn the identities and relation of honest guests and hosts due to the anonymity of $k$-$AGAT$ and sponsor anonymity of SPGS. 
    \item \textbf{Accountability:}  The traceability of $k$-$AGAT$ ensures that the identity of the host associated with the misbehaving guest is revealed. 
    \item \textbf{Zero-touch enrollment:} The guests join through the host, and they do not need to interact with $IM$ for registration. Also, they do not require any trusted setup.
    \item \textbf{Collaborative policy enforcement:} Host can encode its preferences of access rights as part of the token content, which will be checked by $AA$. In general, $AA$ determines the guest access rights and also the policies that host can choose from; for example the validity duration of the token, where $AA$ can determine the maximum duration, and host can choose the exact duration. Additionally, both host and $AA$ can limit the usage of the token. The guest can use the token at most $k$ times (the host can determine a value lower than $k$ when issuing the token to the guest). 
    \item \textbf{Efficiency:} See the following section for implementation and evaluation results. 
\end{itemize}


\subsection{Implementation and evaluation}
\label{Implementation}
In the following, we give the implementation details, and evaluation results. The source codes are publicly available in our Github repository (https://github.com/shwdsun/GuestOnBoarding).

\textbf{Objective.}
Our goal is to measure the performance and overhead introduced by the cryptographic primitives used in the guest access protocol  compared to a non-private baseline scheme. The baseline scheme uses a regular digital signature to generate a token for the guest instead of using the $k$-$AGAT$ token of Section \ref{AGAT}. Specifically, we evaluate the runtime and storage for host, guest, and $IM$. 

\textbf{Implementation Setup.}
Our experiments were conducted on Ubuntu Linux 6.11.0-24-generic with Python 3.10.17 (conda-forge). The test machine features a 13th-generation Intel Core i5-13600K processor (14 cores, 20 threads, 3.0 GHz base frequency, 24 MB L3 cache) with 32 GB RAM. All implementations used Python 3.10.17 with the following cryptographic libraries: Charm-crypto 0.50\cite{charm-crypto}, PBC 0.5.14, GMP 6.3.0, and OpenSSL 3.3.2, managed in a dedicated Conda environment

\textbf{Implementation Details.} 
our cryptographic components integrates schemes from Charm-Crypto\cite{charm-crypto}( e.g., BBS, accessed through wrappers) with our own standalone implementations (i.e. ElGamal and NIZK).
For group signatures, we implemented the BBS scheme \cite{BBS} with MNT224 curve which requires the number of hosts to be predetermined - we set this to 100 for our implementation. The EC-ElGamal encryption 
implements 28-byte message padding for secure encoding. We use
ECDSA for a regular signature scheme. 
which implements a standard signature scheme with state management. For $NIZK$, we implemented Schnorr NIZK scheme for discrete log 
with Fiat-Shamir transform (see Section \ref{SPGSInit}.
All primitives (except BBS) are implemented over secp256k1 curve.\\

\begin{table}
\centering
\scriptsize
\begin{threeparttable}
\caption{Computation and Storage Cost in Setup Phase}
\label{tab:cost_comparison}
\begin{tabular}{lccc}
\hline
\textbf{Scheme} & \textbf{Role} & \textbf{Computation (ms)} & \textbf{Storage (bytes)} \\
\hline
\multirow{3}{*}{Scheme w/o Privacy} 
& Guest & - & - \\
& Host\tnote{a} & 0.359 & 138 \\
& IM\tnote{b} & 1.052 & 368 \\
\hline
\multirow{3}{*}{Our Scheme} 
& Guest & - & - \\
& Host\tnote{c} & 0.389 & 672 \\
& IM\tnote{d} & 1.213 & 14438 \\
\hline
\end{tabular}
\begin{tablenotes}
\item[a] E.KeyGen
\item[b] DS.KeyGen (×2)
\item[c] E.KeyGen, E.Dec
\item[d] E.KeyGen, DS.KeyGen (×2), GS.KeyGen, GS.Join, E.Enc, BN.Setup
\end{tablenotes}
\end{threeparttable}
\vspace{-2em}
\end{table}

\setlength{\intextsep}{-10pt}

\begin{table}
\scriptsize
\centering
\begin{threeparttable}
\caption{Computation and Storage Cost in guest access phase}
\label{tab:onboarding_comparison}
\begin{tabular}{lccc}
\hline
\textbf{Scheme} & \textbf{Role} & \textbf{Computation (ms)} & \textbf{Storage (bytes)} \\
\hline
\multirow{3}{*}{Scheme w/o Privacy} 
& Guest\tnote{a} & 0.730 & 230 \\
& Host\tnote{b} & 1.410 & 1049 \\
& IM\tnote{c} & 2.802 & 362 \\
\hline
\multirow{3}{*}{Our Scheme} 
& Guest\tnote{d} & 0.733 & 230 \\
& Host\tnote{e} & 12.347 & 1605 \\
& IM\tnote{f} & 17.378 & 412 \\
\hline
\end{tabular}
\begin{tablenotes}
\item[a] Randomize key, NIZK.Prove
\item[b] 
DS.KeyGen, DS.Sign
\item[c] 
DS.Verify, DS.Sign, NIZK.Verify
\item[d] Randomize key, NIZK.Prove
\item[e] 
DS.KeyGen, GS.Sign, E.Enc
\item[f] 
GS.Verify, E.Dec, DS.Sign, NIZK.Verify
\end{tablenotes}
\end{threeparttable}
\vspace{-2em}
\end{table}

\vspace{-5pt}

\textbf{Protocol Performance Evaluation.}
We evaluate the computational and storage cost. 
Note that all byte measurements use library serialization methods, resulting in larger sizes than theoretical values due to encoding overhead. Our evaluation reveals the performance trade-offs between privacy and efficiency across both setup and on-boarding phases.

\textit{Setup Phase:} As shown in Table~\ref{tab:cost_comparison}, the setup phase introduces minimal computational overhead but significant storage requirements for IM. The host's overhead 
is related to BBS group keys and pseudonyms. The IM's storage overhead is particularly pronounced due to maintaining member keys for all hosts in the system. Notably, both the setup time and storage for the BBS scheme exhibit approximately linear growth with the number of hosts.

\textit{Guest access phase:} Table~\ref{tab:onboarding_comparison} demonstrates the performance impact of our proposed scheme. While guest devices experience minimal overhead, both host and IM face significant computational costs due to BBS group signature operations and ElGamal encryption/decryption. In terms of storage, the host requires additional space for the larger BBS signature and group public key, while IM and the guest storage requirements remain relatively modest.

Despite these overheads, the costs remain reasonable
for practical
deployments. The millisecond-scale computation times are negligible compared to typical network latencies, and the kilobyte-scale storage are well within modern smart device capabilities.

\section{Conclusion}
\label{Conclusion}
We proposed sponsored group signature (SPGS), a variation of a group signature that supports two levels of signers. The level 1 signers (sponsors) can enroll level 2  signers (sponsored-members) while allowing both members sign anonymously and ensure their signatures remain unforgeable. In SPGS the sponsored-signature can be linked together and in case of misbehavior the identity of the sponsor can be revealed by the group manager. We formalized SPGS using the game-based model and proposed a generic construction using a commitment, a partially dynamic group signature, and a knowledge sound NIZK scheme. 
We also showed the application of SPGS in constructing a guest access token, $k$-$AGAT$, for a smart building scenario in which a host provides spontaneous privacy preserving temporary access to its guests to access the building shared resources.  The $k$-$AGAT$ construction 
$k$-$AGAT$ allows the verifier to bound the number of guest tokens that can be presented by the guest of the same host. We proposed a secure guest access protocol using  $k$-$AGAT$ and
showed its efficiency by providing a proof-of-concept implementation of its cryptographic components. 
Extending our work to a multi-building scenario and different levels of guests is an interesting future work.

\bibliographystyle{IEEEtran}
\bibliography{PoPETS-Nov30/sample-base.bib}

\appendix

\section{Group signature: security model and definitions}
\label{GSModel}
Bootle et al. \cite{bootle2020foundations} proposed a framework to model and define the security requirements of a dynamic group signature scheme. We adapt their model to our setting and define the model and security properties that are required for a group signature scheme that is used in our SPGS generic construction. Also, these definitions are used as a basis to define the SPGS security model.

The partially dynamic group signature $GS$ composed of the following algorithms: 

\begin{itemize}
    \item $GS.Setup(1^\lambda, setpp)=(pp, msk)$, is run by $\mathcal{M}$ and it takes the security parameter $\lambda$ and set up parameters $SP$, and outputs the public parameters $pp$ and the manager's secret key $msk$. 
    \item $GS.Join(id_u,param_\mathcal{M})$ is an interactive algorithm that is run between a $\mathcal{M}$ and the user $id_u$ who desire to join the group (we assume all interactions takes place over a secure channel similar to \cite{buser2019dgm}).
    \item $GS.Sign(m, param_{u})= \sigma$, this algorithm is run by the member $id_u$ with its private parameters $param_{u}$, and outputs a valid anonymous signature $\sigma$.
    \item $GS.Verify(m,\sigma,pp)=0/1$, is run by the verifier and takes as input the message $m$, the signature $\sigma$, and the public parameters of the system $pp$, and outputs 1 if the signature is valid, or 0 otherwise. 
    \item $GS.Open(msk,\sigma)=id_u$, is run by $\mathcal{M}$ to reveal the identity of the signer of $\sigma$. 
\end{itemize}

We consider that the group manager $\mathcal{M}$ act as both issuing authority and opening authority, and is assumed to be partially corrupted (i.e., its private state can be leaked).  Also, we do not consider the $Judge$ algorithm 
since $\mathcal{M}$ follows the protocol correctly and traces the signatures if needed. Additionally, we consider that the group is partially dynamic (the entities can join at any time). 

\textbf{Security requirements.}
We consider that $GS$ ensures the following security requirements:

\begin{itemize}
  \item \textbf{Correctness} ensures that if an honest member joins and generates a signature, their signature will be verified.
  In the partially dynamic setting where the members can join at any time, these two conditions should hold for all honest members under any schedule under which the members join the group. Therefore, to formalize correctness we consider an adversary who can control users' joining  process and chooses the messages and the identity of the signers. We require that for any messages and identities that the adversary chooses, the signatures generated can be verified correctly.
   \item \textbf{Non-frameability} ensures that no one can frame an honest member; that is, even if group members collude (all except at least one member) and even the group manager's private state is leaked, they cannot generate a valid signature that is opened to the honest member who has not generated the signature. \\
  This notion of non-frameability implies unforgeability of the group signature, since the adversary can corrupt and learn all keys except for the honest member that it generates the signature for.  
  \item \textbf{Anonymity} states that no one can distinguish the member who has generated a given group signature from $id_{u_0}$ and $id_{u_1}$ better than a random guess even if they have seen the output of the $Open$ algorithm for some signatures. $\mathcal{M}$ is considered trusted for anonymity as $\mathcal{M}$ can always open the signatures and learn the identity of the sponsors. This definition
  captures unlinkability as well.
  \item \textbf{Traceability} protects the group manager by ensuring that any valid signature will be opened to the identity of a group member. This is achieved even if the group manager's private state is leaked (i.e., $\mathcal{M}$ is considered partially corrupted). 
  \end{itemize}

\textbf{Oracles.} To formally define the security requirements, we use the following oracles (given in Figure \ref{GSoracles}).

\begin{itemize}
    \item $\mathbf{AddHU(id_{u})}$: This oracle allows the adversary to add honest users through honest execution of $GS.Join$ without learning their private parameters.
    \item $\mathbf{CorrU(id_{u})}$ allows the adversary to corrupt the users and learn  both their communication transcript when they run $GS.Join$ algorithm honestly and their private parameters (including signing keys). 
    \item $\mathbf{AddCU(id_{u})}$: This oracle allows the adversary to add corrupted users to the group. The adversary can deviate from the $GS.Join$ protocol and send arbitrary messages to $\mathcal{M}$ see its output $out_\mathcal{M}$. 
    \item $\mathbf{AChal_b(pp, m, id_{u_0}, id_{u_1})}$: This is a left-right oracle for defining anonymity. It takes as input the group public parameters $pp$, a message $m$, and two honest users identities $id_{u_0}$ and $id_{u_1}$, and returns a group signature
    on the message $m$ using the private parameters of $param_{u_b}$ for $b \leftarrow \{0,1\}$. The adversary can call this oracle once.
    \item $\mathbf{SignHU(m,param_{u})}$ This oracle is used by the adversary to obtain signatures for an honest user whose private parameters are not known by the adversary. It returns a signature $\sigma$ on the message $m$ using the $GS.Sign$ algorithm with the private parameters of $id_{u}$.
    \item $\mathbf{Open(msk, m, \sigma, pp)}$ returns the identity of the sponsor $id_{u}$ who has generated the group signature $\sigma$. The oracle cannot be called on a signature obtained from the $AChal_b$.
    \item $\mathbf{CorrM()}$ returns the private parameters of the manager $\mathcal{M}$.
\end{itemize}

\vspace{5pt}
\begin{figure}[!bht]

\fbox{
    \begin{minipage}{\columnwidth}
    \begin{small}
    $\mathbf{AddHU(id_{u})}$
    \hrule
    \begin{itemize}
        \item If $N>N_{max}$ Return $\perp$
        \item Update $\mathcal{H} \cup \{id_{u}\}$ and $N=N+1$
        \item $param_{u} \leftarrow GS.Join(id_{u}, param_\mathcal{M})$
        \item Let the Join transcript is stored in $trscJoin_{u}$
        \item Store $Q_{HU} =Q_{HU} \cup (id_{u},trscJoin_{u}, param_{u}, out_\mathcal{M})$
        \item Return $pp$
    \end{itemize}

   $\mathbf{CorrU(id_{u})}$
    \hrule
    \begin{itemize}
        \item If $id_{u} \notin \mathcal{H}$ Return $\perp$
        \item Update $\mathcal{C} \cup \{id_{u}\}$ and $\mathcal{H}=\mathcal{H} \backslash \{id_{u}\}$
        \item Retrieve $(id_{u},trscJoin_{u}, param_{u}, out_\mathcal{M})$ from $Q_{HU}$
        \item Return $trscJoin_{id_{u}}$, $param_{u}$, and $out_\mathcal{M}$
   \end{itemize}

    $\mathbf{AddCU(id_{u})}$
    \hrule
    \begin{itemize}
        \item If $id_{u} \in \mathcal{H}$ Return $\perp$
        \item Update $\mathcal{C} \cup id_{u}$
        \item $param_{u} \leftarrow GS.Join(id_{u}, param_\mathcal{M})$
        \item Return $param_{u}$ and $out_\mathcal{M}$
    \end{itemize}

    $\mathbf{AChal_b(pp, m, id_{u_0}, id_{u_1})}$
    \hrule
    \begin{itemize}
        \item If $id_{u_0}, id_{u_1}$ $\notin \mathcal{H}$ Return $\perp$
        \item $ \sigma_b \leftarrow GS.Sign(m, param_{u_b})$
        \item If $GS.Verify(m, \sigma_b, pp)=0$ Return $\perp$
        \item Update $Q_{AChal} \cup (m, \sigma_b)$
        \item Return $\sigma_b$
    \end{itemize}
%
     $\mathbf{SignHU(m,param_{u})}$
     \hrule
     \begin{itemize}
        \item If $param_{u}= \perp$ Return $\perp$
        \item $ \sigma \leftarrow GS.Sign(m, param_{u})$ 
        \item Update $Q_{Sign} \cup (m,\sigma)$
        \item Return $\sigma$
    \end{itemize}

    $\mathbf{Open(msk, m, \sigma, pp)}$
     \hrule 
     \begin{itemize}
        \item If $(m, \sigma) \in Q_{Sign} \cup Q_{AChal}$ Return $\perp$
        \item If $GS.Verify(m, \sigma, pp)=0$ Return $\perp$
        \item Return $GS.Open(msk, \sigma)$
    \end{itemize}

     $\mathbf{CorrM()}$
     \hrule 
     \begin{itemize}
        \item Return $param_{\mathcal{M}}$
    \end{itemize}

   \end{small} 
     \end{minipage}
}
     \caption{Oracles used in experiments}
     \label{GSoracles}
\end{figure}

\vspace{1em}

\begin{definition}
\label{GSDefinition}
    For any security parameter $\lambda \in \mathrm{N}$ and for any PPT adversary $\mathcal{A}$, we say that $SPGS$ provides:
    \begin{enumerate}
        \item \textbf{Correctness} if there exists a negligible function $\nu_1$ such that 
        $Adv^{Corr}_{GS,\mathcal{A}} (\lambda)=Pr[Exp^{Corr}_{GS, \mathcal{A}}(\lambda)=1] \geq 1-\nu_1(\lambda)$\\

        \item \textbf{Non-frameability} if there exists a negligible function $\nu_2$ such that
        $Adv^{Non-Frame}_{GS,\mathcal{A}} (\lambda)=Pr[Exp^{Non-Frame}_{GS, \mathcal{A}}(\lambda)=1] \leq \nu_2(\lambda)$\\

        \item \textbf{Anonymity} if there exists a negligible function $\nu_3$ such that
        $Adv^{Anon}_{GS,\mathcal{A}} (\lambda)=Pr[Exp^{Anon-b}_{GS, \mathcal{A}}(\lambda)=1] \leq \frac{1}{2}+\nu_3(\lambda)$\\

        \item \textbf{ Traceability} if there exists a negligible function $\nu_4$ such that
        $Adv^{Trace}_{GS,\mathcal{A}} (\lambda)=Pr[Exp^{Trace}_{GS, \mathcal{A}}(\lambda)=1] \leq \nu_4(\lambda)$\\
    \end{enumerate}

    where $Exp^{Corr}_{GS, \mathcal{A}}$, $Exp^{Non-Frame}_{GS, \mathcal{A}}$, $Exp^{Anon}_{GS, \mathcal{A}}$, $Exp^{Trace}_{GS, \mathcal{A}}$, are defined in Figure \ref{GSexperiments}.\\
\end{definition}

\begin{figure}[!bht]

\fbox{
    \begin{minipage}{0.95\columnwidth}
    \begin{small}
    $Exp^{Corr}_{GS, \mathcal{A}}$
    \hrule 
    $(pp,msk) \leftarrow GS.Setup(\lambda,setpp)$, $N=0$\\
    $(m, id_{u}) \leftarrow \mathcal{A}^{AddHU,CorrU,AddCU,CorrM}(pp)$\\
    If $id_{u} \notin \mathcal{H}$ Return $0$\\
    $\sigma \leftarrow GS.Sign(m,param_{u})$\\
    If $ GS.Verify(m,\sigma,pp)=0$ Return $0$\\
    \textbf{Return} 1\\ 

    $Exp^{Non-Frame}_{GS, \mathcal{A}}$
    \hrule 
    $(pp,msk) \leftarrow GS.Setup(\lambda,setpp)$, $N=0$, $Q_{Sign}=\emptyset$\\
    $(id_{u},m,\sigma) \leftarrow \mathcal{A}^{AddHU,CorrM,SignHU}(pp)$\\
     If $id_{u} \notin \mathcal{H}$  Return $0$\\
    If $GS.Verify(m,\sigma,pp)=0$ Return $0$\\
    If $(m, \sigma) \in Q_{Sign}$ Return $0$\\
    \textbf{Return} $GS.Open(msk, \sigma) \overset{?}= id_{u}$\\

    $Exp^{Anon-b}_{GS, \mathcal{A}}$
    \hrule 
    $(pp,msk) \leftarrow GS.Setup(\lambda,setpp)$, $N=0$, $Q_{AChal}=\emptyset$\\
     $b' \leftarrow \mathcal{A}^{AddHU,CorrU,AddCU,Open,AChal_b}(pp)$\\
     If $b' \neq b$ Return $0$\\
    \textbf{Return} 1\\ 

     $Exp^{Trace}_{GS, \mathcal{A}}$
    \hrule 
    $(pp,msk) \leftarrow GS.Setup(\lambda,setpp)$, $N=0$\\
    $(m,\sigma)  \leftarrow \mathcal{A}^{AddHU,CorrU,AddCU,Open,CorrM}(pp)$\\
     If $GS.Verify(m,\sigma,pp)=0$ Return $0$\\
    \textbf{Return} $GS.Open(msk, \sigma) \overset{?}= \perp$\\
    
    \end{small}
    \end{minipage}
    }
     \caption{Security games of group signature $GS$}
     \label{GSexperiments}
\end{figure}

\vspace{1em}

\section{Non-interactive zero knowledge arguments of knowledge (NIZK): security model and definitions}
\label{NIZKModel}
 NIZK allows a prover who knows a witness $w$ for a statement $x$ can convince a verifier that $(w,\;x) \in R$, without revealing the witness $w$, where $R$ is a relation ($R\subset \mathcal{L})$ defined in language $\mathcal{L}$.
NIZK consists of the following algorithms:
\begin{itemize}
    \item $NIZK.Setup(R)$ which takes the relation $R$ (which implicitly defines the security parameter $\lambda$) and outputs the common reference string $crs$. 
    \item $NIZK.Prove(crs, \rho, x, w)$ which takes the common reference string $crs$, the common input $\rho$, the statement $x$ that is being proven about $\rho$, and the witness $w$ as input and outputs the proof $\pi$. 
    \item $NIZK.Verify(crs, \pi, \rho, x)$ receives the common reference string $crs$, the proof $\pi$, the common input $\rho$, and the statement being proven as input and outputs $1$ if the proof is verified and $0$ otherwise.
\end{itemize}

\textbf{Security requirements.} We consider a NIZK that ensures perfect completeness, computational zero-knowledge, and computational knowledge-soundness properties\cite{benhamouda2023anonymous}. 

\textbf{Perfect completeness} ensures that an honestly generated proof $\pi$ for a true statement $(w,\;x) \in R$ will pass the verification. That is, for all relations $R$ and $(w,\;x) \in R$, for all $ crs \leftarrow NIZK.Setup(R)$, and honestly generated proofs $\pi \leftarrow NIZK.Prove(crs, \rho, x, w)$, the verification passes:
 $1\leftarrow NIZK.Verify(crs, \pi, \rho, x)$

 \textbf{Computational zero-knowledge} ensures that the proof $\pi$ does not reveal any additional
 information about the witness $w$ other than the fact that the statement is correct, i.e., $(w,\;x) \in R$. Formally, this means that for all relations $R$ and all PPT
 adversaries $\mathcal{A}$, there exists a simulator $Sim$ that outputs the public parameters and a trapdoor $td$ which allows to verify the proofs without knowing the witness.
 
\noindent
\begin{footnotesize}
\begin{flushleft}
  $\Bigg| Pr\left[ \begin{array}{l}
  crs \leftarrow NIZK.Setup(R);\\
   1\leftarrow \mathcal{A}^{NIZK.Prove(crs,\rho,x, w)}(crs) \end{array}\right] - $ \\ $Pr\left[ \begin{array}{l}
  (crs,td) \leftarrow Sim(R);\\
   1\leftarrow \mathcal{A}^{Sim(td,\rho,x, \cdot)}(crs) \end{array}\right]\Bigg| \leq \nu(\lambda) $
   \end{flushleft}
\end{footnotesize}

\textbf{Computational Knowledge Soundness} ensures that (i) for all relations $R$ and for every PPT adversary $\mathcal{A}$ there exists an extractor $Extract$ that can extract a valid witness for every
 valid proof $\pi$ that the adversary generates 
 and (ii) the extractor-generated $crs$ is indistinguishable from an honestly generated
$crs$ (this is called set-up indistinguishability). Our definition follows from \cite{benhamouda2023anonymous} which removes the access to a simulation oracle in \cite{bernhard2012not}, and simplify it by
 allowing the extractor to select the random coins of the adversary, and allowing the use of a $crs$.

{\em Extractability:}

\noindent
 \begin{footnotesize}
\begin{flushleft}
$Pr\left[ \begin{array}{l}
  (crs,td) \leftarrow Extract(R);\\
  r \leftarrow  rnd_\mathcal{A};\\
   (x, \pi)\leftarrow \mathcal{A}(crs,r)\\ w \leftarrow Extract^{\mathcal{A}}(td,x,\pi,r)\end{array} \Bigg| \begin{array}{l} 1 \leftarrow NIZK.Verify(crs, \pi, \rho, x) \land\\ (w,x) \notin R \end{array}
   \right]$ \\  $\leq \nu(\lambda)$
\end{flushleft} 
\end{footnotesize}

where $rnd_\mathcal{A}$ is the distribution of random coins of the adversary.

{\em Statistical set-up indistinguishability:} for arbitrary adversary $\mathcal{B}$ (even non-polynomial time)

\noindent
\begin{footnotesize}
  $\Big|Pr\left[ \begin{array}{l}
  crs \leftarrow NIZK.Setup(R)\big|
   1\leftarrow \mathcal{B}(crs)\end{array}\right] - \\  Pr\left[ \begin{array}{l}
  (crs,td) \leftarrow Extract(R)\big|
   1\leftarrow \mathcal{B}(crs) \end{array}\right]\Big| \leq \frac{1}{2}+\nu(\lambda) $
\end{footnotesize}

\section{Signature of knowledge (SOK)}
\label{SOK}
SOK allows a signer who knows a valid witness $w$ for a statement $x \in L$, for the $NP$ language $L$, such that $M_L(x,w)=accept$, and $M_L$ is a polynomial time Turing machine, to generate a signature on message $m$. A signature of knowledge (SOK) should ensure SimExt-security which implies correctness, simulatability and exractability. It was shown in \cite{chase2006signatures} that using a dense public key cryptosystem\footnote{A dense public key cryptosystem is a public key cryptosystem consisting of $(KeyGen, Enc, Dec)$ algorithms with two differences: (i) the public key output by the $KeyGen$ algorithms is indistinguishable from a uniform distribution, and (ii) Rather than for all public keys, for only a sufficient set of public keys the security properties hold.} \cite{de1992zero} (which is equivalent to a commitment scheme\footnote{ An extractable commitment scheme is a commitment scheme that outputs a trapdoor together with system parameters, such that an extractor can extract the committed value using the trapdoor. It ensures extractability in addition to hiding and binding properties. Extractability is defined as follows: there exists an extractor who can extract the value it has been committed knowing the public random string an the auxiliary data.}  \cite{de2000necessary}) and a simulation-sound non-interactive zero knowledge proof, one can construct a signature of knowledge for any language $L$.  The sponsored-member signature in our generic construction can be seen as a group signature and a SOK (with weaker security than \cite{chase2006signatures}). One can also use the generic SOK construction of \cite{chase2006signatures}, but we note that the SOK of \cite{chase2006signatures} has stronger security properties that is needed for SPGS. Our constructions use minimal security assumptions.

\section{Linkability definition of sponsored-member signature}
\label{Linkability}
\begin{lemma}
\label{LemmaLinkiability}
    The sponsored signature linkability of SPGS (defined by $Exp^{SmLink-b}_{SPGS,\mathcal{A}}$) is equivalent to the signer linkability notion defined for LSAG (the linkable spontaneous anonymous group signature) of \cite{liu2004linkable}.
\end{lemma}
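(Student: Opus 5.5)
We prove the equivalence in two directions, each by a direct reduction between a linking algorithm $F$ in the sense of \cite{liu2004linkable} and a distinguisher $\mathcal{A}$ in $Exp^{SmLink-b}_{SPGS,\mathcal{A}}$. Throughout, the two LSAG error events are $E_0$, where $F$ outputs $0$ on two signatures produced by the same sponsored-member, and $E_1$, where $F$ outputs $1$ on signatures produced by two distinct sponsored-members $id_{Sm_0}\neq id_{Sm_1}$. The key observation is that the oracle $LChal_b$ returns exactly one of these two distributions: for $b=0$ it returns $(\sigma_0,\sigma_1)$ both signed with $param_{Sm_0}$, and for $b=1$ it returns one signature from $id_{Sm_0}$ and one from $id_{Sm_1}$. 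Guessing $b$ is therefore the same as deciding "linked/not linked".

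\emph{(LSAG linkability $\Rightarrow$ SPGS linkability.)} Assume $F$ satisfies $\Pr[E_0]\le\nu$ and $\Pr[E_1]\le\nu$. Build $\mathcal{A}$ as follows.
\begin{enumerate}
\item $\mathcal{A}$ uses $AddHU$ to enroll an honest sponsor and two honest sponsored-members $id_{Sm_0},id_{Sm_1}$.
\item It picks arbitrary messages $m_0,m_1$ and queries $LChal_b$ once.
\item It outputs $b'=1-F(m_0,m_1,\sigma_0,\sigma_1)$, so output $1$ of $F$ ("linked") maps to $b'=0$ (same signer).
\end{enumerate}
Conditioning on $b$, we get $\Pr[b'\neq b]\le \tfrac12\Pr[E_0]+\tfrac12\Pr[E_1]\le\nu$. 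Hence $\mathcal{A}$ succeeds with probability at least $1-\nu$, which is the bound required in Definition~\ref{SPGSDefinition}(6). Since the LSAG statement quantifies over all identities and messages, it covers whatever $\mathcal{A}$ chooses.

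\emph{(SPGS linkability $\Rightarrow$ LSAG linkability.)} Assume some PPT $\mathcal{A}$ wins $Exp^{SmLink-b}$ with probability at least $1-\nu$. Define $F(m_0,m_1,\sigma_0,\sigma_1)$ to run $\mathcal{A}$ in the experiment and answer its single $LChal$ query with $(\sigma_0,\sigma_1)$. $F$ must also answer $\mathcal{A}$'s other oracle queries ($AddHU$, $CorrU$, $AddCU$, $Open$). It does so by sampling the setup itself, which is possible because $F$ may depend on the system parameters. It then returns $1-b'$.

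Averaging gives $\tfrac12(\Pr[E_0]+\Pr[E_1])\le\nu$, so each error probability is at most $2\nu$, which is still negligible. The required bound holds for every identity pair and message pair, not only on average, because $\mathcal{A}$ can be fixed to output the specific identities and messages in question. We use non-uniformity here, or an argument that the experiment holds for every adversarial choice of $(id_{Sm_0},id_{Sm_1},m_0,m_1)$.

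\emph{Main obstacle.} The delicate step is the second direction. A single adversary's average success has to be turned into the per-instance, two-sided guarantee of \cite{liu2004linkable}, and the external signatures $(\sigma_0,\sigma_1)$ have to be embedded consistently with the simulated oracles. In particular, the simulated $Open$ must refuse the challenge signatures, which requires $F$ to know the public keys used to produce them. I would handle this in two ways. First, fix the parameters $pp$ so that the signatures given to $F$ are generated under the same $pp$ that $F$ simulates. Second, quantify the statement of Lemma~\ref{LemmaLinkiability} over adversaries with fixed choices, so that the reduction loses at most a factor of $2$ in the error.
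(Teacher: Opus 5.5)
Your first direction (LSAG $\Rightarrow$ SPGS) is the paper's argument: $\mathcal{A}$ passes the $LChal_b$ output to $F$ and flips its answer. Your bound $\Pr[b'\neq b]\le\frac12(\Pr[E_0]+\Pr[E_1])$ is a cleaner version of the paper's accounting, and it is valid because you construct $\mathcal{A}$ and may fix its choices. For the converse, the paper only argues by contradiction with an $\mathcal{A}$ that is \emph{assumed} to call $F$, which is circular. Your embedding reduction is a more serious attempt, but it does not prove the lemma as stated.

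The step that fails is ``$\mathcal{A}$ can be fixed to output the specific identities and messages in question.'' Since the game asks for success $\ge 1-\nu$, Definition~\ref{SPGSDefinition}(6) can only mean that \emph{some} PPT $\mathcal{A}$ wins. That $\mathcal{A}$ is handed to you and chooses its own, possibly $pp$-dependent, $m_0,m_1$, so $F$ cannot plant its arbitrary inputs into the $LChal$ answer. Liu et al.\ require both error bounds for \emph{every} message pair. This is not cosmetic. Take a scheme whose sponsored-member signatures are unlinkable, except that signatures on messages in a fixed set $S$ also carry a deterministic tag such as $g^{sk_{Sm}}$. It admits an $\mathcal{A}$ (querying $m_0,m_1\in S$) that wins with overwhelming probability, yet no $F$ meets (i) and (ii) outside $S$. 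So the converse is false under the paper's definition, and ``quantify over adversaries with fixed choices'' changes the definition rather than proving it.

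Second, the simulation is inconsistent. $F$ needs $msk$ to answer $\mathcal{A}$'s $Open$ queries, but the signatures it must link are produced under the real $(pp,msk)$. So ``$F$ samples the setup itself'' and ``the same $pp$'' cannot both hold. Hardwiring $msk$ into $F$ gives a manager-assisted linker, not the public $F(m_0,m_1,\sigma_0,\sigma_1)$ of Liu et al. You would also still have to show that planting signatures from users (and a certifying $GS$ signature) that your simulation never enrolled leaves $\mathcal{A}$'s view unchanged.

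What your argument actually establishes is the forward implication. It gives equivalence only for a re-quantified game: for every choice of identities and messages some distinguisher exists, with $Open$ removed or $F$ given $msk$.
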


\textit{Proof.} We argue that the two definitions are equivalent, meaning that if the Liu et al. linkability definition \cite{liu2004linkable} holds, our sponsored-member linkability definition will hold and vice versa.

Let's consider that there is an SPGS scheme which ensures linkability notion as defined by statements (i) and (ii) above. Therefore, there exists a PPT algorithm $F$ that can link the two signatures with high probability. We show that this scheme also ensures the linkability notion of Definition 1.5. To find the advantage of $\mathcal{A}$ in $Exp^{SmLink-b}_{SPGS,\mathcal{A}}$, we assume $\mathcal{A}$ uses the algorithm $\mathcal{F}$ as a subroutine, it sends the messages and signatures received by $LChal_b$ to $\mathcal{F}$ and outputs $b'=0$ if $\mathcal{F}$ outputs $1$, and outputs $b'=1$ if $F$ outputs $0$. Note that if $\mathcal{F}$ can link the two signatures that are issued by the same signer with non-negligible probability (i.e., corresponding to the probability statement i) then if challenger chooses $set_0$ then the probability that $F$ will output $1$ will be  non-negligible. Also, if  $\mathcal{F}$ always outputs $0$ for the two signatures that are issued by different signers with non-negligible probability (i.e., corresponding to the probability given in  statement ii) then if challenger chooses $set_1$ according to our definition then the probability then $F$ will output $0$ will be non-negligible. Therefore, the advantage of $\mathcal{A}$ is equal to advantage of $F$ which is non-negligible (i.e. $1-\nu(\lambda)$).

The reverse side of the statement state that if there is an algorithm $\mathcal{A}$ that can win the $Exp^{SmLink-b}_{SPGS,\mathcal{A}}$ with non-negligible probability, then there exists an algorithm $F$ that can link the two signatures with non-negligible probability. We use contradiction to prove this statement. Let's consider that the algorithm $F$ will link the two signatures with only negligible probability, then it is clear that $\mathcal{A}$ which uses $\mathcal{F}$ as a subroutine will succeed with negligible probability which contradicts our assumption.

\section{Security analysis of SPGS construction} 
\label{Proofs}
The proof of theorem consists of a set of lemmas, one for each property, that are outlined below.

\begin{lemma}
    \textbf{Correctness.} The construction of $SPGS$ given in Section \ref{SPGSconstruction} achieves correctness,  if the group signature $GS$ and commitment $C$ ensure correctness and $NIZK$ ensures completeness.
\end{lemma}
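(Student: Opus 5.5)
We argue directly from the correctness experiment $Exp^{Corr}_{SPGS,\mathcal{A}}$ of Figure \ref{experiments}, bounding by a union bound the probability that each of the two verification checks fails. Fix any PPT adversary $\mathcal{A}$ with access to $AddHU, CorrU, AddCU, CorrM$ that outputs $(m, id_{Sp}, m', id_{Sm})$ with $id_{Sp}, id_{Sm} \in \mathcal{H}$. The experiment returns $0$ only if $Verify^{Sp}(m,\sigma,pp)=0$ or $Verify^{Sm}(m',\sigma',pp)=0$.

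\emph{Sponsor signatures.} By construction, $Join^{Sp}$ is exactly $GS.Join$, and $Sign^{Sp}$ and $Verify^{Sp}$ are $GS.Sign$ and $GS.Verify$. Given $\mathcal{A}$, we build an adversary $\mathcal{B}$ against $Exp^{Corr}_{GS}$ of Figure \ref{GSexperiments}. $\mathcal{B}$ runs $C.Setup$ and $NIZK.Setup$ itself, and forwards sponsor-level $AddHU/CorrU/AddCU/CorrM$ queries to its own oracles. It simulates all sponsored-member joins locally, obtaining $GS$ signatures on $Pk_{Sm}$ through $SignHU$-style use of honest sponsor keys; if the $GS$ correctness game exposes no signing oracle, it instead tracks these joins inside the reduction, since correctness needs no secrecy. $\mathcal{B}$ then outputs $(m,id_{Sp})$. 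Hence the first check fails with probability at most $\nu_{GS}(\lambda)$.

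\emph{Sponsored-member signatures.} For an honest $id_{Sm}$ added via $AddHU(Sm,\cdot)$, $Join^{Sm}$ was executed honestly with an honest sponsor $Sp'$. So $Sm$ holds $param_{Sm}=(pp,\sigma,Pk_{Sm})$ with $(Pk_{Sm},r)=C.Com(sk_{Sm})$ and $\sigma=GS.Sign(Pk_{Sm},param_{Sp'})$. $Verify^{Sm}$ performs two checks:
\begin{enumerate}
\item $GS.Verify(Pk_{Sm},\sigma,pp)=1$. This holds except with negligible probability by the same $GS$ correctness reduction, now using message $Pk_{Sm}$ and signer $id_{Sp'}$.
\item $NIZK.Verify(crs,\pi,\rho,x)=1$ with $\rho=(m',Pk_{Sm})$. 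By correctness of $C$, the pair $w=(sk_{Sm},r)$ satisfies the relation $x$, so perfect completeness of $NIZK$ gives acceptance with probability $1$.
\end{enumerate}
For the concrete re-randomized variant, additive homomorphism of $C$ gives $Pk'_{Sm}=C.HCom(Pk_{Sm},r')=C.Com(sk_{Sm}+r')$, so the witness is again valid.

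The only delicate point is whether $\mathcal{A}$ can have affected the honest sponsored-member's state. Corruption happens only through $CorrU$, which removes the identity from $\mathcal{H}$, or through $AddCU$, which never places identities in $\mathcal{H}$. Moreover, $CorrM$ merely leaks $msk$, which is not used by Sign or Verify. So both honest identities retain honestly generated parameters, and a union bound gives $Adv^{Corr}_{SPGS,\mathcal{A}}\ge 1-2\nu_{GS}(\lambda)$, which is what we need.
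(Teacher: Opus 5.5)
Your decomposition is the paper's three-sentence argument: $GS$ correctness for $\sigma$ and for the sponsor's signature on $Pk_{Sm}$ inside $\sigma'$, correctness of $C$ to certify the witness $(sk_{Sm},r)$, and perfect completeness of $NIZK$ for $\pi$. Your remarks on $CorrU$, $AddCU$ and $CorrM$ are also sound. The step that fails is the one you add on top of this: the black-box reduction $\mathcal{B}$ to $Exp^{Corr}_{GS}$.

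In that game (Figure~\ref{GSexperiments}) the adversary has only $AddHU, CorrU, AddCU, CorrM$. So $\mathcal{B}$ cannot produce an honest sponsor's signature on $Pk_{Sm}$ when $\mathcal{A}$ calls $AddHU(Sm,\cdot)$ or $AddCU(Sm,\cdot)$. Taking the key via $CorrU$ moves that sponsor out of $\mathcal{H}$, after which the $GS$ game refuses it as output. So ``correctness needs no secrecy'' does not rescue the simulation.

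No other reduction from that game alone can work either. Via $AddCU(Sm,\cdot)$, the SPGS adversary sees an honest sponsor's signature on a message of its choice before it picks $m$, which the $GS$ correctness adversary never does. As a counterexample, take a $GS$ that attaches a fresh pair $(\nu,\mathrm{PRF}_{s_u}(\nu))$ to every signature, with $s_u$ not retained by $\mathcal{M}$. Let it sign invalidly whenever the message has the form $(\nu,\mathrm{PRF}_{s_u}(\nu))$ for its own $s_u$. This scheme is still correct in the sense of Figure~\ref{GSexperiments}, since no honest signature or $s_u$ is ever exposed there. Yet $\mathcal{A}$ can read such a pair off the $\sigma$ returned by $AddCU(Sm,\cdot)$ and output it as $m$, making $Sign^{Sp}$ fail. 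Separately, nothing in $AddHU(Sm,\cdot)$ forces the sponsor $Sp'$ of the honest $id_{Sm}$ to lie in $\mathcal{H}$, so the $GS$ game need not cover its signature at all.

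Hence your bound $1-2\nu_{GS}(\lambda)$ does not follow from the game-based $GS$ correctness notion. The repair is to use $GS$ correctness in per-key form, which is what the paper's one-line argument tacitly assumes. That is: for every honestly generated member key $param_u$ and every message $m$, $GS.Verify(m, GS.Sign(m,param_u), pp)=1$, either perfectly (as for BBS) or except with negligible probability over the signing coins.

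In this construction every $param_{Sp}$, including those of corrupted sponsors, is produced by $\mathcal{M}$ running $GS.Join$. Both $m$ and $Pk_{Sm}$ are signed with fresh coins. So both $GS$ checks pass directly and no simulation is needed. Your commitment/NIZK argument then finishes the proof exactly as in the paper.
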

 The proof is straightforward. If $GS$ ensures correctness, then the sponsor signature $\sigma$ will be correct. Also, if the commitment scheme $C$ ensures correctness the the proof $\pi$ can be constructed correctly for the $PK_{Sm}$. Additionally, the completeness property of the $NIZK$ ensures that any proof generated by the honest parties will be verified. Therefore, the sponsored signature $\sigma'$ will be verified.

\begin{lemma}
    \textbf{Non-frameability.} The construction of $SPGS$ given in Section \ref{SPGSconstruction} achieves non-frameability,  if $NIZK$ ensures knowledge soundness, the commitment $C$ is binding, and the group signature $GS$ is non-frameable.
\end{lemma}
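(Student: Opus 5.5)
We argue by reduction, splitting on the attribute $attr$ output by a successful non-frameability adversary $\mathcal{A}$ in $Exp^{Non-Frame}_{SPGS,\mathcal{A}}$. Let $E_{Sp}$ and $E_{Sm}$ be the events that $\mathcal{A}$ wins with $attr=Sp$ and $attr=Sm$ respectively. We bound each event by a negligible function.

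\textbf{Case $attr=Sp$.} We build $\mathcal{B}_1$ against $Exp^{Non-Frame}_{GS}$. $\mathcal{B}_1$ receives $pp'$ and generates $crs$ and $cpar$ itself. It answers $AddHU(Sp,\cdot)$, $SignHU(Sp,\cdot)$ and $CorrM$ with its own $GS$ oracles. For sponsored members it runs $Join^{Sm}$ honestly, obtaining the sponsor's signature on $Pk_{Sm}$ through $SignHU$ of $GS$. It answers $SignHU(Sm,\cdot)$ by running $NIZK.Prove$ with the witness it chose. Suppose $\mathcal{A}$ outputs a fresh, valid $\sigma$ with $GS.Open(msk,\sigma)=id_{Sp}$ for an honest $id_{Sp}$. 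Then $(m,\sigma)$ is also fresh in the $GS$ game, because the $GS$ signing queries are on messages of the form $m$ or $Pk_{Sm}$; we tag the message spaces if needed to keep them apart. So $\mathcal{B}_1$ wins with probability $\Pr[E_{Sp}]$.

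\textbf{Case $attr=Sm$.} Here $\mathcal{A}$ outputs $\sigma'=(\sigma,Pk^*,\pi)$ that opens to $id_{Sp}$, the sponsor of an honest $id_{Sm}$. We split into two subcases.

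\emph{Subcase (a): $Pk^*$ was never certified by $id_{Sp}$.} That is, $(Pk^*,\sigma)$ is not a pair produced for $id_{Sp}$ by $Join^{Sm}$ or by a sponsor signing query. Then $(Pk^*,\sigma)$ is a fresh $GS$ signature opening to an honest sponsor. Exactly as in the $attr=Sp$ case, this breaks $GS$ non-frameability. The sponsor must be honest, since otherwise $\mathcal{A}$ could legitimately sign and the game would not count this as framing.

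\emph{Subcase (b): $Pk^*=Pk_{Sm}$ is the key of some honest sponsored member.} The triple $(m,\sigma')$ is fresh, yet $\sigma'$ reuses this honest key, so $\pi$ is a fresh valid proof for the statement $Pk_{Sm}$ with common input $\rho=(m,Pk_{Sm})$.
\begin{enumerate}
\item Guess the index of the targeted honest sponsored member, losing a factor $1/q$.
\item Replace $crs$ by the extraction $crs$. This is allowed by set-up indistinguishability.
\item Run $Extract$ on $\pi$ to obtain $w=(sk^*,r^*)$ with $C.Com(sk^*)$ opening $Pk_{Sm}$.
\end{enumerate}
The reduction $\mathcal{B}_2$ embeds a challenge commitment as $Pk_{Sm}$; we would like it to avoid knowing the opening. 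Two outcomes are possible:
\begin{itemize}
\item If $\mathcal{B}_2$ knows an opening $(sk_{Sm},r)$ and the extracted value differs, the pair is two openings of one commitment and breaks binding.
\item If the extracted value equals $sk_{Sm}$, $\mathcal{A}$ has recovered the honest member's secret key from a commitment and zero-knowledge proofs. We bound this by hiding of $C$ (equivalently, one-wayness of the commitment) after switching the $SignHU(Sm,\cdot)$ answers to simulated proofs using computational zero knowledge.
\end{itemize}

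\textbf{Main obstacle.} The hardest step is subcase (b), for two reasons. First, extraction must coexist with the simulated $SignHU$ answers. The knowledge-soundness definition of Appendix \ref{NIZKModel} gives no simulation oracle, so we would need simulation-extractability. Alternatively, in the Fiat--Shamir instantiation, we can rewind and program the random oracle, relying on the fact that $m$ is hashed into the challenge so that the adversary's proof is on a new hash point. Second, a fresh signature could in principle reuse the same $m$ with a mauled $\pi$. We handle this by defining freshness over the whole triple $(m,\sigma,Pk,\pi)$ and relying on extraction rather than uniqueness. This is where we expect the stated assumptions to need strengthening.

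Finally, summing the two cases gives
\[
\mathrm{Adv}^{Non-Frame}_{SPGS}\le \mathrm{Adv}^{NF}_{GS}+q\bigl(\mathrm{Adv}^{KS}+\mathrm{Adv}^{ZK}+\mathrm{Adv}^{bind}+\mathrm{Adv}^{hide}\bigr)+\nu,
\]
which is negligible.
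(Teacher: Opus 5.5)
Your case split for $attr=Sm$ is not exhaustive, and the missing case is an actual attack on the construction as written. Besides (a) a fresh certificate and (b) an honest member's certificate, there is a third possibility: $\sigma$ is the answer to a query $SignHU(Sp,Pk^*,\cdot)$ on a commitment $(Pk^*,r^*)=C.Com(sk^*)$ whose opening the adversary chose. $Join^{Sm}$ and $Sign^{Sp}$ both apply $GS.Sign$ to untagged messages. So $(\sigma,Pk^*,\pi)$, with an honestly generated $\pi$ on any $m^*$, passes $Verify^{Sm}$, is not in $Q_{Sign}$, and opens to the honest sponsor. The game then outputs $1$, since for $attr=Sm$ it only checks that the named $id_{Sm}$ is in $\mathcal{H}$. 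Your subcase (a) explicitly excludes pairs produced by a sponsor signing query, but (b) does not cover them.

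The same overlap breaks your $attr=Sp$ reduction. The certificate inside any $SignHU(Sm,\cdot)$ answer is a valid sponsor signature on the message $Pk_{Sm}$, and $(Sp,Pk_{Sm},\sigma)\notin Q_{Sign}$. Yet $(Pk_{Sm},\sigma)$ already sits in $\mathcal{B}_1$'s $GS$ signing list, so freshness does not transfer. The tagging you mention ``if needed'' is therefore mandatory and must be written into $Join^{Sm}$, $Sign^{Sp}$ and $Verify^{Sm}$. That changes the scheme, but with it your (a)/(b) split becomes exhaustive and the $attr=Sp$ case goes through.

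Second, as you suspect, subcase (b) cannot be closed from this lemma's hypotheses. Turning the extracted witness into a contradiction needs two extra ingredients:
\begin{itemize}
\item zero knowledge, and hiding (or one-wayness) of $C$, which the lemma does not assume (the theorem does);
\item extraction from the forged proof while the $SignHU(Sm,\cdot)$ answers are simulated, which the paper's knowledge-soundness notion does not provide (it has no simulation oracle and uses a separate extraction crs).
\end{itemize}
You need strong simulation-extractability, which also handles a mauled $\pi$ on a repeated $m$. Alternatively, for the Fiat--Shamir instantiation, a forking argument in the random oracle model would do.

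The paper's proof does not supply these either. For $attr=Sp$ it reduces directly to $GS$, after restricting the adversary to $SignHU(Sp,\cdot)$. For $attr=Sm$ it switches to the extraction crs and aborts if extraction fails or the witness does not open $Pk^*$. It then attributes every remaining forgery to a fresh $GS$ signature on $Pk^*$ and never uses the extracted witness. In effect it handles only your subcase (a), and addresses neither your subcase (b) nor the $SignHU(Sp,Pk^*,\cdot)$ attack. Your analysis locates the real difficulties more accurately than the paper's, but neither argument proves the statement for the construction as given.
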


Depending on $attr$, We consider two cases: (i) SP-non-frameability, where $attr=Sp$, and the adversary $\mathcal{A}$ in game $Exp^{Non-Frame}_{SPGS,\mathcal{A}}$ after issuing oracle queries outputs $(Sp, id_{Sp}, m, \sigma_{Sp})$, and (ii) Sm-non-frameability, where $attr=Sm$, and the adversary $\mathcal{A}$ in game $Exp^{Non-Frame}_{SPGS,\mathcal{A}}$ after issuing oracle queries outputs $(Sm, id_{Sm}, m, \sigma_{Sm})$. \\

\noindent \textbf{(i) Sp-non-frameability.} In this game, the adversary $\mathcal{A}$ can only issue queries to $SignHU(Sp, \cdot,\cdot)$ oracle and at the end it wins if it outputs a valid sponsor signature for an honest member which can be opened to the sponsor correctly. The sponsor signature $\sigma_{Sp}$ in our scheme is generated through a group signature $GS$, and thus the $Sp$-non-frameability of $SPGS$ directly reduces to the non-frameability of $GS$. In another words, we show that if $SPGS$ does not ensure $Sp$-non-frameability then we can construct an adversary $\mathcal{A}$ who can break the non-frameability of $GS$. Let's consider that there exists an adversary $\mathcal{B}$ who can output $(m,\sigma)$ that opens to $id_{Sp}$ in $Exp^{Non-frame}_{SPGS,\mathcal{B}}$ with non-negligible probability. Then  $\mathcal{A}$  can use  $\mathcal{B}$ as a subroutine to break the  non-frameability of $GS$. $\mathcal{A}$ will answer the $AddHU(Sp,\cdot)$, $CorrM$, and $SignHU(Sp,\cdot,\cdot)$  queries of $\mathcal{B}$ in $Exp^{Non-frame}_{SPGS,\mathcal{B}}$ experiment  by using $AddHU$, $CorrM$, and $SignHU$ oracles of $GS$ (see $Exp^{Non-frame}_{GS,\mathcal{A}}$ in Appendix \ref{GSModel}). When $\mathcal{B}$ outputs $(Sp, m, \sigma)$, $\mathcal{A}$ outputs $(m, \sigma)$. The success probability of $\mathcal{A}$ is the same as the success probability of $\mathcal{B}$. This contradicts the assumption that $GS$ is non-frameable, and hence we conclude that $SPGS$ ensures $Sp$-non-frameability.\\

\noindent \textbf{(ii) Sm-non-frameability.}  In this game, the adversary $\mathcal{A}$ can issue queries to both  $SignHU(Sp, \cdot,\cdot)$ and $SignHU(Sm, \cdot,\cdot)$ oracles, and it wins if it can output a valid signature $\sigma_{Sm}=\sigma^{*}$ such that it is opened to the actual sponsor. We prove Sm-non-frameability using the hybrid game as below:

\begin{description}
    \item[$Hyb_0$.] This game corresponds to the non-frameability game of $SPGS$.
    \item[$Hyb_1$.] This game is similar to the previous one except that for any query to $SignHU$ oracle, $NIZK$ set up is done by the extractor from the knowledge soundness. Due to the set up indistinguishability from knowledge soundness this game is indistinguishable from the previous game.
    \item[$Hyb_2$.] This game is similar to the previous one except that after the adversary outputs a forgery $(m^*, \sigma^*)$ where $\sigma^*=(\sigma,Pk^*_{Sm},\pi)$, the game aborts if the extractor aborts and cannot extract a valid witness from $\pi$. This game is computationally indistinguishable from the previous game under the extractability of the $NIZK$ (the abort event happens only with negligible probability).
    \item[$Hyb_3$.] This game is similar to the previous one except that the game aborts if the extracted witness does not match the forgery $\sigma^*=(\sigma,Pk^*_{Sm},\pi)$. Let's consider that the witness is $(sk^*_{Sm}, r)$. Therefore, we should have  
    $(Pk^*_{Sm}, r)=C.Com(sk^*_{Sm})$. This game is indistinguishable from the previous game since $C$ ensures binding property, and the probability that any of these checks fails is negligible. Additionally, if any of these statements are not satisfied then the proof $\pi$ will not be verified and the extractor aborts.
     \item[$Hyb_4$.] Let's consider the forgery is $\sigma^*=(\sigma,Pk^*_{Sm},\pi)$ on message $m^*$. This game is similar to the previous one except that the game aborts if $(m^*, (\sigma, Pk^*_{Sm},\cdot))  \notin Q_{Sign}$. This game is indistinguishable from the previous game since if this is the case then $\sigma$ should have been forged, which only happens with negligible probability due to the non-frameability of $GS$. 
\end{description}

We can see that assuming the $NIZK$ ensures knowledge-soundness, and commitment $C$ is binding, we can reduce the Sm-non-frameability of the $SPGS$ to the last hybrid game. This shows that if $\mathcal{A}$ outputs a forgery $\sigma^*=(\sigma,Pk^*_{Sm},\pi)$ on $m$ then we can construct an attacker $\mathcal{B}$ which can use the $\mathcal{A}$ as subroutine to output a valid forgery for the group signature $\sigma$ on $m=Pk^*{Sm}$. This concludes the proof.

\begin{lemma}
    \textbf{Sponsor anonymity.} The construction of $SPGS$ given in Section \ref{SPGSconstruction} achieves sponsor anonymity,  if 
    $GS$ ensures anonymity for the signer. 
\end{lemma}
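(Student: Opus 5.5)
We prove the lemma by a direct reduction to the anonymity of $GS$ (Definition \ref{GSDefinition}, game $Exp^{Anon-b}_{GS}$). Suppose a PPT adversary $\mathcal{B}$ has non-negligible advantage in $Exp^{SpAnon-b}_{SPGS,\mathcal{B}}$. We build $\mathcal{A}$ that plays $Exp^{Anon-b}_{GS,\mathcal{A}}$ and runs $\mathcal{B}$ as a subroutine.

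\emph{Setup.} $\mathcal{A}$ receives $pp'$ from its own challenger. It runs $crs=NIZK.Setup(R)$ and $cpar=C.Setup(1^\lambda)$ itself, and gives $\mathcal{B}$ the value $pp=(pp',cpar,crs)$. This distribution is identical to the real $Setup$, because the three components are generated independently.

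\emph{Oracle simulation.} The simulation proceeds oracle by oracle.
\begin{itemize}
\item Queries $AddHU(Sp,\cdot)$, $CorrU$ on sponsors, and $AddCU(Sp,\cdot)$ are forwarded to the corresponding $GS$ oracles. This works because $Join^{Sp}$ is exactly $GS.Join$.
\item Queries $AddHU(Sm,\cdot)$ under an honest sponsor are answered by $\mathcal{A}$ itself. It samples $sk_{Sm}$, computes $Pk_{Sm}=C.Com(sk_{Sm})$, and obtains the sponsor's signature on $Pk_{Sm}$ through $SignHU$ of $GS$. Under a corrupted sponsor, $\mathcal{A}$ knows the sponsor key and can run the join directly.
\item $CorrU$ on a sponsored member, and $AddCU(Sm,\cdot)$, are answered with the keys $\mathcal{A}$ generated itself.
\item Signing queries from sponsors are forwarded to $SignHU$ of $GS$. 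Signing queries from sponsored members are answered by $\mathcal{A}$, which holds $sk_{Sm}$ and the stored $\sigma$ and computes $\pi$ honestly.
\item $AChal_b(pp,m,id_{Sp_0},id_{Sp_1})$ is forwarded verbatim to the $GS$ challenge oracle. Since $Sign^{Sp}=GS.Sign$, the returned $\sigma_b$ has exactly the right distribution.
\item For $Open(Sp,\cdot)$, $\mathcal{A}$ queries $GS.Open$. For $Open(Sm,m,(\sigma,Pk_{Sm},\pi))$, $\mathcal{A}$ first checks $NIZK.Verify$ itself and then queries the $GS$ open oracle on $(Pk_{Sm},\sigma)$.
\end{itemize}
At the end, $\mathcal{A}$ outputs $\mathcal{B}$'s bit $b'$.

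The main obstacle is the consistency of the open oracle. The $GS$ open oracle refuses pairs lying in $Q_{Sign}\cup Q_{AChal}$ of the $GS$ game. Two issues follow.

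First, the forbidden set of the $GS$ game can be larger than that of the SPGS game. A sponsored signature $(\sigma,Pk_{Sm},\pi')$ built from a $\sigma$ that $\mathcal{A}$ obtained through $SignHU$ is not in the SPGS forbidden set. However, $\mathcal{A}$ cannot open it via $GS$. This is handled by bookkeeping: $\mathcal{A}$ keeps a table mapping every $(Pk_{Sm},\sigma)$ it requested to the sponsor identity it used, and answers such opens from the table. By correctness of $GS$ opening for honest signatures, the table answer equals the real $Open$ output.

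Second, the SPGS game forbids opening only the exact challenge $\sigma_b$. The adversary $\mathcal{B}$ might therefore submit a sponsored signature $(\sigma_b,Pk,\pi)$ with $m_{chal}=Pk$, for which $\mathcal{A}$'s open query would be refused. I would handle this by excluding such queries, i.e.\ reading the SPGS restriction as applying to the underlying group-signature component. Alternatively, I would note that when the challenge message is not a valid commitment, the NIZK verification fails, so the issue arises only if $\mathcal{B}$ deliberately chose $m$ to be a commitment. Making this restriction explicit is the delicate step, and I would state it as part of the reduction.

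With the simulation perfect outside this excluded event, $\mathcal{A}$ gains the same advantage as $\mathcal{B}$. This contradicts the anonymity of $GS$.
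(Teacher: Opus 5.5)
The step that fails is your handling of the second open-oracle issue. It is not a low-probability simulation artifact that can be excluded. Under the formal $Open$ oracle, which rejects only exact tuples in $Q_{AChal}$, it is a winning strategy against the construction.

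Let $\mathcal{B}$ add honest sponsors $id_{Sp_0},id_{Sp_1}$ and pick $sk$ with $(Pk,r)=C.Com(sk)$. It queries $AChal_b(pp,Pk,id_{Sp_0},id_{Sp_1})$ to get $\sigma_b$, and computes $\pi=NIZK.Prove(crs,(m',Pk),x,(sk,r))$ for any $m'$. It then queries $Open(Sm,msk,m',(\sigma_b,Pk,\pi),pp)$. This tuple is not in $Q_{AChal}=\{(Sp,Pk,\sigma_b)\}$, and $Verify^{Sm}$ accepts by correctness of $GS$ and completeness of $NIZK$. So $Open^{Sm}$ returns $GS.Open(msk,\sigma_b)=id_{Sp_b}$, and $\mathcal{B}$ wins with probability essentially $1$ for any $GS$ whose opening identifies honest signers (e.g.\ BBS).

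Noting that this requires $m$ to be a commitment does not help, because the adversary chooses $m$. Reading the restriction as applying to the $GS$ component changes the definition; it is not a proof step. So the lemma as formalized is false for this construction, and the paper's one-step hybrid misses this too. To get a true statement, either state a strengthened $Open$ restriction explicitly (reject any $Sm$-signature whose $GS$ component occurs in $Q_{AChal}$), or fix the construction by domain separation. With $Sign^{Sp}$ signing $0\|m$ and $Join^{Sm}$ signing $1\|Pk_{Sm}$, your forwarded pair $(1\|Pk,\sigma)$ can never equal the $GS$ challenge pair $(0\|m,\sigma_b)$, and the reduction needs no exclusion.

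There is a second mismatch. Your simulation calls $SignHU$ of $GS$, but $Exp^{Anon-b}_{GS}$ in Definition \ref{GSDefinition} offers only $AddHU,CorrU,AddCU,Open,AChal_b$. You do need sponsor signatures, because in $Exp^{SpAnon-b}$ the join oracles act as a signing oracle for honest sponsors. $AddHU(Sm,\cdot)$ followed by $CorrU$ reveals $\sigma$ on a random commitment. $AddCU(Sm,\cdot)$ lets the adversary send an arbitrary $Pk_{Sm}$ to an honest, possibly later challenged, sponsor and receive its $GS$ signature on it.

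Your claim that $AddCU(Sm,\cdot)$ is answered ``with the keys $\mathcal{A}$ generated itself'' is therefore wrong, since $\mathcal{A}$ lacks that sponsor's key. It cannot obtain the key via $CorrU$ without removing the sponsor from $\mathcal{H}$, which makes it unchallengeable. You must explicitly assume $GS$ anonymity with a signing oracle for honest users, as in standard definitions. Your table-based answering of opens also uses that $GS.Open$ returns the signer of honestly generated signatures, which the paper's $GS$ correctness game (verification only) does not state. Finally, $Exp^{SpAnon-b}$ has no $SignHU$ oracle, so simulating signing queries is unnecessary.
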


We give the hybrid game as below:

\begin{description}
    \item[$Hyb_0$.] This game corresponds to the sponsor anonymity game of $SPGS$. 
    \item[$Hyb_1$.] This game is similar to the previous one except that for any choice of $b \in \{0,1\}$, $\sigma'_b$ is generated by $GS$ using private parameters of a random entity $id'_{Sp}$. This is indistinguishable from the previous game since $GS$ satisfies anonymity.
\end{description}

We can see that
the last game does not have anything related to the identity of the actual sponsor $id_{Sp}$ and hence it is indistinguishable from the random except with a negligible probability.  This concludes the proof.

\begin{lemma}
    \textbf{Sponsor traceability.} The construction of $SPGS$ given in Section \ref{SPGSconstruction} achieves sponsor traceability,  if the group signature $GS$ ensures traceability.
\end{lemma}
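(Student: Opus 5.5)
The plan is a direct black-box reduction from the traceability of $GS$ (game $Exp^{Trace}_{GS,\mathcal{A}}$ in Appendix \ref{GSModel}). Suppose some PPT adversary $\mathcal{B}$ wins $Exp^{SpTrace}_{SPGS,\mathcal{B}}$ with non-negligible probability. We build $\mathcal{A}$ against $Exp^{Trace}_{GS,\mathcal{A}}$ that runs $\mathcal{B}$ internally.

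\textbf{Simulation of setup.} $\mathcal{A}$ receives $pp'$ from its challenger. It runs $crs \leftarrow NIZK.Setup(R)$ and $cpar \leftarrow C.Setup(1^\lambda)$ itself and hands $pp=(pp',cpar,crs)$ to $\mathcal{B}$. This is distributed exactly as in the real $Setup$.

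\textbf{Simulation of oracles.}
\begin{itemize}
\item $AddHU(Sp,\cdot)$, $CorrU$ on sponsors, $AddCU(Sp,\cdot)$ and $CorrM$ are forwarded to the corresponding $GS$ oracles. This works because $Join^{Sp}$ is just $GS.Join$ and the manager's state is $msk$ of $GS$.
\item $AddHU(Sm,\cdot)$ requires a sponsor $GS$-signature on a fresh commitment $Pk_{Sm}$. $\mathcal{A}$ produces it itself by first corrupting that honest sponsor through $CorrU$ of $GS$, or by using the sponsor's parameters if the sponsor is already corrupted. Corrupting users is allowed in the traceability game, so nothing is lost.
\item $AddCU(Sm,\cdot)$ and $CorrU$ on sponsored members are handled the same way. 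Since $\mathcal{A}$ also holds $msk$ via $CorrM$ if needed, all of these are simulated perfectly.
\item $Open(Sp,m,\sigma,pp)$ is forwarded to $GS$'s $Open$.
\item $Open(Sm,m,(\sigma,Pk_{Sm},\pi),pp)$ is answered as follows. $\mathcal{A}$ checks $NIZK.Verify$ itself, which it can do because it generated $crs$. It then queries $GS$'s $Open$ on $(Pk_{Sm},\sigma)$. This reproduces $Open^{Sm}$ exactly, since $Verify^{Sm}$ is the conjunction of the $GS$ check on $Pk_{Sm}$ and the NIZK check.
\end{itemize}

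\textbf{Output translation.} When $\mathcal{B}$ outputs $(attr,m,\sigma_{attr})$, $\mathcal{A}$ proceeds by case.
\begin{itemize}
\item If $attr=Sp$, it outputs $(m,\sigma_{attr})$.
\item If $attr=Sm$, it parses $\sigma_{attr}=(\sigma,Pk_{Sm},\pi)$ and outputs $(Pk_{Sm},\sigma)$.
\end{itemize}
In either case a valid $\sigma_{attr}$ implies $GS.Verify$ accepts the output pair. Also $Open^{attr}(msk,\sigma_{attr})=GS.Open(msk,\sigma)$ by definition of $Open^{Sp}$ and $Open^{Sm}$. Hence $\mathcal{B}$'s winning condition $Open^{attr}=\perp$ is precisely $\mathcal{A}$'s winning condition, and $Adv^{SpTrace}_{SPGS,\mathcal{B}} \le Adv^{Trace}_{GS,\mathcal{A}}$, which is negligible.

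\textbf{Main obstacle.} The delicate point is perfect simulation of the sponsored-member join, which needs honest sponsors to $GS$-sign commitments. $\mathcal{A}$ has no signing oracle in $Exp^{Trace}_{GS}$. The plan above handles this by corrupting sponsors via $CorrU$, which the traceability game permits because traceability imposes no honesty restriction on the traced identity.

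A secondary point is the bookkeeping of $\mathcal{H}$ versus $\mathcal{C}$. The $SPGS$ game's sets and the $GS$ game's sets may diverge after these extra corruptions. I expect this to be harmless, since the traceability winning condition does not depend on these sets.
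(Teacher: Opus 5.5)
Your proposal is correct and takes the same route as the paper: a direct black-box reduction to $GS$ traceability that forwards the sponsor-side oracles, forwards $CorrM$, and maps $\mathcal{B}$'s output to its $GS$ component. It is more complete than the paper's argument in two places: the paper never explains how the sponsored-member join oracles are simulated, which your device of corrupting sponsors (harmless because traceability ignores $\mathcal{H}$) fills in, and the paper's parsing in the $Sm$ case is imprecise, whereas you correctly output $(Pk_{Sm},\sigma)$ rather than a pair involving the SPGS message $m$.
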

We show that if $GS$ ensures traceability, then our construction of $SPGS$ ensures traceability. Note that $Open$ algorithm only uses the sponsor signature (computed through $GS$) to identify the sponsor identity. Let's assume that $SPGS$ does not ensure sponsor traceability and there exists an adversary  $\mathcal{B}$ that can output $(SP, m, \sigma)$ such that $SPGS.Open^{Sp}(msk,\sigma)=\perp$, or outputs $(Sm, m \sigma)$ such that $SPGS.Open^{Sm}(msk,\sigma)=\perp$ with non-negligible probability. Then we can construct the adversary $\mathcal{A}$ who can use $\mathcal{B}$ as a subroutine to break the traceability of $GS$. $\mathcal{A}$ will respond to the $AddHU(Sp, \cdot)$,$CorrU(Sp,\cdot)$, $AddCU(Sp,\cdot)$, $Open(Sp,\cdot)$ and $CorrM$ queries of $\mathcal{B}$ in $Exp^{SpTrace}_{SPGS,\mathcal{B}}$ using the $AddHU$, $CorrU$, $AddCU$, $Open$ and $CorrM$ oracles of $GS$ (see $Exp^{Trace}_{GS,\mathcal{A}}$ in  Appendix \ref{GSModel}) respectively. When $\mathcal{B}$ outputs $(Sp, m, \sigma)$, $\mathcal{A}$ outputs $(m ,\sigma)$. If $\mathcal{B}$ outputs $(Sm, m, \sigma)$, $\mathcal{A}$ parses $\sigma=(\sigma', m, m', \pi)$, and outputs $(m ,\sigma')$. The success probability of $\mathcal{A}$ in this game is the same as the success probability of $\mathcal{B}$. This contradicts the assumption that $GS$ ensures traceability, and hence we conclude that SPGS ensures sponsor traceability.\\

\begin{lemma}
    \textbf{Sponsored-member privacy.} The construction of $SPGS$ given in Section \ref{SPGSconstruction} achieves sponsored-member privacy,  if $NIZK$ is a zero knowledge argument and the commitment scheme $C$ is hiding. 
\end{lemma}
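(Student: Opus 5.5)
We follow the same hybrid-game style as the preceding lemmas. We start from the game $Exp^{SmPriv-b}_{SPGS,\mathcal{A}}$ and move to a game in which the challenge signature returned by $PChal_b$ is statistically independent of $b$.

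First, observe which parts of the challenge $\sigma'_b=(\sigma,Pk_{Sm_b},\pi)$ could depend on $b$. The oracle $PChal_b$ runs $Join^{Sm}$ for both $id_{Sm_0}$ and $id_{Sm_1}$ with the same honest sponsor $id_{Sp}$. Neither identity may have been queried before (the $Q_{Join}$ check), and $SignHU$ may not be queried on them. So the adversary has no prior information tying either identity to a key. The only $b$-dependent values are the fresh key $sk_{Sm_b}$, its commitment $Pk_{Sm_b}$, the proof $\pi$, and the sponsor signature $\sigma$ on $Pk_{Sm_b}$. The signature $\sigma$ is produced by $GS.Sign$ under the same sponsor for either $b$, and its input is just $Pk_{Sm_b}$, so it adds no information beyond $Pk_{Sm_b}$. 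The adversary also learns nothing further through $Open$, since $Open$ refuses queries on $Q_{PChal}$ and otherwise returns only $id_{Sp}$, which is identical for both branches.

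The hybrids are as follows.
\begin{description}
\item[$Hyb_0$.] The real game $Exp^{SmPriv-b}_{SPGS,\mathcal{A}}$.
\item[$Hyb_1$.] Replace $crs$ by a simulated $(crs,td)\leftarrow Sim(R)$, and produce $\pi$ in $PChal_b$ with the simulator on statement $Pk_{Sm_b}$ and message $m$, without using the witness. Indistinguishability follows from computational zero knowledge. A reduction to zero knowledge runs the full SPGS game itself: it generates the $GS$ and commitment parameters and knows $msk$, so it can answer $Open$. It forwards the single proof query to its prove/simulate oracle.
\item[$Hyb_2$.] Replace $Pk_{Sm_b}$ by a commitment to an independent random value (equivalently, a commitment to $0$) instead of $sk_{Sm_b}$. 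Indistinguishability follows from the hiding property of $C$. The reduction works because $\pi$ is simulated in $Hyb_1$ and $\sigma$ is computed by the reduction itself with the sponsor key it generated.
\end{description}
In $Hyb_2$ the challenge is identically distributed for $b=0$ and $b=1$, so the adversary's advantage there is exactly $0$. Summing the losses gives $Adv^{SmPriv}\le \frac12+\nu_{ZK}+\nu_{hide}$.

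The main obstacle I expect is the zero-knowledge hybrid. The NIZK definition in Appendix \ref{NIZKModel} is phrased with a simulator oracle, and the proof in the challenge is bound to a message and to a commitment that is later switched. The order of hybrids therefore matters. Simulation must come before switching the commitment, because after the switch the real prover no longer has a valid witness. Switching also requires that the simulator be able to produce accepting proofs for commitments that are no longer tied to a known secret. Hiding commitments such as Pedersen commit to any value, so every such statement remains true and the simulator's output is well defined.

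A secondary subtlety is that the non-challenged signature, that of $Sm_{1-b}$, is never output. Its existence therefore causes no leakage, and the hybrid need not treat it.
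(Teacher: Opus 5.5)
Your proposal is correct and follows the paper's proof: the same two hybrids in the same order (simulate $\pi$ using zero knowledge, then replace $Pk_{Sm_b}$ by a commitment to an independent value using hiding). You spell out the reductions and the handling of the $Open$ oracle more explicitly than the paper does. As a side remark, your preliminary observations already suffice without the hybrids: $Join^{Sm}$ and $Sign^{Sm}$ never use the identity labels, the two parameter sets generated inside $PChal_b$ are independent and identically distributed, and no oracle exposes either one beyond the single challenge signature, so the challenge is identically distributed for $b=0$ and $b=1$ already in $Hyb_0$, and the hybrids, while valid, are not strictly needed.
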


We give the hybrid game as below:

\begin{description}
    \item[$Hyb_0$.] This game corresponds to the sponsored-member privacy game of $SPGS$.
    \item[$Hyb_1$.] This game is similar to the previous one except that for any query to $SignHU$ oracle and for signature $\sigma_b=(\sigma',Pk_{Sm},\pi)$ sent to the adversary, the proof $\pi$ is generated through the $NIZK$ simulator. This game is indistinguishable from the previous one because of $NIZK$ is zero-knowledge. 
    \item[$Hyb_2$.] This game is similar to the previous one except that for any query to $SignHU$ oracle and for signature $\sigma_b=(\sigma',Pk_{Sm},\pi)$ sent to the adversary, $Pk_{Sm}$ is chosen randomly, and the proof $\pi$ is generated through the $NIZK$ simulator. This game is indistinguishable from the previous one because the commitment $C$ that is used to generate $Pk_{Sm}$ is hiding. 
\end{description}

We can see that assuming the $NIZK$ is zero-knowledge, and commitment $C$ is hiding, we can reduce the sponsored-member privacy of the $SPGS$ to the last hybrid game. The last game does not have reveal any information about the identity of the sponsored-member  and hence it is indistinguishable from the random except with a negligible probability.  This concludes the proof.

\begin{lemma}
    \textbf{Sponsored-member linkability.} The construction of $SPGS$ given in Section \ref{SPGSconstruction} achieves sponsored-member linkability,  if 
    $GS$ ensures non-frameability.
\end{lemma}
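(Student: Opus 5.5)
We prove the lemma by exhibiting an explicit linking algorithm and then bounding its two error probabilities. The linker $F$ receives $(m_0,m_1,\sigma'_0,\sigma'_1)$ with $\sigma'_i=(\sigma_i,Pk_i,\pi_i)$. It outputs $1$ exactly when $Pk_0=Pk_1$; optionally it also requires that both signatures verify under $Verify^{Sm}$. The adversary $\mathcal{A}$ in $Exp^{SmLink-b}_{SPGS,\mathcal{A}}$ queries $LChal_b$ once on two honest sponsored members $id_{Sm_0},id_{Sm_1}$, which it adds through $AddHU(Sm,\cdot)$. It runs $F$ on the returned pair and outputs $b'=0$ if $F$ returns $1$, and $b'=1$ otherwise. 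By Lemma \ref{LemmaLinkiability}, it suffices to establish conditions (i) and (ii) of the Liu et al. notion for $F$; the statement about the advantage of $\mathcal{A}$ then follows.

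\textbf{Condition (i): same signer.} If $b=0$, both signatures come from $Sign^{Sm}$ with the same $param_{Sm_0}=(pp,\sigma,Pk_{Sm})$. In this construction the key $Pk_{Sm}$ and the sponsor signature $\sigma$ are copied verbatim into every signature. Hence $Pk_0=Pk_1$ always holds, and $F$ outputs $1$ with probability $1$. In the re-randomized concrete variant, the signature still carries the original $Pk_{Sm}$, so we link on that component.

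\textbf{Condition (ii): distinct signers.} If $b=1$, the two keys are $Pk_{Sm_0}=C.Com(sk_0)$ and $Pk_{Sm_1}=C.Com(sk_1)$, computed in two independent honest joins. If $sk_0\ne sk_1$, the keys cannot collide except with negligible probability, by binding of $C$. The event $sk_0=sk_1$ has probability at most $2^{-\lambda}$ because the private keys are sampled uniformly. So $F$ errs with negligible probability. Together with condition (i), this gives $\Pr[b'=b]\ge 1-\nu(\lambda)$.

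\textbf{Role of non-frameability of $GS$.} This is where the assumption stated in the lemma enters. Consider an adversary who \emph{produces} signatures, and thus could try to break linkability by presenting a valid Sm-signature under a fresh key $Pk^*\neq Pk_{Sm}$ on behalf of an honest member. Such a signature must contain a $GS$ signature on $Pk^*$ that opens to the honest sponsor. A reduction in the style of the Sp-non-frameability case (ii) of the non-frameability proof then converts it into a $GS$ forgery on the new message $Pk^*$. The reduction answers the sponsor's oracles through the $GS$ oracles and outputs $(Pk^*,\sigma)$.

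\textbf{Main obstacle.} The hardest step is fitting this forgery argument into the game as actually defined. $Exp^{SmLink-b}$ only returns honestly generated signatures and gives no forgery output, so within that game the claim rests on binding and key freshness alone. I would therefore state explicitly that the non-frameability reduction covers the stronger, adversarially generated version of linkability. I would also check that the extra assumption that the commitment is binding, used in condition (ii), is acknowledged, since the lemma statement lists only non-frameability of $GS$.
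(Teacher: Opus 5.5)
Your proof is correct for $Exp^{SmLink-b}$ as it is defined, but it takes a genuinely different route from the paper's.

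\textbf{The paper's route.} The paper links on the $GS$ component. Its adversary declares ``same signer'' iff the two sponsor signatures inside the challenge pair coincide. A single hybrid then aborts whenever a differing sponsor signature was never output by $SignHU$, and charges that event to non-frameability of $GS$.

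\textbf{Your route.} You link on $Pk_{Sm}$ instead. Condition (i) holds with probability $1$, because $(\sigma,Pk_{Sm})$ is copied verbatim from $param_{Sm}$. Condition (ii) follows from binding of $C$ together with the entropy of $sk_{Sm}$. The binding reduction is sound, since it runs both honest joins itself and holds both openings.

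\textbf{What your route buys.} It gives a complete argument for the game actually written. There, both challenge signatures are produced honestly by $LChal_b$, and the adversary has no $SignHU$ oracle, so the forgery event the paper bounds cannot occur. The point that does need proof is that two distinct honest members, possibly enrolled by the same sponsor, differ in the compared component. The paper's hybrid does not address this.

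\textbf{The hypothesis.} Your concern is justified. Take a constant (perfectly hiding, non-binding) commitment and a non-frameable $GS$ with deterministic signing. Two members enrolled by the same sponsor then hold identical $(\sigma,Pk_{Sm})$. They differ only in a zero-knowledge $\pi$ for a trivially true statement. So non-frameability of $GS$ alone cannot give linkability. What you prove is the lemma under binding of $C$, which the main theorem assumes anyway.

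\textbf{Your final paragraph.} The paper's appeal to non-frameability really addresses the stronger notion you isolate there: a member tries to evade linking by presenting a fresh key under a forged sponsor signature. A $GS$-forgery reduction is the right tool for that notion. It would, however, need its own experiment to be made precise, and it is not needed for $Exp^{SmLink-b}$.
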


We give the hybrid game as below:

\begin{description}
    \item[$Hyb_0$.] This game corresponds to the sponsored-member linkability game of $SPGS$. 
   \item[$Hyb_1$.] This game is similar to the previous one except that when the adversary receives two signatures, i.e., $\sigma_b=(\sigma'_b,Pk^b_{Sm},\pi_b)$ for $b\in \{0,1\}$, it compares the two signatures $\sigma'_b$ and outputs 1 if $\sigma'_0 = \sigma'_1$ and outputs 0 if $\sigma'_0 \neq \sigma'_1$. 
   If  $\sigma'_0 \neq \sigma'_1$ and one (or both) $\sigma'_b$ have not been queried to $SignHU$ oracle before, meaning that $(\sigma'_b,Pk_{b,Sm},\cdot) \notin Q_{Sign}$ the adversary aborts. This game is indistinguishable from the previous one since if $\sigma'_b$  has not been queried before, then it should have been forged which happens only with negligible probability since $GS$ is non-frameable.
\end{description}

We can see that
the last hybrid game shows  the adversary can output the bit $b$ correctly with high probability. Otherwise, one can construct an adversary $\mathcal{B}$ which can output a valid forgery for the group signature $\sigma'$ and break its non-frameability. This concludes the proof. 

\section{Security definition of AGAT}
\label{AGATSec}

 We define $k$-$AGAT$ with the following algorithms:

\begin{itemize}
    \item  $AGAT.Setup(\lambda)$ is run by $TA$ which takes the security parameter $\lambda$ and outputs the system public parameters $pp$ and the private parameters $Priv_{TA}$ for the $TA$. $pp$ includes the registered verifiers' public parameters, the threshold value $k$, the show function $f()$.
\item $AGAT.Enroll(pp,\;H,\;Priv_{TA})$ is an interactive algorithm run between the $TA$ and the host $H$. It takes $pp$,  the identity of the host $H$, and the private parameters of $TA$, $Priv_{TA}$, as input, and outputs the host's private parameters  $Priv_h$ that includes a private key 
$sk_h$ 
to the  host $H$, and success $\top$ or failure $\perp$ to $TA$. 
\item $AGAT.Issue(pp,\; G,\; m,,\; Pub_V,\;Priv_h)$\footnote{This algorithm can be defined as a non-interactive algorithm as well. } 
is an interactive algorithm run by the host $H$ and guest $G$, and takes $pp$,  the identity of the guest $G$, the message $m$, the verifier's public parameters $Pub_V$, and the host's private parameters $Priv_h$, 
and outputs a  token $\mathcal{T}$ on $m$ together with private parameters $Priv_g$ to the guest, and success $\top$ or failure $\perp$ to $H$. 
\item $AGAT.Show(pp,\;\mathcal{T},\; Priv_g)$ is run by the guest $G$ and takes $pp$, the token $\mathcal{T}$, and the private parameters of the guest $Priv_g$,
and outputs a token $\mathcal{T}'$  which will be presented to $V$. $\mathcal{T'}=f(\mathcal{T})$, and the function $f()$ transforms the token according to the system specifications, that is, $f()$ is defined by the system and included in $pp$.
\item  $AGAT.Verify(pp,\; \mathcal{T},\;Priv_V)$ is run by $V$, and outputs 1 if $\mathcal{T}$ is a valid token and 0 otherwise. Note that $AGAT.Verify$  internally
checks the policy $NToken_h \leq k$ (rate limit). 
\item $AGAT.Open(pp,\;\mathcal{T},\;Priv_{TA})$ is run by $TA$ and takes $pp$, the token $\mathcal{T}$, and the private parameters of the token authority $Priv_{TA}$, and outputs the actual identity $H$ of the issuer. 
\end{itemize}

\textbf{Security requirements}
 A $k$-$AGAT$ scheme has the following properties: correctness, unforgeability, anonymity, and traceability.

\begin{itemize}
    \item \textbf{Correctness} ensures that AGAT token generation is correct, if  for an honest host and guest, $AGAT.Issue$ and \\ $AGAT.Show$ create a token $\mathcal{T}$  that $AGAT.Verify$ accepts, 
    even if the adversary can schedule joining of the hosts and guest and choose their identities and messages.
        \item \textbf{Unforgeability} ensures that if a guest token $\mathcal{T}$ is generated, the token generation has actually occurred. Additionally, it captures the fact that the adversary cannot generate more than $k$ valid tokens per verifier $V$ (the rate limit is done as part of $AGAT.Verify$ algorithm).
   More formally, we require that for all PPT adversaries $\mathcal{A}$ there exists a negligible function $\nu(\lambda)$ such that the adversary who can add corrupted members or corrupt the honest ones after enrollment and is given access to $AGAT.Issue$ 
    oracle should have only a negligible probability of outputting a forged token. 
    \item \textbf{Anonymity} ensures that the host, guest, and host-guest relation  cannot be learned by the verifier. 
    We define this property as the unlinkability of the host and guests identities, that is,
    given two guest identities $G_0$ and $G_1$ and two host identities $H_0$ and $H_1$, 
    the probability of linking the guests to hosts is negligible after seeing tokens issued by $AGAT.Issue$, $AGAT.Show$. We allow the adversary to make oracle calls to $AGAT.Issue$ and $AGAT.Show$ for different guests and hosts excluding the challenged ones.  

Note that this definition is strong and implies both the guest anonymity and host anonymity since an adversary who can distinguish either $G_0$ from $G_1$, or $H_0$ from $H_1$ by seeing the issued and presented tokens, can also distinguish the identity of the host and guest from the challenged token, and find the the guest-host relation.  
We note that our definition does not capture host and guest unlinkability, since we do not allow the adversary to see different tokens issued  or presented by the challenged entities other than the ones returned by $AnonChal_b$. Excluding these queries on challenged identities prevent trivial attacks where the adversary adds a host or a guest and sees their tokens by querying $AGAT.Issue$ and $AGAT.Show$, and later links these tokens to the tokens received from $AnonChal_b$. 
  \item \textbf{Traceability} protects the system by allowing 
  the token authority $TA$ to reveal the actual identity of the host $H$ who issued the token.
More formally, the probability that a PPT adversary $\mathcal{A}$
outputs a token $\mathcal{T}$ such that the $Open$ algorithm can not identify the host issuer is negligible. 
\end{itemize}

\textbf{Note.} The anonymity definition does not imply host and guest unlinkability with respect the verifier. This is intentional since (i) the tokens issued by the same host should be linked together in order to apply the rate limit policy on the number of tokens originated from the same host, and (ii) the guest is not trusted and we want to allow the service provider (verifier) to trace and analyze its access patterns. The anonymity property just ensures that the identity of the honest entities cannot be learned from their tokens and hence the relation between the actual host and guest remains anonymous to the verifier.

\textbf{Oracles.} To formally define the security requirements, we use the following oracles:

\begin{itemize}
    \item $\mathbf{AddH(H)}$: This oracle allows the adversary to add honest hosts through honest execution of $AGAT.Enroll$ without learning their private parameters.
    \item $\mathbf{AddC(H)}$: This oracle allows the adversary to add corrupted hosts. The adversary can deviate from the $AGAT.Enroll$ protocol and send arbitrary messages to $TA$ and see its output $out$. 
    \item $\mathbf{CorrU(id)}$ allows the adversary to corrupt the host or guest and learn  their communication transcript and their private parameters (including private keys). 
    \item $\mathbf{Issue(m, G, V, H)}$ This oracle is used by the adversary to obtain a guest token for guest $G$ which can be verified by $V$, from an honest host $H$ whose private parameters $Priv_h$ are not known by the adversary. It returns a token $\mathcal{T}$ on the message $m$ using the $AGAT.Issue(pp, G, m, Pub_V,Priv_h)$ algorithm.
    \item $\mathbf{Show(\mathcal{T}, G)}$ This oracle is used by the adversary to obtain a token that can be presented by $G$. It returns a token $\mathcal{T}'=f(\mathcal{T})$ on the message $m$ using the $AGAT.Show(pp, \mathcal{T}, Priv_g)$ algorithm.
    \item $\mathbf{Open(\mathcal{T})}$ returns the identity of the host issuer of $\mathcal{T}$.  This oracle cannot be called on a token obtained from the $AnonChal_b$ oracle.
    \item $\mathbf{AnonChal_b(pp, m, H_0,H_1, G_0, G_1)}$: This is a left-right oracle for defining anonymity. It takes as input the group public parameters $pp$, a message $m$, two host identities, $H_0$, $H_1$, and two guest identities, $G_0$ and $G_1$, and returns a token $\mathcal{T}_b$ issued by $H_b$ and presented by $G_b$ for a randomly chosen $b \in \{0,\;1\}$ 
    on the message $m$. The adversary can call this oracle once.
\end{itemize}

\begin{definition}
    For any security parameter $\lambda \in \mathrm{N}$ and for any PPT adversary $\mathcal{A}$, we say that $k$-$AGAT$ provides:
    \begin{enumerate}
        \item \textbf{Correctness} if there exists a negligible function $\nu_1$ such that 
        $Adv^{Corr}_{k\text{-}AGAT,\mathcal{A}} (\lambda)=Pr[Exp^{Corr}_{k\text{-}AGAT, \mathcal{A}}(\lambda)=1] \geq 1-\nu_1(\lambda)$\\

        \item \textbf{Unforgeability} if there exists a negligible function $\nu_2$ such that 
        $Adv^{Unforge}_{k\text{-}AGAT,\mathcal{A}} (\lambda)=Pr[Exp^{Unforge}_{k\text{-}AGAT, \mathcal{A}}(\lambda)=1] \leq \nu_2(\lambda)$\\

        \item \textbf{Anonymity} if there exists a negligible function $\nu_3$ such that  
        $Adv^{Anon}_{k\text{-}AGAT,\mathcal{A}} (\lambda)=Pr[Exp^{Anon}_{k\text{-}AGAT, \mathcal{A}}(\lambda)=1] \leq \frac{1}{2}+\nu_3(\lambda)$\\

        \item \textbf{Traceability} if there exists a negligible function $\nu_4$ such that 
        $Adv^{Trace}_{k\text{-}AGAT,\mathcal{A}} (\lambda)=Pr[Exp^{Trace}_{k\text{-}AGAT, \mathcal{A}}(\lambda)=1] \leq \nu_4(\lambda)$
    \end{enumerate}

    where $Exp^{Corr}_{k\text{-}AGAT, \mathcal{A}}$, $Exp^{Unforge}_{k\text{-}AGAT, \mathcal{A}}$, $Exp^{Anon}_{k\text{-}AGAT, \mathcal{A}}$, $Exp^{Trace}_{k\text{-}AGAT, \mathcal{A}}$ are defined in Figure \ref{AGATexperiments}.\\
\end{definition}

\begin{figure}[!bht]

\fbox{
    \begin{minipage}{0.95\columnwidth}
    \begin{small}
    $Exp^{Corr}_{k\text{-}AGAT, \mathcal{A}}$
    \hrule 
    $(pp,Priv_{TA}) \leftarrow AGAT.Setup(\lambda)$, $N=0$\\
    $(m, H, G, V) \leftarrow \mathcal{A}^{AddH,CorrU,AddC}(pp)$\\
    If $H \notin \mathcal{H}$  or $G \notin \mathcal{H}$ Return $0$\\
    $\mathcal{T} \leftarrow AGAT.Issue(pp, G, m, Pub_V, Priv_h)$\\
    $\mathcal{T}' \leftarrow AGAT.Show(pp, \mathcal{T},Priv_g) param_{Sm})$\\
    If $ AGAT.Verify(pp, \mathcal{T}', Priv_V)=0$ Return $0$\\
    \textbf{Return} 1\\

    $Exp^{Unforge}_{k\text{-}AGAT, \mathcal{A}}$
    \hrule 
    $(pp,Priv_{TA}) \leftarrow AGAT.Setup(\lambda)$, $N=0$, $Q_{Issue}=\emptyset$\\
    $(m, H, G, \mathcal{T}) \leftarrow \mathcal{A}^{AddH,CorrU,AddC,Issue}(pp)$\\
     If $H \notin \mathcal{H}$  or $G \notin \mathcal{H}$  Return $0$\\
    $\mathcal{T}' \leftarrow AGAT.Show(pp,\mathcal{T},Priv_g)$\\
    If $AGAT.Verify(pp, \mathcal{T}', Priv_V)=0$ Return $0$\\
    If $(m, \mathcal{T}) \notin Q_{Issue}$ Return $1$\\

    $Exp^{Anon-b}_{k\text{-}AGAT, \mathcal{A}}$
    \hrule 
    $(pp,Priv_{TA}) \leftarrow AGAT.Setup(\lambda)$, $N=0$, $Q_{AChal}=\emptyset$\\
     $b' \leftarrow \mathcal{A}^{AddH,CorrU,AddC,Issue,Show,Open,AnonChal_b}(pp)$\\
     If $b' \neq b$ Return $0$\\
    \textbf{Return} 1\\ 

     $Exp^{Trace}_{k\text{-}AGAT, \mathcal{A}}$
    \hrule 
    $(pp,Priv_{TA}) \leftarrow AGAT.Setup(\lambda)$, $N=0$\\
    $(m,\mathcal{T}')  \leftarrow \mathcal{A}^{AddH,CorrU,AddC,Verify, Open}(pp)$\\
     If $AGAT.Verify(pp, \mathcal{T}', Priv_V)=0$ Return $0$\\
    \textbf{Return} $AGAT.Open(pp, \mathcal{T}', Priv_{TA}) \overset{?}= \perp$\\
    
    \end{small}
    \end{minipage}
    }
     \caption{Security games of $k\text{-}AGAT$}
     \label{AGATexperiments}
\end{figure}

\vspace{1em}
\textbf{Our construction} is generic and
 uses a sponsored group signature $SPGS$,
 and a public key encryption scheme $E$ as its building blocks (please see Section \ref{Prelim} and \ref{SPGS} for their algorithms and security properties). 
 
In the nutshell, in our scheme, host act as the sponsor and the guest acts as the sponsored-member of a SPGS scheme. Host uses a sponsor signature to issue a guest token, and the guest uses a sponsored-member signature when it wants to show the token to the verifier. The important point in our scheme is that, 
 SPGS provides complete anonymity that captures unlinkability for sponsors, but the token generation system should not ensure complete anonymity. To enforce the rate limit policy, the token generation should allow host linkability. For this, we use the idea of using pseudonyms together with a group signature to relax their anonymity which is inspired by existing works such as \cite{belenkiy2009randomizable}. In our scheme, pseudonyms are generated by $TA$ and shared with hosts during their enrollment. $TA$ maintains a pseudonym list $L_w$ that consists of dummy pseudonyms initially; when the hosts join, $TA$ adds their pseudonyms to the list in batches and after shuffling them it shares the list with the verifier. Verifier do not see the linkage between pseudonyms and identity of the hosts and only use the list $L_w$ to check the validity of the pseudonyms. In guest token issuing algorithm, the host encrypts their pseudonym $w$ using the public key of the verifier and include it in the guest token together with the SPGS sponsor signature. The verifier decrypts the ciphertext and checks whether $w$ is among the valid pseudonyms $w \in L_w$ and if it is valid it keeps a counter for $w$ to count the number of issued tokens by the same host. Although one can use a different approach and let the hosts commit to their pseudonyms and prove their validity to the verifier through the zero knowledge proofs\footnote{This approach is publicly verifiable and one can use it to relax the assumption about the verifier's trustworthiness.} without sharing the list $L_w$ with the verifier \cite{belenkiy2009randomizable}, we chose to share $L_w$ with the verifier directly for checking the validity of pseudonyms to keep our construction simple and efficient. We show that as long as the verifier is trusted this approach ensures a secure $k$-$AGAT$ construction with the tradeoff that the pseudonyms should be added in batches and this can create a delay between when the host enrolls and when it can issue valid tokens. Additionally, to prevent the guests to steal the pseudonyms, and mix and match of the ciphertext and the the SPGS sponsor signatures in the token, we let the host to choose a fresh random value $r$ and include it both in the ciphertext and the SPGS sponsor signature every time it generates a new guest token.  Please see the details of construction below:

\begin{enumerate}
    \item  \underline{$AGAT.Setup(1^\lambda)$} is run by $TA$ and outputs the system parameters $pp$ (related to  $SPGS$ and $E$). $TA$ sets the set up parameters $setpp$ for the SPGS, and runs $(pp', msk) \leftarrow SPGS.Setup(1^\lambda,setpp)$, and obtains $(pp',msk)$. The $TA$'s private key is set to be the SPGS master private key $Priv_{TA}=msk$. $TA$ also receives the public key $Pub_V$ of the valid verifiers $V$. Also, it initializes the list of pseudonyms $L_w$ with some random dummy values. $TA$ outputs $pp=pp'||Pub_{V}||L_w$. 
\item \underline{$AGAT.Enroll(pp,\;H,\;Priv_{TA})$} is run by $TA$ and the host $H$ when the host joins. It runs $param_{Sp} \leftarrow Join^{Sp}(H, Priv_{TA})$, and chooses a random pseudonym $w_h$ for the host and outputs $Priv_h=param_{Sp}||w_h$. It also adds $w_h$ to $L_w$.
\item \underline{$AGAT.Issue(pp,\;G,\; m,\;Pub_V,\;Priv_h)$} is run between host $H$ and guest $G$. It first runs $param_{Sm} \leftarrow Join^{Sm}(G,Priv_h)$. Then, it chooses a fresh randomness $r$ and runs $\sigma \leftarrow  Sign^{Sp}(m||r, Priv_h)$, and $C= E.Enc(Pub_V,w_h||r)$. It outputs a guest token $\mathcal{T}= (m, r, \sigma, C) $ and $Priv_g=param_{Sm}$ to the guest. 
\item \underline{$AGAT.Show(pp, \; \mathcal{T}, \;Priv_g)$} is run by $G$. It parse the token as $\mathcal{T}= (m, r, \sigma, C) $, runs $\sigma' \leftarrow Sign^{Sm}(m||r, Priv_g)$ and outputs a new token $\mathcal{T'}$ which is $\mathcal{T'}=(\mathcal{T}, \sigma')$.
\item  \underline{$AGAT.Verify(pp,\;\mathcal{T}',\;Priv_V)$} is run by $V$; it parses the guest token as $\mathcal{T'}=(m, r, \sigma, C, \sigma')$. It decrypts $C$, and obtains $(w'||r')=E.Dec(Priv_V,\;C)$ and checks whether $r'=r$. If so, it stores $(w',\;r',\;\mathcal{T}')$. It then outputs 1 if (i) $SPGS.Verify^{Sp}(m||r, \sigma,pp)$ outputs 1,  (ii) $SPGS.Verify^{Sm}(m||r, \sigma',pp)$ outputs 1, (iii) $w'$ is in the list $L_w$,  and $w'$ has not issued more than $k$ tokens. It also increases the counter $NToken[w']$ by one.
Otherwise, outputs 0.
\item \underline{$AGAT.Open(pp,\; \mathcal{T},\; Priv_{TA})$} is run by $TA$ which takes $Priv_{TA}$, parses the token as $\mathcal{T'}=(m, r, \sigma, C, \sigma')$,  and outputs $H'=SPGS.Open^{Sm}(Priv_{TA},\sigma')=\\SPGS.Open^{Sp}(Priv_{TA},\sigma)$.
\end{enumerate}

\textbf{Concrete construction.}
To instantiate our construction, we use the SPGS  concrete construction of section \ref{SPGSInit}. 
For the encryption scheme $E$, we  use the EC-based Elgamal encryption scheme. 

\textbf{Security analysis.}
Below we give a theorem and a proof sketch for our construction of $k$-$AGAT$.
\begin{theorem}
    Our $k$-$AGAT$ construction ensures correctness, unforgeability, $\frac{1}{b_n}$-anonymity, and traceability assuming $SPGS$ satisfies correctness, non-frameability, sponsor anonymity, sponsored-member privacy, sponsor traceability, and $E$ is an IND-CPA secure encryption scheme.
\end{theorem}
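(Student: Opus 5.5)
The plan is to prove the four properties separately, each by reduction to the SPGS properties established in the previous theorem (and its appendix lemmas) or to IND-CPA security of $E$. Each argument either builds a simulator that answers the $k$-$AGAT$ oracles using the SPGS oracles, or is a short sequence of hybrids.

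\emph{Correctness} is direct. An honest $AGAT.Issue$ runs $Join^{Sm}$ and $Sign^{Sp}$ on $m\|r$, and $AGAT.Show$ runs $Sign^{Sm}$ on $m\|r$, so SPGS correctness makes checks (i) and (ii) pass. Correctness of $E$ gives $E.Dec(Priv_V,C)=w_h\|r$, so $r'=r$. Since $w_h$ was added to $L_w$ at enrollment and the host is within the bound $k$, check (iii) passes.

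\emph{Traceability.} $AGAT.Open$ returns $Open^{Sp}(msk,\sigma)$, and a verifying token contains a valid sponsor signature $\sigma$ on $m\|r$. The reduction $\mathcal{B}$ runs the SPGS traceability game, simulates $AddH$, $AddC$, $CorrU$ and $Open$ through $AddHU(Sp,\cdot)$, $AddCU(Sp,\cdot)$, $CorrU$ and $Open$, and generates the verifier keys and $L_w$ itself. On the adversary's output $\mathcal{T}'$ it outputs $(Sp, m\|r, \sigma)$. Whenever $AGAT.Open$ returns $\perp$, SPGS traceability is broken.

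\emph{Unforgeability.} I would split a winning forgery $\mathcal{T}'=(m,r,\sigma,C,\sigma')$ for an honest host $H$ into three cases.
\begin{enumerate}
\item[(a)] $(m\|r,\sigma)$ never appeared in an $Issue$ response for $H$. This gives an SPGS non-frameability adversary with $attr=Sp$, since $\sigma$ opens to $H$ and was not produced by $SignHU$.
\item[(b)] $(m\|r,\sigma)$ was issued but $\sigma'$ is a new sponsored-member signature for an honest guest. This reduces to the $Sm$ case of SPGS non-frameability.
\item[(c)] The signatures are reused but $C$ is changed to carry another pseudonym, or the adversary exceeds $k$. Here the freshness of $r$, which is bound both into $\sigma$ and into the plaintext of $C$, forces $w'$ to be the pseudonym of the same host. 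The verifier's counter $NToken[w']$ then enforces the bound $k$. The only remaining attack is to maul $C$ so that it decrypts to $w''\|r$ with $w''\neq w_h$, $w''\in L_w$.
\end{enumerate}
Case (c) is the step I expect to be the main obstacle. IND-CPA does not rule out malleability, so I would either argue that guessing another valid pseudonym paired with the fresh $r$ is infeasible because pseudonyms are random and unknown to the adversary, or note that the reduction needs a non-malleability-type assumption on $E$.

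\emph{$\tfrac{1}{b_n}$-anonymity.} I would use hybrids starting from $Exp^{Anon-b}$.
\begin{enumerate}
\item[(1)] Replace $\sigma$ in the challenge token by a signature of a fixed host, justified by sponsor anonymity. The reduction answers $Issue$ and $Show$ for non-challenged parties with $SignHU$ and forwards the challenge to $AChal_b$.
\item[(2)] Replace $\sigma'$ by a signature of a fixed guest, justified by sponsored-member privacy via $PChal_b$.
\item[(3)] Replace $C$ by an encryption of a dummy pseudonym concatenated with $r$, justified by IND-CPA. This step is subtle because the verifier is the one holding $Priv_V$. I would restrict it to the adversary's view excluding the decryption key, and account for the verifier's decryption of $w_h$ separately.
\end{enumerate}
The verifier does learn $w_{H_b}$. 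Since pseudonyms are independent of identities and enter $L_w$ shuffled in batches of $b_n$, the verifier can match $w_{H_b}$ to $H_0$ versus $H_1$ with probability at most $\tfrac{1}{b_n}$ beyond guessing. Summing the hybrid gaps then yields advantage at most $\tfrac{1}{b_n}$ plus a negligible term.
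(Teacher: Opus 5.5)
Your decomposition matches the paper's own sketch, which is informal. The step you flag in case (c), however, is a genuine gap, and neither proposed repair closes it. The problem is the premise that the freshness of $r$ ``forces $w'$ to be the pseudonym of the same host''; it is false.

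\textbf{Unforgeability, case (c).} The value $r$ is sent in the clear inside $\mathcal{T}=(m,r,\sigma,C)$. $AGAT.Verify$ only checks that the $r'$ recovered from $C$ equals the token's $r$. It never relates the decrypted pseudonym to whoever produced $\sigma$, and it cannot, since $\sigma$ is anonymous. Consider a corrupted host $H_c$ that has seen one honest token $(m,r,\sigma_h,C_h)$; the $Issue$ oracle hands such tokens to the adversary.
\begin{itemize}
\item $H_c$ computes $\sigma_c\leftarrow Sign^{Sp}(m'\|r,param_{H_c})$, reusing the same $r$.
\item It presents $(m',r,\sigma_c,C_h,\sigma')$, where $\sigma'$ is any guest's $Sm$-signature on $m'\|r$.
\item Every check passes, and the token is charged to $NToken[w_h]$ rather than to $H_c$'s counter.
\end{itemize}
So $H_c$ gets more than $k$ tokens accepted: up to $k$ extra for every honest pseudonym whose ciphertext it has seen.

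This attack reuses $C_h$ verbatim, so non-malleability or IND-CCA security of $E$ does not help. It requires knowing no pseudonym, so secrecy of $L_w$ does not help either. Separately, $AGAT.Setup$ places the initial dummy entries of $L_w$ in $pp$, which already defeats your ``pseudonyms are unknown'' fallback. The $k$-bound therefore does not follow from the listed assumptions. It needs a mechanism binding the encrypted pseudonym to the credential that produced $\sigma$. For example, $TA$ could certify $w_h$ inside the host's signing key, and the sponsor signature could prove that $C$ encrypts the certified value. The paper's sketch rests on the same ``$r$ prevents mix-and-match'' claim, so it does not supply this step.

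\textbf{The formal game.} The formal $Exp^{Unforge}$ is won by outputs your case split does not treat as forgeries:
\begin{itemize}
\item a token signed by a corrupted host that merely names an honest $H$, since nothing checks that the token opens to $H$;
\item an issued token whose ElGamal ciphertext has been re-randomized.
\end{itemize}
Your case (a) therefore already presupposes a repaired winning condition.

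\textbf{Anonymity.} Two steps of the anonymity argument also fail.

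First, hybrid (2) cannot rest on sponsored-member privacy. The challenge $\sigma'=(\sigma_{join},Pk_{G_b},\pi)$ contains a second group signature by $H_b$: $\sigma_{join}$ is $H_b$'s signature on $Pk_{G_b}$ from $Join^{Sm}$. But $PChal_b$ uses a single sponsor for both candidate guests, so it never switches that signature from $H_0$ to $H_1$. Hiding it needs a second sponsor-anonymity switch. $AChal_b$ may be queried only once and $Exp^{SpAnon}$ has no signing oracle, so the reduction cannot produce $H_b$'s other signature itself. You would need a multi-challenge (or signing-oracle) form of sponsor anonymity.

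Second, and more seriously, the $\frac{1}{b_n}$ bound is asserted rather than derived, and it is false for the game as written. Take your reading that the adversary is the verifier, which is the only reading in which such a term arises. The adversary then decrypts $C$ to obtain $w_{H_b}$, schedules $AddH$ and $AddC$ itself, and sees which entries each batch adds to $L_w$. Enrolling $H_0$ and $H_1$ in different batches lets it read $b$ off $w_{H_b}$ with certainty. Even within one batch, it learns and can discard the pseudonyms of corrupted hosts (from $AddC$) and of non-challenged honest hosts (by decrypting $Issue$ outputs), so $b_n$ is not the relevant count. The game must force both challenged hosts into the same batch, where $w_{H_b}$ is symmetric between them, before a counting argument like yours can be made rigorous.
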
 

\textit{Proof sketch.} We omit the proof due to the space and just give the informal arguments. \\
\textit{Correctness} is satisfied due to the  correctness guarantee of the sponsored group signature scheme $SPGS$ and the public key encryption scheme $E$.\\
\textit{Unforgeability} follows from (i) the non-frameability of the sponsored group signature scheme which prevent the adversary to generate a valid sponsor signature $\sigma$ for a guest token  without knowing the private key of the host, (ii) the randomness $r$ used in both signature $\sigma$ and ciphertext $C$ prevents the adversary to mix and match different signatures and ciphertexts from different tokens to form a new token. \\
\textit{Anonymity} follows from (i) anonymity of the sponsored group signature scheme which does not let the verifier learn whether the sponsor signature $\sigma$ in the token has been generated by $H_0$ or $H_1$, (ii) sponsored-member privacy of SPGS which prevents the verifier to learn the identity of the guest from the sponsored-member signature used in the token, (iii) the IND-CPA security  of $E$ which does let the verifier to learn the pseudonym of the host, or even distinguish the encrypted pseudonym from random, (iv) the fact that the  registered pseudonyms are independent of the real identity of the hosts, (v) pseudonyms are added to $L_w$ in batches of size $b_n$ and the verifier who controls the join of hosts and sees the latest list $L_w$ cannot guess the pseudonym of an honest host with probability greater than $\frac{1}{b_n}$. \\
\textit{Issuer traceability} is ensured due to the traceability of the sponsored group signature scheme.

\end{document}